\numberwithin{equation}{section}
\numberwithin{equation}{section}
\DeclareMathOperator*{\argmin}{arg\,min}	
\DeclareMathOperator*{\Mat}{Mat}
\DeclareMathOperator*{\diag}{diag}
\newtheorem{theorem}{Theorem}[]
\newtheorem{lemma}{Lemma}[section]
\newtheorem{assumption}{Assumption}[]
\newtheorem*{assumptionm}{Assumption M}
\newtheorem*{assumptiont}{Assumption T}
\newtheorem*{dassumption}{Assumption DA}
\newtheorem*{kappaassumption}{Assumption K}
\newtheorem{msassumption}{Assumption GMS}[]
\newtheorem*{msassumption6}{Assumption GMS 6}
\newtheorem*{msassumption7}{Assumption GMS 7}
\declaretheoremstyle[
  bodyfont=\normalfont\itshape,
  headformat=\NAME\NUMBER
]{nospacetheorem}
\declaretheorem[style=nospacetheorem,name=Assumption LA]{lassumption}
\declaretheoremstyle[
  bodyfont=\normalfont\itshape,
  headformat=\NAME\NUMBER
]{nospacetheorem}
\newtheorem{definition}{Definition}
\newtheorem{remark}{Remark}[]
\title{\textbf{Inference for Moment Inequalities: A Constrained Moment Selection Procedure}}
\author{Rami V. Tabri\footnote{School of Economics, The University of Sydney, Sydney, New South Wales 2006, Australia, Tel: +61 2 9351 3092, Fax: +61 2 9351 4341, Email: rami.tabri@sydney.edu.au.}\, and Christopher D. Walker\footnote{Corresponding author. Department of Economics, Harvard University, Cambridge, MA 02138, United States of America, Email: cwalker@g.harvard.edu}}
\date{}
\begin{document}
\maketitle
\begin{abstract}
Inference in models where the parameter is defined by moment inequalities is of interest in many areas of economics. This paper develops a new method for improving the performance of generalized moment selection (GMS) testing procedures in finite-samples. The method modifies GMS tests by tilting the empirical distribution in its moment selection step by an amount that maximizes the empirical likelihood subject to the restrictions of the null hypothesis. We characterize sets of population distributions on which a modified GMS test is (i) asymptotically equivalent to its non-modified version to first-order, and (ii) superior to its non-modified version according to local power when the sample size is large enough. An important feature of the proposed modification is that it remains computationally feasible even when the number of moment inequalities is large. We report simulation results that show the modified tests control size well, and have markedly improved local power over their non-modified counterparts.
\end{abstract}
\small
\noindent \textbf{\emph{Keywords}}: empirical likelihood, moment inequality model, statistical information. \\
\noindent \textbf{\emph{JEL Classification}}: C12, C14, C21
\newpage
\section{Introduction}
Statistical inference in models defined by moment inequalities is a frequently encountered topic in econometrics. Examples of applications include games of entry with multiple equilibria (e.g.,~\citealp{ciliberto2009market}), single/multiple agent optimization problems (e.g.,~\citealp{pakes2015moment}), censored and missing data (e.g.,~ \citealp{manski2002inference,imbens2004confidence}), model selection tests (e.g.,~\citealp{Shi-KL}, and~\citealp{Shi-Hsu}), event-study designs (e.g.,~\citealp{rambachanhonest}), stochastic dominance comparisons (e.g. \citealp{whang_2019}) and New-Keynesian DSGE models (e.g., \citealp{moon2009estimation}). This paper considers inference for a finite-dimensional parameter defined by a finite number of unconditional moment inequalities.

\par We suppose that there exists a true value of the parameter $\theta_0\in\Theta\subseteq\mathbb{R}^{d}$ that satisfies the moment inequality restrictions
\begin{align}\label{eq - moment inequality}
E_{F_{0}}\big(g_{j}(W_i,\theta_{0})\big) \geq 0\quad\text{for}\quad j=1,...,J,
\end{align}
where $\{g_{j}(\cdot,\theta): j=1,...,J\}$ are known real-valued functions, $\{W_i:i\leq n\}$ are independent and identically distributed (i.i.d.) with unknown distribution $F_0,$ and $W_i\in\mathbb{R}^{\text{dim}(W_i)}.$  Under these moment conditions,  the set $\Theta_{I}(F_0)\equiv\left\{\theta\in\Theta: E_{F_{0}}\big(g_{j}(W_i,\theta)\big) \geq 0\;\forall j=1,...,J\right\}$
denotes the so-called \emph{identified set} while any $\theta\in\Theta_{I}(F_0)$ is termed an \emph{identifiable parameter}. Thus, the true value of the parameter might not be uniquely identified by $F_0$ and the economic model.

\par We are interested in confidence sets for $\theta_0$ constructed by test inversion. The test is based on a statistic $T_n,$ for testing individual hypotheses for each $\theta$ that have the form
\begin{align}\label{eq - Test Problem}
H_0:\,\theta\in\Theta_{I}(F_0)\quad\text{versus}\quad H_1:\,\theta\notin\Theta_{I}(F_0).
\end{align}
Inference in this model is challenging because the pointwise limiting null distribution of conventional test statistics are discontinuous in the parameter -- the dependence on the parameter is through the index set of moment inequalities~(\ref{eq - moment inequality}) that are binding. In particular, a moment inequality enters the pointwise asymptotic null distribution of the test statistic $T_n$ whenever it holds as an equality. Tests of~(\ref{eq - Test Problem}) that have good properties incorporate information about which moments $E_{F_{0}}\big(g_{j}(W_i,\theta)\big)$ are ``positive'', in order to exclude them from the computation of a critical value. Tests of this sort are known as two-step procedures in the literature, examples of which include~\cite{andrews2010inference},~\cite{canay2010inference},~\cite{andrews2012inference}, and~\cite{romano2014practical}. The first step of those testing procedures use the data to determine whether the moment inequalities~(\ref{eq - moment inequality}) are close to or far from being equalities. The second step uses the outcome of the first step to yield information about which moment inequalities are ``positive'' when constructing tests of~(\ref{eq - Test Problem}).

\par The literature on two-step tests of~(\ref{eq - Test Problem}) is vast, and almost all of these tests use the sample-analogue estimator of the moments $E_{F_{0}}\big(g_{j}(W_i,\theta)\big)$ in the first step to determine the slackness of the moment inequalities. This feature ignores the information present in the restrictions~(\ref{eq - moment inequality}), because the sample-analogue estimator does not exploit the fact that the moments satisfy these restrictions under the null hypothesis in~(\ref{eq - Test Problem}). Thus, we conjecture that implementing this information in such tests can improve their accuracy in finite-samples under the null and alternative hypotheses. This paper provides such a modification for the broad class of \emph{generalized moment selection} (GMS) testing procedures put forward by \cite{andrews2010inference}, and finds that our conjecture is in the right direction.

\par We propose a modification of GMS testing procedures that implements the information present in~(\ref{eq - moment inequality}) using the method of empirical likelihood (\citealp{owen2001empirical}). The modification is to replace the sample-analogue estimator of the moments in the first step of the GMS procedure with its constrained empirical likelihood counterpart, where the constraints are the moment inequalities~(\ref{eq - moment inequality}). We label this modification \emph{constrained moment selection} (CMS). For a given test statistic and moment selection function, the CMS and GMS tests only differ in terms of which moments they select for the computation of the critical value in tests of~(\ref{eq - Test Problem}). The motivation for our proposal is that the detection of the ``positive'' moment inequalities in the first step would be more accurate because we are using additional information that is available to us, which the sample-analogue estimator of the moments ignores. Consequently, the CMS procedure alters the GMS critical value for testing~(\ref{eq - Test Problem}) in a data-dependent way that incorporates the information contained in~(\ref{eq - moment inequality}) through a reduction of the parameter space for $F_0.$ For this reason, we expect CMS tests of~(\ref{eq - Test Problem}) to be more accurate than their GMS counterparts in finite-samples.

\par This paper characterises the parameter space for $(\theta_0,F_0)$ over which the CMS and GMS testing procedures are asymptotically equivalent, to first-order, under the null, local alternatives, and distant alternatives. We focus, though, on the GMS class of testing procedures in which the moment selection function is given by the moment selection $t$-test. This focus is without loss of generality, as the results extend naturally, with appropriate modifications, to the more general setup in~\cite{andrews2010inference} using their assumptions. This means that for a given test statistic, CMS tests inherit all of the asymptotic properties of GMS tests. Specifically, under the null, CMS confidence sets are asymptotically valid with uniformity over the parameter space, not asymptotically conservative, and not asymptotically similar. Furthermore, CMS tests of~(\ref{eq - Test Problem}) have greater asymptotic local power than tests based on subsampling or fixed asymptotic critical values, and are consistent against distant alternatives. The parameter space imposes only three conditions in addition to the conditions that define the parameter space~\cite{andrews2010inference} introduce. These conditions are part of Assumption~GEL in~\cite{andrews2009validity}: (i) a uniform bound on the variances of the moment functions, (ii) a lower bound on the determinant of their correlation matrix, and (iii) a regularity condition on an estimator of the degree of slackness of the moments arising from the dual formulation of the constrained empirical likelihood problem. Collectively, the conditions that define our parameter space enables the use of results from~\cite{andrews2009validity} on constrained empirical likelihood estimation in our proofs of the aforementioned asymptotic results.

\par While GMS and CMS are asymptotically equivalent procedures, we characterise local alternatives under which the power of CMS tests dominate their GMS counterparts for sufficiently large, but finite, samples. These are directions in the alternative that have some non-violated moment inequalities (SNVIs) and a non-negative correlational structure. That is, configurations where some of the moments $E_{F_{0}}\big(g_{j}(W_i,\theta_{0})\big)$ under the alternative hypothesis are ``positive", and the covariance matrix of $\{g_{j}(W_i,\theta): j=1,...,J\}$ has non-negative entries only. The non-negative correlational structure arises in empirical applications; see, for example,~\cite{Lok-Tabri-inpress} who point to that structure for moment inequalities characterising stochastic dominance comparisons. It is quite difficult to determine the extent of this difference in local powers analytically. However, using a Monte Carlo simulation experimental design based on~\cite{andrews2012inference}, who focus on finite-sample comparisons of the maximum null rejection probability (MNRP), we show using the modified method of moments (MMM) statistic that along such local alternatives the differences in MNRP-corrected powers of CMS and GMS tests can be approximately 36 percentage points when $J=4$ and $n=250,$ which is strikingly large. See Section~\ref{Setion - MC Sims} for more details.

\par The two-step tests in this literature that exploit the information~(\ref{eq - moment inequality}) are the procedures put forward by~\cite{andrews2009validity} and~\cite{canay2010inference}. They implement this information using (generalised) empirical likelihood.~\cite{andrews2009validity} and~\cite{canay2010inference} develop subsampling and bootstrap tests of~(\ref{eq - Test Problem}), respectively, using empirical-likelihood-type test statistics. Both tests have correct asymptotic size in a uniform sense and are shown not to be asymptotically conservative. However,~\cite{canay2010inference}'s test has higher asymptotic power because it is a GMS procedure. More generally,~\cite{andrews2010inference} show the asymptotic power of GMS tests dominate that of subsampling and plug-in asymptotic tests. A disadvantage of~\cite{canay2010inference}'s procedure is that it may be more computationally burdensome than other GMS tests. Thus, our modification of GMS tests can improve finite-sample performance without incurring a high computational cost.

\par \cite{andrews2012inference} proposed a refinement of GMS termed \emph{refined moment selection} (RMS) and discussed the reasons why such an approach is preferable. However, the RMS procedure is quite computationally expensive when $J>10.$ By contrast, the CMS procedure remains computationally feasible when $J$ is large. The reason is that the constrained empirical likelihood optimization problem it is based upon has a strictly concave objective function, convex feasible set, and the choice variables enter linearly into the constraints. As a consequence, there is a unique global solution to this optimization problem and its implementation involves an of-the-shelf programming routine. More recently,~\cite{romano2014practical} proposed a two-step testing procedure for moment inequalities that is similar in spirit to the RMS procedure and remains computationally feasible when $J$ is large. An important distinction between the CMS testing procedure and these tests is that, like GMS tests, neither of them exploits the information present in the moment inequality constraints~(\ref{eq - moment inequality}), because they employ the sample-analogue estimator of the moments in their first step.

\par We examine the finite-sample performance of CMS tests using the MMM and adjusted quasi-likelihood-ratio (AQLR) test statistics in Monte Carlo simulations based on the experimental design in~\cite{andrews2012inference}. The experiment compares the performance of CMS to its GMS, RMS, RSW counterparts in terms of MNRP and MNRP-corrected local power. The inclusion of the RMS and RSW procedures in the simulation experiment is to benchmark the performance of CMS. Overall, the simulation results showcase the value of implementing the information~(\ref{eq - moment inequality}) in the CMS procedure in terms of finite-sample size and power properties, and corroborate its theoretical superior performance over GMS. The simulation results also show the performance of CMS and RMS tests based on the AQLR statistic are comparable. This finding is encouraging as the RMS test has desirable asymptotic properties but can be computationally expensive when $J$ is large, while the CMS procedure isn't costly to compute at all.

\par The idea of exploiting information on parameters defined by constraints for improving performance in statistical problems, through constrained estimation, is one of the most natural ideas in statistics. The literature on constrained estimation via tilting the empirical distribution overlaps with this paper, where the problem is that the constraints/information are not adequately reflected by the empirical distribution (e.g.,~\citealp{Hall-Presnell}). Tilting the empirical distribution allows one to incorporate information selectively into a statistical procedure without changing the procedure itself.~\cite{Lok-Tabri-inpress}
apply this idea to modifying two-step bootstrap tests for restricted stochastic dominance orderings using empirical likelihood and semi-infinite programming. The parameter of interest in their setup is infinite-dimensional and there is a continuum of moment inequality restrictions, which are defined by moment functions that have a particular form. The form of the moment functions in their setup yields a correlational structure that facilitates the analysis of such moment inequalities. Contrastingly, in the setup of this paper, $J$ is finite and the form of the moment functions $\{g_{j}(\cdot,\theta): j=1,...,J\}$ is arbitrary. The implementation of empirical likelihood in their setup has a data-driven number of inequalities that increases with the sample size, which can be as large as 500 in moderate sample sizes. The ability of empirical likelihood to straightforwardly execute with a large number of moment inequality restrictions transfers to the CMS procedure for models with large $J.$ This computational feasibility of CMS is an important feature of our approach. Similar to~\cite{Lok-Tabri-inpress}, this paper is also part of the econometrics literature on shape restrictions (e.g.,~\citealp{Chetverikov-Santos-Shaikh}, and the references therein), as the inequalities~(\ref{eq - moment inequality}) can be thought of as finite-dimensional analogues of shape restrictions on nonparametric functions.

\par We organize the paper as follows. Section~\ref{Section - Steup and Background} introduces the statistical framework, as well as the GMS and CMS procedures. Section~\ref{Section - CMS} introduces the main results of the paper. Section~\ref{Setion - MC Sims} reports the results of Monte Carlo simulations, and Section~\ref{Section - Conclusion} concludes.

\par For notational simplicity, throughout the paper we write partitioned column vectors as $h=(h_1,h_2).$ rather than $h=(h_{1}',h_{2}')'$. Let $\mathbb{R}_{+}=\{x\in\mathbb{R}:x\geq0\}$, $\mathbb{R}_{+,\infty}=\mathbb{R}_{+} \cup\{+\infty\}$,
$\mathbb{R}_{[+\infty]}=\mathbb{R} \cup\{+\infty\}$,$\mathbb{R}_{[\pm\infty]}=\mathbb{R} \cup\{\pm\infty\}$, ``$:=$" denote the definitional identity, and $\overline{A}$ denote the closure of a set $A$.

\section{Setup}\label{Section - Steup and Background}

\subsection{Moment Inequality Model and Test statistic}
The object of interest is a parameter $\theta_{0} \in \Theta \subseteq \mathbb{R}^{d}$, $d<+\infty$, defined by a finite number of known moment functions $g_{j}: \mathcal{W} \times \Theta \rightarrow \mathbb{R}$ that satisfy the following unconditional moment inequality restrictions:
\begin{align}\label{Moment inequality restrictions S3}
E_{F_{0}}\big(g_{j}(W,\theta_{0})\big) \geq 0 \ \forall \ j\in\mathcal{J},
\end{align}
where $F_{0}$ denotes the true distribution of the observed data $W$ and $\mathcal{J}:=\{1,...,J\}$ with $J<\infty$. In general, the identified set, $\Theta_{I}(F_0)=\{\theta \in \Theta: \ E_{F_0}(g_{j}(W,\theta)) \geq 0 \ \forall \ j \in \mathcal{J}\}$, is not a singleton meaning that the parameter is partially identified.

\par The moment inequality model is given by the following definition.
\begin{definition}\label{def1}[\textbf{Moment Inequality Model}] Let $\mathcal{F}$ be the set of parameters $(\theta, F)$ that satisfy:
\begin{enumerate}
\item $\theta \in \Theta\subseteq \mathbb{R}^{d}$. \label{A1}
\item $\{W_{i}: i \geq 1\}$ are i.i.d. under $F$. \label{A2}
\item $E_{F}\big(g_{j}(W_{i},\theta)\big) \geq 0 \ \text{for} \ j\in\mathcal{J}$. \label{A3}
\item $\sigma^{2}_{F,j}(\theta) := Var_{F}\big(g_{j}(W_{i},\theta)\big) \in [\varepsilon_{*}, M_{*}]$ for some $M_{*}>\varepsilon_{*}>0$. \label{A4}
\item $\Omega(\theta,F) \in \varPsi_{2}$, where $\Omega(\theta,F)$ is the $J\times J$ correlation matrix of $\{g_{j}(W_{i},\theta),j=1,\ldots,J\},$ and $\varPsi_{2}$ is the space of correlation matrices whose determinant is greater than $\varepsilon>0$. \label{A5}
\item $\exists \delta$ and $M >0:$ $E_{F}|g_{j}(W_{i},\theta)/\sigma_{F,j}(\theta)|^{2+\delta} \leq M \ \forall \ j\in\mathcal{J}$. \label{A6}
\end{enumerate}
\end{definition}
\noindent All of the conditions in this definition, except for Conditions~\ref{A4} and~\ref{A5}, are those presented in (2.2) of~\citet{andrews2010inference}. Condition~\ref{A4} is a strengthening of Condition (v) in~\citet{andrews2010inference} so that the variances of the moment functions are uniformly bounded. Condition~\ref{A5} specifies the nonsingularity of the matrix $\Omega(\theta,F).$  These conditions are relatively unrestrictive and are part of Assumption GEL in~\citet{andrews2009validity}. Furthermore, they arise frequently in papers that consider empirical likelihood inference for moment inequalities (e.g.,~\citealp{canay2010inference}, and~\citealp{Lok-Tabri-inpress}).

\par For a given value of the parameter, $\theta=\theta_0,$ we invert tests of the hypothesis $H_0:\,\theta_0\in\Theta_{I}(F_0)$ to construct confidence sets of the form $CS_{n}=\{\theta \in \Theta: T_{n}(\theta) \leq c_{1-\alpha}(\theta)\},$ where $T_{n}(\theta)$ denotes a test statistic and $c_{1-\alpha}(\theta)$ is a critical value for tests with nominal level $\alpha \in (0, 1/2)$. We say $CS_{n}$ is a uniformly valid confidence set for $\theta$ if
\begin{align}\label{Uniform Validity}
\liminf_{n\rightarrow+\infty}\inf_{(\theta,F)\in\mathcal{F}}P_{F}(T_{n}(\theta) \leq c_{1-\alpha}(\theta))\geq 1-\alpha,
\end{align}
where $P_{F}(\cdot)$ is the probability measure induced by repeated sampling from $F$. Uniformity is essential in order for asymptotic size to be a good approximation to the finite-sample size of confidence sets, because the test statistic exhibits a discontinuity in its asymptotic distribution (as a function of the distribution generating the data), but not in its finite-sample distribution. Discontinuities of this type can create asymptotic size problems that are analogous to those that arise with parameters that are near a boundary (e.g.,~\citealp{andrews2009validity}).

\par A test statistic is a function $S: \mathbb{R}_{[+\infty]}^{J}\times \mathcal{V}_{J\times J} \rightarrow \mathbb{R}$ given by $T_{n}(\theta_0):=S\left(n^{\frac{1}{2}}\hat{g}_{n}(\theta_0), \hat{\Sigma}_{n}(\theta_0)\right),$ where $\mathcal{V}_{J\times J}$ is the set of invertible $J\times J$ variance matrices,
\begin{align*}
\hat{g}_{n}(\theta_0) &  = \left[\frac{1}{n}\sum_{i=1}^{n}g_{1}(W_{i},\theta_0),...,\frac{1}{n}\sum_{i=1}^{n}g_{J}(W_{i},\theta_0)\right]^{\top},\;g(W_{i},\theta_0)=\big[g_{1}(W_{i},\theta_0),...,g_{J}(W_{i},\theta_0)\big]^{\top},\\
\text{and} & \quad \hat{\Sigma}_{n}(\theta_0)= n^{-1}\sum_{i=1}^{n}(g(W_{i},\theta_0)-\hat{g}_{n}(\theta))(g(W_{i},\theta_0)-\hat{g}_{n}(\theta_0))^{\top}.
\end{align*}
 Two examples are the modified method of moments (MMM) and adjusted quasi-likelihood-ratio (AQLR) statistics. In the context of the moment inequality model $\mathcal{F}$ given by Definition~\ref{def1}, these test statistics are defined as
\begin{align}
S_{1}\big(n^{\frac{1}{2}}\hat{g}_{n}(\theta_0), \hat{\Sigma}_{n}(\theta_0)\big) &  = n\sum_{j=1}^{J}\Big(\min\big\{0, \hat{g}_{n,j}(\theta_0)/\hat{\sigma}_{n,j}(\theta_0)\big\}\Big)^{2}\;\text{and} \label{MMM} \\
S_{2A}\big(n^{\frac{1}{2}}\hat{g}_{n}(\theta_0), \hat{\Sigma}_{n}(\theta_0)\big) & = n\inf_{t \in \mathbb{R}_{+,\infty}^{J}}(\hat{g}_{n}(\theta_0)-t)^{\intercal}\big(\tilde{\Sigma}_{n}(\theta_0)\big)^{-1}(\hat{g}_{n}(\theta_0)-t),\label{QLR}
\end{align}
respectively, where $\tilde{\Sigma}_{n}(\theta_0) =  \hat{\Sigma}_{n}(\theta_0)+\max\{0, 0.012-|\hat{\Omega}_{n}(\theta_0)|\}\hat{D}_{n}(\theta_0),\; \hat{\Omega}_{n}(\theta_{0})= \hat{D}_{n}^{-\frac{1}{2}}(\theta_{0})\hat{\Sigma}_{n}(\theta_{0})\hat{D}_{n}^{-\frac{1}{2}}(\theta_{0})$ and $\hat{D}_{n}(\theta_{0})=\diag \hat{\Sigma}_{n}(\theta_{0})$, where $\diag \hat{\Sigma}_{n}(\theta_{0})$ is a diagonal matrix with dimensions equal to those of $\hat{\Sigma}_{n}(\theta_{0})$ whose diagonal elements equal those of $\hat{\Sigma}_{n}(\theta_{0})$.

\subsection{GMS and CMS Procedures}
\par The point of departure for establishing that~(\ref{Uniform Validity}) holds for the GMS procedure is to consider the asymptotic distribution of $T_n(\theta_0)$ under a suitable sequence of null distributions. For any sequence $\{F_n: n\geq1\}$ in the model of the null hypothesis, the test statistic satisfies
\begin{align}\label{eq - Asy GMS}
T_n(\theta_0)\stackrel{d}{\longrightarrow}S\left(\Omega_0^{1/2}Z^{*}+h_1,\Omega_0\right)\quad\text{where}\quad Z^{*}\sim N(0_J,I_J),
\end{align}
where $h_1\in\mathbb{R}^{J}_{+,\infty},$ and $\Omega_0$ is a $J\times J$ correlation matrix.\footnote{Specifically, this large-sample result~(\ref{eq - Asy GMS}) follows from the form of the test statistic, the Central Limit Theorem, and the convergence in probability of the sample correlation matrix.} The vector $h_1=(h_{1,1},...,h_{1,J})'$ has elements given by $\lim_{n\rightarrow+\infty}n^{1/2}(E_{F_n}(g_{j}(W_{i},\theta_0))/ \sigma_{F,j}(\theta_0)$ and measures the degree of slackness of the moment inequalities. The crux of this asymptotic construction is that the limiting distribution in~(\ref{eq - Asy GMS}) now depends continuously on the degree of slackness of the moment inequalities via the parameter $h_1,$ which reflects the finite-sample situation.

\par The asymptotic implementation of the GMS critical value is the $1-\alpha$ quantile of a data-dependent version of the asymptotic null distribution in~(\ref{eq - Asy GMS}). It replaces $\Omega_0$ by a consistent estimator and replaces $h_1$  with a function $\varphi : \mathbb{R}^{J} \times \varPsi_{2} \rightarrow \mathbb{R}^{J}_{+,\infty}$, which measures the slackness of moment inequalities through $\hat{\xi}_{n}(\theta_{0})=\kappa_{n}^{-1}n^{\frac{1}{2}}\hat{D}_{n}^{-\frac{1}{2}}(\theta_{0})\hat{g}_{n}(\theta_{0}),$ where $\{\kappa_{n}: n \geq 1\}$ is a divergent sequence of scalars \citep{andrews2010inference}. The GMS critical value, $\hat{c}_{n}(\theta_{0},1-\alpha),$ is the $1-\alpha$ quantile of
\begin{align}\label{eq - L_n GMS}
L_n(\theta_0,Z^{*})=S\left(\hat{\Omega}_{n}^{\frac{1}{2}}(\theta_{0})Z^{*}+\varphi(\hat{\xi}_{n}(\theta_{0}), \hat{\Omega}_{n}(\theta_{0})), \hat{\Omega}_{n}(\theta_{0})\right),
\end{align}
where $Z^{*}\sim N(0_{J}, I_{J})$ and is independent of $\{W_i: i\geq 1\}.$ That is,
\begin{align}\label{GMScv}
\hat{c}_{n}(\theta_{0},1-\alpha) := \inf\bigg\{x \in \mathbb{R}: P\Big(L_n(\theta,Z^{*})\leq x\Big) \geq 1-\alpha \bigg\}.
\end{align}
where $P\Big(L_n(\theta_0,Z^{*})\leq x\Big)$ denotes the conditional CDF at $x$ of $L_n(\theta_0,Z^{*}),$ conditional upon $(\hat{\xi}_{n}(\theta_{0}), \hat{\Omega}_{n}(\theta_{0})).$ In practice,
the calculation of $\hat{c}_{n}(\theta_{0},1-\alpha)$ is by simulating $L_n(\theta_0,Z^{*})$ using $R$ i.i.d. draws from $Z^{*}\sim N(0_{J}, I_{J})$ and computing the $1-\alpha$ quantile of the empirical CDF from $\{L_n(\theta_0,Z_r^{*}): r =1,...,R\}.$

\par Alternatively, one may compute the GMS critical value using the bootstrap. We briefly describe this approach. Let $\{W_{i}^{*}: i \leq n\}$ be a bootstrap sample drawn from the empirical distribution of the data $\{W_{i}: i \leq n\}$, and define $\hat{g}_{n}^{*}(\theta_{0}) = n^{-1}\sum_{i=1}^{n}g(W_{i}^{*},\theta_{0})$, $\hat{\Sigma}_{n}^{*}(\theta_{0})=n^{-1}\sum_{i=1}^{n}(g(W_{i}^{*},\theta_{0})-\hat{g}_{n}^{*}(\theta_{0}))(g(W_{i}^{*},\theta_{0})-\hat{g}_{n}^{*}(\theta_{0}))^{\intercal},$ $\hat{D}_{n}^{*}(\theta_{0}) = \text{diag} \hat{\Sigma}_{n}^{*}(\theta_{0})$, and $\hat{\Omega}_{n}^{*}(\theta_{0}) = (\hat{D}_{n}^{*}(\theta_{0}))^{-\frac{1}{2}}\hat{\Sigma}_{n}^{*}(\theta_{0})(\hat{D}_{n}^{*}(\theta_{0}))^{-\frac{1}{2}}$. The bootstrap implementation of the GMS procedure replaces $L_{n}(\theta_{0},Z^{*})$ in~(\ref{eq - L_n GMS}) with $$L_{n}(\theta_{0},\{W_{i}^{*}: i \leq n\})=S\left(G_{n}^{*}(\theta_{0})+\varphi(\hat{\xi}_{n}(\theta_{0}), \hat{\Omega}_{n}(\theta_{0})), \hat{\Omega}_{n}^{*}(\theta_{0})\right),$$ where $G_{n}^{*}(\theta_{0}) = n^{\frac{1}{2}}(\hat{D}_{n}^{*}(\theta_{0}))^{-\frac{1}{2}}(\hat{g}_{n}^{*}(\theta_{0})-\hat{g}_{n}(\theta_{0})),$ and defines a critical value analogous to (\ref{GMScv}). In practice, this critical value is the empirical $1-\alpha$ quantile of the bootstrap statistics $\{L_{n}(\theta_{0},\{W_{i,r}^{*}: i \leq n\}):r=1,...,R\}$, where $\{\{W_{i,r}^{*}:i\leq n\}:r=1,...,R\}$ are bootstrap samples drawn from the empirical distribution of the data $\{W_{i}: i \leq n\}$. The asymptotic results of this paper hold for the bootstrap provided that $G_{n}^{*}(\theta_{n,h}) \overset{d}{\rightarrow} \Omega_{0}^{\frac{1}{2}}Z^{*}$, where the convergence is conditional on $\{W_i:i\leq n\}$ for almost every sample path, for all sequences $\{(\theta_{n,h},F_{n,h}): n \geq 1\}$ in $\mathcal{F}$.

\par There are numerous choices for $\varphi$ and $\{\kappa_{n}: n \geq 1\}$. \cite{chernozhukov2007estimation} and~\cite{andrews2010inference} recommend using $\kappa_{n} = (\ln n)^{\frac{1}{2}}$. Another option is to set $\kappa_{n} = (2\ln \ln n)^{\frac{1}{2}}$, which is used in~\cite{canay2010inference}.
Our main results set $\varphi=\varphi^{(1)}$, where
\begin{align}\label{t-test MS function}
\varphi^{(1)}_{j}(\xi, \Omega)=
\begin{cases}
0 &\text{if $\xi_{j} \leq 1$} \\
+\infty &\text{if $\xi_{j} > 1$}
\end{cases}
\end{align}
for each $j \in \mathcal{J}$, and is referred to as the `moment selection $t$-test' because it resembles a $t$-test with deterministic critical value $\kappa_{n}$. The decision reflects the recommendations of~\cite{andrews2012inference}, and is essentially without loss of generality because our results extend to any choice of $\varphi$ that satisfies the assumptions of~\cite{andrews2010inference}. Appendix~\ref{OtherGMS} discusses how to generalize our results to other suitable choices of $\varphi$.

\par The advantage of the GMS procedure is that it asymptotically detects the ``positive'' moments  $E_{F_0}(g_{j}(W_{i},\theta_{0}))$ and excludes them from the computation of the critical value, so as to mimic the discontinuity in the asymptotic null distribution of $T_n(\theta).$ This ability of GMS tests to detect such moments is the source of its improvements over the subsampling and plug-in procedures under the null and alternative hypotheses.

\par Although GMS tests are computationally simple and have desirable asymptotic properties, their performance in finite-samples depends crucially on how well they detect the ``positive'' moments, so as to omit them from the computation of the critical value. Their use of the sample-analogue estimator of the moments for detecting the positive moments does not implement the information embedded in~(\ref{Moment inequality restrictions S3}) and implementing this information appropriately can improve the detection accuracy of ``positive" moments in finite-samples.

\par For a given moment selection function $\varphi,$ the CMS procedure implements the information present in~(\ref{Moment inequality restrictions S3}) through a surgical modification of the GMS procedure. The modification is to replace $\hat{g}_{n}(\theta_0)$ with its constrained empirical likelihood counterpart, where the constraints impose the inequality restrictions~(\ref{Moment inequality restrictions S3}). Specifically, CMS replaces $\hat{\xi}_{n}(\theta_{0})$ with $\acute{\xi}_{n}(\theta_{0}) =\kappa_{n}^{-1}n^{\frac{1}{2}}\hat{D}_{n}^{-\frac{1}{2}}(\theta_{0})\acute{g}_{n}(\theta_{0})$, where $\acute{g}_{n}(\theta_{0}) = \sum_{i=1}^{n}\acute{p}_{i}g(W_{i},\theta_{0})$ and the probabilities $\acute{p}_{1},...,\acute{p}_{n}$ solve
\begin{align}\label{eq - EL problem}
\max_{p_{1},...,p_{n}}\Bigg\{\sum_{i=1}^{n}\ln(p_{i}): \sum_{i=1}^{n}p_{i}g_{j}(W_{i},\theta_{0}) \geq 0 \ \forall \ j\in\mathcal{J}, \ \sum_{i=1}^{n}p_{i}=1, \ p_{i} \geq 0 \ \forall \ i\Bigg\},
\end{align}
and then computes a critical value as described in~(\ref{GMScv}), but replaces $\varphi(\hat{\xi}_{n}(\theta_{0}), \hat{\Omega}_{n}(\theta_{0}))$ with $\varphi(\acute{\xi}_{n}(\theta_{0}), \hat{\Omega}_{n}(\theta_{0}))$ in~(\ref{eq - L_n GMS}). The CMS modification of GMS can easily be applied to all choices of $\varphi$ and $\{\kappa_{n}: n \geq 1\}$ presented in \citet{andrews2010inference} because it only replaces $\hat{g}_{n}(\theta_{0})$ with $\acute{g}_{n}(\theta_{0}).$ The estimator $\acute{g}_{n,j}(\theta_{0})$ of $E_{F_{0}}\big(g_{j}(W,\theta_{0})\big)>0$ is more accurate than $\hat{g}_{n,j}(\theta_{0})$ because the optimization problem~(\ref{eq - EL problem}), which gives rise to $\acute{g}_{n}(\theta_{0}),$ imposes a correct constraint $E_{F_{0}}\big(g_{j}(W,\theta_{0})\big)\geq0,$ while $\hat{g}_{n}(\theta_{0})$ ignores such information. Thus, when $E_{F_{0}}\big(g_{j}(W,\theta_{0})\big)>0$ (under the null or alternative), the moment selection function based on $\acute{g}_{n,j}$ detects this configuration more reliably than $\varphi_{j}(\hat{\xi}_{n}, \hat{\Omega}_{n}(\theta_{0})),$ and therefore, takes it into account by delivering a critical value that is suitable for the case where this moment inequality is omitted. This feature of CMS leads to it having better finite-sample properties than GMS under the null and alternative hypotheses.

\par The CMS procedure is not computationally expensive because the empirical likelihood optimization problem~(\ref{eq - EL problem}) has a strictly concave objective function and a convex feasible set that is characterised by affine functions of the choice variables \citep{owen2001empirical}. This means that the optimization problem~(\ref{eq - EL problem}) has a unique global solution, and it can be computed numerically using standard optimization routines in software such as Matlab, R, or GAUSS. This computational simplicity of the optimization problem~(\ref{eq - EL problem}) is an important feature of the CMS procedure.

\begin{remark}
\normalfont
One can `fully constrain' the CMS procedure by using restricted estimators of the correlation matrix. In this case, we evaluate $\varphi(\acute{\xi}_{n}^{FC}(\theta_{0}), \acute{\Omega}_{n}(\theta_{0}))$, where
\begin{align*}
\acute{\xi}_{n}^{FC}(\theta_{0}) & =\kappa_{n}^{-1}n^{\frac{1}{2}}\acute{D}_{n}^{-\frac{1}{2}}(\theta_{0})\acute{g}_{n}(\theta_{0}),\quad\acute{\Omega}_{n}(\theta_{0}) = \acute{D}_{n}^{-\frac{1}{2}}(\theta_{0})\acute{\Sigma}_{n}(\theta_{0})\acute{D}_{n}^{-\frac{1}{2}}(\theta_{0}), \\
\acute{\Sigma}_{n}(\theta_{0}) & = \sum_{i=1}^{n}\acute{p}_{i}(g(W_{i},\theta_{0})-\acute{g}_{n}(\theta_{0}))(g(W_{i},\theta_{0})-\acute{g}_{n}(\theta_{0}))^{\top},\;\text{and}\;
\acute{D}_{n}^{-\frac{1}{2}}(\theta_{0})= \diag \acute{\Sigma}_{n}(\theta_{0}).
\end{align*}
In our simulations not presented in this paper, we find limited practical difference between the $\varphi(\acute{\xi}_{n}^{FC}(\theta_{0}), \acute{\Omega}_{n}(\theta_{0}))$ and $\varphi(\acute{\xi}_{n}(\theta_{0}), \hat{\Omega}_{n}(\theta_{0}))$. Consequently, the rest of the paper focuses on $\varphi(\acute{\xi}_{n}(\theta_{0}), \hat{\Omega}_{n}(\theta_{0}))$ because it is simpler to show that there are power advantages over GMS.
\end{remark}

\section{Main Results}\label{Section - CMS}

\par We start by introducing the assumptions that beget the main results of this paper. They are conditions on the test statistic $S$, the moment selection function $\varphi$, and the parameter space $\mathcal{F}$. The assumptions on $S$ we consider are from~\cite{andrews2010inference}, and are stated as Assumptions 1-7 in Appendix~\ref{Section - Test Stat Assump} for ease of exposition. Recall that we set $\varphi=\varphi^{(1)}$ in~(\ref{t-test MS function}), and the main results we present are based on this choice of moment selection function. It should be noted that this choice of $\varphi$ is without loss of generality as one can employ assumptions identical to those in~\cite{andrews2010inference} on $\varphi$ to deduce the same conclusions, because the CMS procedure does not alter the moment selection function in the GMS procedure. See Appendix~\ref{OtherGMS} for the details on other choices of $\varphi.$

\par The first assumption concerns the sequence $\{\kappa_n: n\geq1\}.$
\begin{kappaassumption}
\begin{enumerate*}
\item $\kappa_{n} \rightarrow +\infty$ as $n\rightarrow +\infty$.\label{Assump - Kappa 1}
\item $\kappa_{n}^{-1}n^{\frac{1}{2}} \rightarrow +\infty$ as $n\rightarrow +\infty$. \label{Assump - Kappa 2}
\end{enumerate*}
\end{kappaassumption}
\noindent The conditions in this assumption are not restrictive -- the aforementioned examples of $\{\kappa_{n}: n \geq 1\}$ satisfy them. The `optimal' choice of $\{\kappa_{n}: n \geq 1\}$ is an important question, but the goal of our paper is more modest: to demonstrate how incorporating statistical information can improve finite-sample inference for moment inequalities in a computationally simple way and, for this purpose, our analysis conditions on an arbitrary choice of $\{\kappa_{n}: n \geq 1\}$. For our Monte Carlo experiment (Section \ref{Setion - MC Sims}), we set $\kappa_{n}=(\ln n)^{\frac{1}{2}}$ which is the recommended choice in \cite{chernozhukov2007estimation} and \cite{andrews2010inference}.

\par The next assumption we present is the first part in Part (d) of Assumption GEL in~\cite{andrews2009validity}. It is helpful in establishing that $\acute{g}_{n}(\theta_0)$ is a uniformly consistent estimator of the moments under the null hypothesis $H_0:\theta_0\in\Theta_{I}(F_0).$ To introduce this assumption, for each $t\in \mathbb{R}^{J},$ define $g_{i}(t,\theta) = g(W_{i},\theta) - t.$ The vector $t$ is a nuisance parameter that captures the slackness of the moment inequalities. Using the dual formulation of the empirical likelihood problem~(\ref{eq - EL problem}), the amount of slackness is captured by $\acute{t}_n=\argmin_{t \in \mathbb{R}_{+}^{J}}\sup_{\lambda \in \acute{\Lambda}_n(t,\theta)}n^{-1}\sum_{i=1}^{n}\ln\Big(1-\lambda^{\top}g_{i}(t,\theta)\Big),$ where $\acute{\Lambda}_{n}(t,\theta)=\{\lambda\in\mathbb{R}^{J}:\lambda^{\top}g_{i}(t,\theta)\in Q\,\forall i=1,\ldots,n\},$ $Q$ is an open interval of $\mathbb{R}$ containing $0.$ This reformulation of the empirical likelihood problem~(\ref{eq - EL problem}) is feasible because the linear constraint qualification applies to it. The part of Assumption GEL we include in our setup is a regularity condition concerning the uniform asymptotic behavior of $\acute{t},$ and is stated in terms of the following reparametrization of $\mathcal{F}.$
\begin{definition}\label{def Gamma} Let $\Gamma$ be defined as the set of all $\gamma=(\gamma_1,\gamma_2,\gamma_3)$ such that for some $(\theta, F)\in\mathcal{F}$ where
\begin{enumerate}
\item $\mathcal{F}$ is defined in Definition~\ref{def1}.
\item $\gamma_1=(E_{F}(g_{1}(W_{i},\theta))/ \sigma_{F,1}(\theta),\ldots,E_{F}(g_{J}(W_{i},\theta))/\sigma_{F,J}(\theta)).$
\item $\gamma_2= \big(\theta, \text{vech}_{*}(\Omega(\theta,F))\big),$ where $\text{vech}_{*}(\Omega(\theta,F))$ is the vector of lower off-diagonal elements of $\Omega(\theta,F).$
\item $\gamma_{3} = F.$
\end{enumerate}
\end{definition}
\noindent~\cite{andrews2010inference} indicate that there is a one-to-one mapping from $\gamma$ to $(\theta,F);$ see Appendix~A of their paper for the details.  Denote by $\{\gamma_{n,h}: n \geq 1\}\subset \Gamma$ a sequence of parameters in $\Gamma$ such that $n^{1/2}\gamma_{n,h,1} \rightarrow h_{1}\in \mathbb{R}^{J}_{+,\infty}$ and $\gamma_{n,h,2} \rightarrow h_{2} \in \mathbb{R}_{[\pm\infty]}^{q}$ as $n\rightarrow\infty,$ where $q =\dim(\Theta)+ \dim(\text{vech}_{*}(\Omega(\theta,F))).$ The part of Assumption GEL that we include in our setup is given by the following assumption.
\begin{assumptiont}\label{AT}
For all subsequences $\{w_n\}$ of $\{n\}$  and all sequence $\{\gamma_{w_{n},h}: n \geq 1\}\subset \Gamma$ and corresponding $\{(\theta_{w_{n},h}, F_{w_{n},h}): n \geq 1\}\subset\mathcal{F},$
$$\acute{t}_{w_n}=\argmin_{t \in \mathbb{R}_{+}^{J}}\sup_{\lambda \in \acute{\Lambda}_{w_n}(t,\theta_{w_{n},h})}\frac{1}{n}\sum_{i=1}^{n}\ln\Big(1-\lambda^{\top}g_{i}(t,\theta_{w_{n},h})\Big)$$
exists and satisfies $\sup_{n\geq 1}||\acute{t}_{w_n}||_{\ell^{2}_{J}} \leq K$ with probability approaching 1 as $n\rightarrow+\infty$ for some constant $K<+\infty,$ where $||\cdot||_{\ell^{2}_{J}}$ is the usual Euclidean norm on $\mathbb{R}^{J}$.
\end{assumptiont}

\par We also include an assumption from~\cite{andrews2010inference} for the case in which $\text{Int}(\Theta_{I}(F_0)) \neq \emptyset$ for some data-generating process in the model. It is required to show that when there are no binding moment inequalities, the maximum asymptotic coverage probability is equal to $1.$
\begin{assumptionm}
There exists $(\theta, F) \in \mathcal{F}$ that satisfies $E_{F}\big(g_{j}(W_{i},\theta)\big)>0$ for all $j\in \mathcal{J}$.
\end{assumptionm}

\subsection{Asymptotic Size Results}\label{Subsection Asymptotic Size results}
We now present the first main result of the paper. It mirrors Theorem~1 of~\cite{andrews2010inference} which concerns the asymptotic size of GMS confidence sets. Denote by $\acute{c}_{n}(\theta_0,1-\alpha)$ the CMS critical value under the nominal level $1-\alpha$ for testing the null hypothesis $H_0:\theta_0\in\Theta_{I}(F_0).$
\begin{theorem}\label{asymptoticsize}
Suppose $S$ satisfies Assumptions~\ref{Test1} -~\ref{T2}, $\varphi=\varphi^{(1)}$ in~(\ref{t-test MS function}), the sequence $\{\kappa_n:n\geq1\}$ satisfies Part 1 of Assumption~K, and $\alpha \in (0,1/2)$. Furthermore, let $\mathcal{F}_{+}=\{(\theta, F) \in \mathcal{F}: \text{Assumption T holds}\},$ and $\mathcal{F}_{++}=\{(\theta, F) \in \mathcal{F}: \text{Assumptions T and M hold}\}.$ Then, the nominal level $(1-\alpha)$ CMS confidence set based on the statistic $T_n(\theta)$ satisfies the following statements:
\begin{enumerate}
\item $\liminf_{n\rightarrow+\infty}\inf_{(\theta, F) \in \mathcal{F}_+} P_{F}\Big(T_{n}(\theta) \leq \acute{c}_{n}(\theta,1-\alpha)\Big)\geq1-\alpha.$ \label{result1}
\item $\liminf_{n\rightarrow+\infty}\inf_{(\theta, F) \in \mathcal{F}_+} P_{F}\Big(T_{n}(\theta) \leq \acute{c}_{n}(\theta,1-\alpha)\Big)=1-\alpha$, if in addition $S$ and $\{\kappa_n:n\geq1\}$ satisfy Assumption~\ref{Assump test stat 7} and Part 2 of Assumption~K, respectively. \label{result2}
\item $\limsup_{n\rightarrow+\infty}\sup_{(\theta, F) \in \mathcal{F}_{++}}P_{F}\Big(T_{n}(\theta) \leq \acute{c}_{n}(\theta, 1-\alpha)\Big)=1.$  \label{result3}
\end{enumerate}
\begin{proof}
See Appendix~\ref{Appendix - proof Thm 1}.
\end{proof}
\end{theorem}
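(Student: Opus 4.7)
The plan is to follow the subsequencing template of the proof of Theorem~1 in \cite{andrews2010inference}: the uniform result reduces to analysing the coverage probability along arbitrary subsequences $\{(\theta_{w_n,h}, F_{w_n,h})\} \subset \mathcal{F}_+$ indexed by a limit $h = (h_1,h_2)$ with $w_n^{1/2}\gamma_{w_n,h,1} \to h_1 \in \mathbb{R}^{J}_{+,\infty}$ and $\gamma_{w_n,h,2} \to h_2$. Along any such sequence, Assumptions~1--7 on $S$, the central limit theorem, and the continuous mapping theorem deliver
\[
T_{w_n}(\theta_{w_n,h}) \stackrel{d}{\longrightarrow} S\bigl(\Omega_{0}^{1/2}Z^{*} + h_1,\, \Omega_0\bigr),
\]
where $\Omega_0$ is the limit of $\hat{\Omega}_{w_n}(\theta_{w_n,h})$. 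This step is identical to the GMS case and can be imported verbatim. The remaining work is to control the CMS critical value $\acute{c}_{w_n}(\theta_{w_n,h},1-\alpha)$.

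The heart of the argument is showing that the CMS selector $\varphi^{(1)}(\acute{\xi}_{w_n},\hat{\Omega}_{w_n})$ has the same asymptotic behaviour as its GMS analogue, namely a limit $h_1^{\dagger}$ with $h_{1,j}^{\dagger}=0$ when $h_{1,j}<\infty$ and $h_{1,j}^{\dagger}=+\infty$ otherwise. For indices $j$ with $h_{1,j}<\infty$, I would combine Assumption~T---which bounds the dual slack estimator $\acute{t}_n$ with probability approaching one---with the dual formulation of problem~(\ref{eq - EL problem}) and the constrained empirical likelihood analysis in \cite{andrews2009validity} to conclude that $w_n^{1/2}\hat{D}_{w_n}^{-1/2}(\theta_{w_n,h})\acute{g}_{w_n}(\theta_{w_n,h}) = O_p(1)$. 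Dividing by $\kappa_{w_n}\to\infty$ (Part~1 of Assumption~K) yields $\acute{\xi}_{w_n,j}\stackrel{p}{\to}0$, so $\varphi^{(1)}_{j}(\acute{\xi}_{w_n},\hat{\Omega}_{w_n})=0$ w.p.a.~one. For indices $j$ with $h_{1,j}=+\infty$, the same constrained EL theory gives $\acute{g}_{w_n,j}=\hat{g}_{w_n,j}+o_p(1)$, so $\acute{\xi}_{w_n,j}\to+\infty$, forcing $\varphi^{(1)}_{j}=+\infty$ w.p.a.~one.

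Combining these facts, the conditional distribution of $L_{w_n}(\theta_{w_n,h},Z^{*})$ given the data converges weakly (a.s.) to that of $S(\Omega_0^{1/2}Z^{*}+h_1^{\dagger},\Omega_0)$, and hence $\acute{c}_{w_n}(\theta_{w_n,h},1-\alpha)$ converges to the corresponding $1-\alpha$ quantile $c_0(h_1^{\dagger},\Omega_0,1-\alpha)$. Since $h_{1,j}^{\dagger}\leq h_{1,j}$ componentwise in $\mathbb{R}_{+,\infty}$ and $S$ is non-increasing in its first argument (Assumption~\ref{Test1}), we have $c_0(h_1^{\dagger},\Omega_0,1-\alpha)\geq c_0(h_1,\Omega_0,1-\alpha)$, which combined with the convergence of $T_{w_n}$ and Slutsky yields Statement~\ref{result1}. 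For Statement~\ref{result2}, I would exhibit the sequence with $h_1=0$ (all moments bind in the limit) so that $h_1^{\dagger}=h_1$ and the inequality becomes an equality; Assumption~\ref{Assump test stat 7} and Part~2 of Assumption~K rule out a mass point of the limit distribution at the relevant quantile, producing equality in the lim inf. For Statement~\ref{result3}, I pick $(\theta,F)\in\mathcal{F}_{++}$ with $E_F(g_j(W_i,\theta))>0$ for all $j$; by the law of large numbers $\hat{g}_{n,j}(\theta)>0$ w.p.a.~one, so by Assumption~\ref{Test1} $T_n(\theta)=0$ w.p.a.~one, and since $\acute{c}_n(\theta,1-\alpha)\geq 0$ the coverage probability tends to one.

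The main technical obstacle is lifting pointwise constrained empirical likelihood theory to the uniform (subsequential) statement required here. Assumption~T is the pivotal ingredient, because the boundedness $\|\acute{t}_{w_n}\|\leq K$ w.p.a.~one turns the dual first-order conditions into a regular estimating-equation problem to which the expansions of \cite{andrews2009validity} apply. The uniform bounds on variances (Condition~\ref{A4}) and on $\det\Omega(\theta,F)$ (Condition~\ref{A5}) in Definition~\ref{def1} ensure that those expansions hold uniformly along any sequence in $\Gamma$, which is precisely what is needed to convert pointwise consistency of $\acute{g}_n$ into the uniform control of $\acute{\xi}_n$ that drives the argument above.
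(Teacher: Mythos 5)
Your overall architecture differs from the paper's. The paper never identifies the limit of the CMS critical value at all: it shows that the CMS and GMS selection vectors coincide with probability approaching one uniformly over $\mathcal{F}_{+}$ (via $\acute{g}_{n}(\theta)-\hat{g}_{n}(\theta)=O_{p}(n^{-1/2})$, which is Lemma~\ref{L2CS}, combined with the form of $\varphi^{(1)}$), deduces that $\acute{c}_{n}(\theta,1-\alpha)=\hat{c}_{n}(\theta,1-\alpha)$ w.p.a.~one uniformly, and then transfers all three conclusions wholesale from Theorem~1 of \cite{andrews2010inference}. You instead re-run the entire GMS asymptotic-size analysis with the CMS selector in place, which is more work and is where the problem arises.

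The genuine gap is your treatment of the components with $h_{1,j}=+\infty$. You claim that for such $j$ one has $\acute{\xi}_{w_n,j}\rightarrow+\infty$, ``forcing $\varphi^{(1)}_{j}=+\infty$ w.p.a.~one.'' This is false: $\acute{\xi}_{w_n,j}=\hat{\xi}_{w_n,j}+o_{p}(1)$ and the behaviour of $\hat{\xi}_{w_n,j}=\kappa_{w_n}^{-1}w_n^{1/2}\hat{D}_{w_n,jj}^{-1/2}\hat{g}_{w_n,j}$ is governed by $\pi_{1,j}=\lim \kappa_{w_n}^{-1}w_n^{1/2}\gamma_{w_n,h,1,j}$, not by $h_{1,j}=\lim w_n^{1/2}\gamma_{w_n,h,1,j}$. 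Since $\kappa_{w_n}\rightarrow\infty$, one can have $h_{1,j}=+\infty$ while $\pi_{1,j}$ is any value in $[0,+\infty]$; if $\pi_{1,j}<1$ the selector sets $\varphi^{(1)}_{j}=0$ w.p.a.~one, and if $\pi_{1,j}=1$ it need not converge at all. Consequently your asserted limit $h_{1}^{\dagger}$ of the selector, and hence the claimed convergence of $\acute{c}_{w_n}(\theta_{w_n,h},1-\alpha)$ to $c_{0}(h_{1}^{\dagger},\Omega_{0},1-\alpha)$, is wrong as stated. The conclusion of Statement~\ref{result1} survives because erroneously setting $\varphi^{(1)}_{j}=0$ only lowers the first argument of $S$ and therefore, by Part~1 of Assumption~\ref{Test1}, only inflates the critical value in the conservative direction; but to make this rigorous you must pass to a further subsequence along which $\kappa_{w_n}^{-1}w_n^{1/2}\gamma_{w_n,h,1}\rightarrow\pi_{1}$, work with lower bounds on the critical value rather than its limit, and handle the boundary case $\pi_{1,j}=1$ — exactly the machinery of \cite{andrews2010inference} that the paper avoids redoing by reducing CMS to GMS first. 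Your arguments for Statements~\ref{result2} (taking $h_{1}=0_{J}$, where $\pi_{1}=0_{J}$ so the issue does not arise) and~\ref{result3} are sound, apart from citing Assumption~\ref{Test1} where you need Assumption~\ref{T2} to conclude $T_{n}(\theta)=0$ when all sample moments are nonnegative.
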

The first result of Theorem~\ref{asymptoticsize} establishes the uniform validity of CMS confidence sets over the parameter space $\mathcal{F}_{+},$ and the second result of this theorem shows that they are not asymptotically conservative. The third result shows the maximum coverage probability of CMS confidence sets is equal to 1 over the parameter space $\mathcal{F}_{++}.$ The parameter space $\mathcal{F}_+$ is a subset of the one used in Theorem~1 of~\cite{andrews2010inference}, because it imposes Assumption~T and Condition 4 in Definition~\ref{def1} in addition to the conditions they set for their parameter space.

\par The proof of Theorem~\ref{asymptoticsize} establishes that CMS and GMS procedures are asymptotically equivalent with uniformity over the parameter space $\mathcal{F}_+.$ The essence of this result is that for every sequence $\{\gamma_{n,h}: n \geq 1\}\subset \Gamma$ such that Assumption~T holds, along the corresponding sequence $\{(\theta_{w_{n},h}, F_{w_{n}}): n \geq 1\}\subset\mathcal{F}_+$ we have $\acute{\xi}_{w_{n}}(\theta_{w_{n},h})=\hat{\xi}_{w_{n}}(\theta_{w_{n},h})+o_{p}(1).$ This asymptotic equivalence is a consequence of
$\acute{g}_{w_{n}}(\theta_{w_{n},h})-\hat{g}_{w_{n}}(\theta_{w_{n},h})=O_{p}(w_{n}^{-\frac{1}{2}})$ (see Lemma~\ref{L2CS} in Appendix~\ref{Section - Technical Lemmas for Thm 1}) and Assumption K on $\kappa_{w_{n}}.$
In particular, these arguments are used after re-writing the expression of $\acute{\xi}_{w_{n}}(\theta_{w_{n},h})$
in terms of $\hat{\xi}_{w_{n}}(\theta_{w_{n},h}),$ as such
\begin{equation}
\begin{aligned}
\acute{\xi}_{w_{n}}(\theta_{w_{n},h}) & =\kappa_{w_{n}}^{-1}w_{n}^{\frac{1}{2}}\hat{D}_{w_{n}}^{-\frac{1}{2}}(\theta_{w_{n},h})\acute{g}_{w_{n}}(\theta_{w_{n},h})\\
& =\kappa_{w_{n}}^{-1}w_{n}^{\frac{1}{2}}\hat{D}_{w_{n}}^{-\frac{1}{2}}(\theta_{w_{n},h})\big(\acute{g}_{w_{n}}(\theta_{w_{n},h})-\hat{g}_{w_{n}}(\theta_{w_{n},h})\big) + \hat{\xi}_{w_{n}}(\theta_{w_{n},h}),
\end{aligned}
\label{eq - Asy equiv null seq}
\end{equation}
to obtain the asymptotic equivalence.
\subsection{Limiting Local Power Function of CMS Tests}\label{Subsection CMS Local Power Funcion}
This section employs the setup in Section 8 of~\cite{andrews2010inference} to show the limiting local power function of the CMS tests coincide with their GMS counterparts when the null parameter space is $\mathcal{F}_{+}=\{(\theta, F) \in \mathcal{F}: \text{Assumption T holds}\}.$ For sequences of parameters $\{(\theta_{n,*}, F_{n}): n \geq 1\}$, consider the testing problem
\begin{align}\label{LAhyp}
H_{0}: E_{F_{n}}\big(g_{j}(W_{i},\theta_{n,*})\big) \geq 0 \ \forall \ j \in \mathcal{J} \ \text{vs.} \ H_{1}: \text{$H_{0}$ is false}
\end{align}
where $\theta_{n,*}=\theta_{n}+\eta n^{-\frac{1}{2}}(1+o(1))$ for all $n \geq 1$, $(\theta_{n},F_{n}) \in \mathcal{F}_+$ for all $n \geq 1$, and $\eta \in \mathbb{R}^{d}$ where $d=\dim(\theta_{n})<+\infty$. The idea is to study the behavior of the testing procedure along sequences of parameters $\{(\theta_{n,*}, F_{n}): n \geq 1)\}$ that differ locally from a point in the true parameter space $\mathcal{F}_+$ by $O(n^{-\frac{1}{2}})$. The local power function is defined as $P_{F_{n}}\big(T_{n}(\theta_{n,*}) > \acute{c}_{n}(\theta_{n,*},1-\alpha)\big)$, where $P_{F_{n}}(\cdot)$ is the probability measure induced by random sampling from $F_{n}$ for all $n \geq 1$. The objective is to derive an expression for the limiting local power function, $\lim_{n\rightarrow+\infty}P_{F_{n}}\big(T_{n}(\theta_{n,*})> \acute{c}_{n}(\theta_{n,*},1-\alpha)\big),$ and compare it to its GMS counterpart.

\par To this end, we introduce technical assumptions for deriving the limiting local power function for CMS tests. These assumptions are from Section~8 of~\cite{andrews2010inference}.
\begin{lassumption}\label{LAssumption1}
The true parameters $\{(\theta_{n}, F_{n}): n \geq 1\}$ satisfy:
\begin{enumerate}
\item $\theta_{n}=\theta_{n,*}-\eta n^{-\frac{1}{2}}(1+o(1))$ for some $\eta \in \mathbb{R}^{d}$, $\theta_{n,*} \rightarrow \theta_{0}$ and $F_{n}\rightarrow F_{0}$ as $n\rightarrow+\infty$, where $(\theta_{0},F_{0})\in\mathcal{F}_+$.
\item For each $j\in \mathcal{J}$, there exists $h_{1,j} \in \mathbb{R}_{+,\infty}$ such that $n^{\frac{1}{2}}E_{F_{n}}\big(g_{j}(W_{i},\theta_{n})\big)/\sigma_{F_{n},j}(\theta_{n}) \rightarrow h_{1,j}$ as $n\rightarrow+\infty$.
\item $\sup\big\{E_{F_{n}}|g_{j}(W_{i},\theta_{n,*})/\sigma_{F_{n},j}(\theta_{n,*})|^{2+\delta}: n \geq 1\big\}<+\infty$ for all $j \in \mathcal{J}$ for some $\delta>0.$
\end{enumerate}
\end{lassumption}
\noindent The first two parts of Assumptions~LA1 show that the sequence of true parameters, $\{\theta_{n}: n \geq 1\}$, is $n^{-\frac{1}{2}}$-local to the sequence of parameters under the null hypothesis, $\{\theta_{n,*}: n\geq 1\}$ and provides the limit of the sequence of normalised moment functions when evaluated at the sequence of true parameters $\{\theta_{n}: n\geq 1\}$. The third part of this assumption is a uniform integrability condition that permits the use of stochastic limit theorems for triangular arrays of row-wise IID random variables. The second assumption is as follows.
\begin{lassumption}\label{LAssumption2}
$\Pi(\theta, F) := (\partial/\partial \theta^{\top})[D^{-\frac{1}{2}}(\theta, F)E_{F}(g(W_{i}, \theta))] \in \mathbb{R}^{J\times \dim(\Theta)}$ exists and is a continuous function in a neighbourhood of $(\theta_{0}, F_{0})$.
\end{lassumption}
\noindent Both Assumptions LA\ref{LAssumption1} and LA\ref{LAssumption2} are important for proving the large sample properties of CMS tests under $n^{-\frac{1}{2}}$-local alternatives. Namely, they allow one to mean value expand the normalised moment functions under $H_{0}$ around $\theta=\theta_{n}$ and show that
\begin{align*}
\lim_{n\rightarrow \infty}n^{\frac{1}{2}}D^{-\frac{1}{2}}(\theta_{n,*}, F)E_{F}(g(W_{i}, \theta_{n,*}))=h_{1}+\Pi(\theta_{0},F_{0})\eta
\end{align*}
which can then be used to show that $T_{n}(\theta_{n,*}) \overset{d}{\rightarrow} J_{h_{1}, \eta},$ where $J_{h_{1}, \eta}$ is the distribution function of $S\big(\Omega_{0}^{\frac{1}{2}}Z^{*}+h_{1}+\Pi(\theta_{0},F_{0})\eta, \Omega_{0}\big)$ and $Z^{*}\sim N(0_{J}, I_{J})$ \citep{andrews2010inference}.
\begin{lassumption}\label{LAssumption3}
$\lim_{n\rightarrow+\infty}\kappa_{n}^{-1}n^{\frac{1}{2}}D^{-\frac{1}{2}}(\theta_{n},F_{n})E_{F_{n}}(g(W_{i},\theta_{n}))=\pi_{1} \in \mathbb{R}^{J}_{+,\infty}$.
\end{lassumption}
\noindent The last assumption involves the set
$C(\varphi)=\{\tilde{\pi}_{1}\in \mathbb{R}^{J}_{[+\infty]}: \forall j \in \mathcal{J},\;\text{either}\;\tilde{\pi}_{1,j}=+\infty\;\text{or}\;\varphi_{j}(\xi, \Omega) \rightarrow \varphi_{j}(\tilde{\pi}_{1}, \Omega_{0})\;\text{as}\; (\xi, \Omega) \rightarrow (\tilde{\pi}_{1}, \Omega_{0})\}.$ Loosely, $C(\varphi)$ is the set of all vectors in $\mathbb{R}^{J}_{[+\infty]}$ for which $\varphi$ is continuous at $(\tilde{\pi}_{1}, \Omega_{0})$. With $\varphi=\varphi^{(1)},$ this set is $C(\varphi^{(1)})=\{\tilde{\pi}_{1}\in \mathbb{R}^{J}_{[+\infty]}:\tilde{\pi}_{1,j}\neq 1,\forall j\in\mathcal{J}\}.$
\begin{lassumption}\label{LAssumption4}
\begin{enumerate*}
\item $\pi_{1} \in C(\varphi^{(1)})$ and \item $P_{F}\Big(S\big(\Omega_{0}^{\frac{1}{2}}Z^{*}+\varphi^{(1)}(\pi_{1}, \Omega_{0}), \Omega_{0}\big) \leq x\Big)$ is continuous and strictly increasing at $x=c_{\pi_{1}}(\varphi^{(1)} , 1-\alpha)$, the $1-\alpha$ quantile of the distribution function of $S\big(\Omega_{0}^{\frac{1}{2}}Z^{*}+\varphi^{(1)}(\pi_{1}, \Omega_{0}), \Omega_{0}\big)$.
\end{enumerate*}
\end{lassumption}
\noindent Assumptions~LA3 and LA4 are imposed so that we can use Theorem 2(a) of~\cite{andrews2010inference} to obtain the form of the GMS limiting local power function.

\par Next, we present the second main result of this paper. This result states that GMS and CMS tests are asymptotically equivalent, to first-order, under $n^{-1/2}$-local alternatives.
\begin{theorem}\label{LAthm}
Suppose $S$ satisfies Assumptions 1-5, $\varphi=\varphi^{(1)}$ in~(\ref{t-test MS function}), the sequence $\{\kappa_n:n\geq1\}$ satisfies Assumption~K, and that Assumptions~LA1 - LA4, hold. Then $\lim_{n\rightarrow+\infty}P_{F_{n}}\big(T_{n}(\theta_{n,*})>\acute{c}_{n}(\theta_{n,*}, 1-\alpha)\big)=1-J_{h_{1},\eta}\big(c_{\pi_{1}}(\varphi, 1-\alpha)\big).$
\end{theorem}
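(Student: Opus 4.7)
The plan is to reduce the claim to the corresponding statement for the GMS test by showing that the CMS and GMS critical values have the same probability limit under the local alternative sequence $\{(\theta_{n,*}, F_n)\}$. By Theorem~2(a) of~\cite{andrews2010inference}, under Assumptions 1--5 on $S$, Assumption~K on $\{\kappa_n\}$, and Assumptions~LA1--LA4, the GMS critical value $\hat{c}_n(\theta_{n,*}, 1-\alpha)$ converges in probability to $c_{\pi_1}(\varphi^{(1)}, 1-\alpha)$, the test statistic satisfies $T_n(\theta_{n,*}) \xrightarrow{d} S(\Omega_0^{1/2} Z^* + h_1 + \Pi(\theta_0, F_0)\eta, \Omega_0)$, and an appeal to Slutsky's theorem plus continuity of the limit distribution function at the critical value (LA4 part 2) delivers the stated limiting local power. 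So my task is reduced to showing $\acute{c}_n(\theta_{n,*}, 1-\alpha) - \hat{c}_n(\theta_{n,*}, 1-\alpha) = o_p(1)$.

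The driving step is to show that the moment-selection input differs by $o_p(1)$, namely
\begin{equation*}
\acute{\xi}_n(\theta_{n,*}) - \hat{\xi}_n(\theta_{n,*}) = \kappa_n^{-1} n^{1/2}\, \hat{D}_n^{-1/2}(\theta_{n,*}) \bigl(\acute{g}_n(\theta_{n,*}) - \hat{g}_n(\theta_{n,*})\bigr) = o_p(1),
\end{equation*}
which mirrors the decomposition~(\ref{eq - Asy equiv null seq}) used in the proof of Theorem~\ref{asymptoticsize}. The diagonal entries of $\hat{D}_n^{-1/2}(\theta_{n,*})$ are bounded in probability away from zero and infinity by Conditions~4 and~6 of Definition~\ref{def1} applied along $(\theta_n, F_n)$ together with $\theta_{n,*} - \theta_n = O(n^{-1/2})$. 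So it suffices to show $\acute{g}_n(\theta_{n,*}) - \hat{g}_n(\theta_{n,*}) = O_p(n^{-1/2})$; combined with $\kappa_n \to \infty$ (Assumption K part~1), this yields the desired $o_p(1)$ rate.

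Establishing the $O_p(n^{-1/2})$ bound on $\acute{g}_n(\theta_{n,*}) - \hat{g}_n(\theta_{n,*})$ is the main obstacle, because Lemma~\ref{L2CS} (in Appendix~\ref{Section - Technical Lemmas for Thm 1}) is stated for sequences in $\mathcal{F}_+$ on which the population moment inequalities hold, whereas under the local alternative the population moments at $\theta_{n,*}$ may be negative by an amount of order $n^{-1/2}$. The plan is to exploit the dual formulation: Assumption~T applied at the true parameter $(\theta_n, F_n) \in \mathcal{F}_+$ bounds $\|\acute{t}_n\|$ in probability and hence bounds the Lagrange multiplier $\acute{\lambda}_n$, so the first-order condition representation $\acute{g}_n(\theta) = n^{-1}\sum_i (1 - \acute{\lambda}_n^\top g_i(\acute{t}_n,\theta))^{-1} g(W_i,\theta)$ gives $\acute{g}_n(\theta_{n,*}) - \hat{g}_n(\theta_{n,*}) = O_p(\|\acute{\lambda}_n\|)$. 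The $n^{-1/2}$-local shift $\theta_{n,*} - \theta_n$ together with Assumption~LA2 (smoothness of $\Pi$) implies that $E_{F_n} g(W_i,\theta_{n,*}) = E_{F_n} g(W_i,\theta_n) + O(n^{-1/2})$, so that the degree of slackness / violation at $\theta_{n,*}$ differs from that at $\theta_n$ by $O_p(n^{-1/2})$, and the same Karush--Kuhn--Tucker argument used in the proof of Lemma~\ref{L2CS} delivers $\|\acute{\lambda}_n\| = O_p(n^{-1/2})$, hence the rate.

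Once $\acute{\xi}_n(\theta_{n,*}) - \hat{\xi}_n(\theta_{n,*}) = o_p(1)$ is in hand, Assumption~LA3 gives $\hat{\xi}_n(\theta_{n,*}) \xrightarrow{p} \pi_1$, so $\acute{\xi}_n(\theta_{n,*}) \xrightarrow{p} \pi_1$ as well. Assumption~LA4 part~1 places $\pi_1$ in the continuity set $C(\varphi^{(1)})$, so the continuous mapping theorem yields $\varphi^{(1)}(\acute{\xi}_n(\theta_{n,*}), \hat{\Omega}_n(\theta_{n,*})) \xrightarrow{p} \varphi^{(1)}(\pi_1, \Omega_0)$, the same limit as for the GMS statistic. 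Combining this with the convergence $\hat{\Omega}_n(\theta_{n,*}) \xrightarrow{p} \Omega_0$ and continuity of the quantile functional at $c_{\pi_1}(\varphi^{(1)}, 1-\alpha)$ guaranteed by Assumption~LA4 part~2 delivers $\acute{c}_n(\theta_{n,*}, 1-\alpha) \xrightarrow{p} c_{\pi_1}(\varphi^{(1)}, 1-\alpha)$. A final application of Slutsky's theorem together with the distributional convergence of $T_n(\theta_{n,*})$ completes the proof.
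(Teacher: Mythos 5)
Your proposal is correct and follows essentially the same route as the paper: the heart of the argument is the decomposition $\acute{\xi}_{n}(\theta_{n,*})-\hat{\xi}_{n}(\theta_{n,*})=\kappa_{n}^{-1}n^{1/2}\hat{D}_{n}^{-1/2}(\theta_{n,*})(\acute{g}_{n}(\theta_{n,*})-\hat{g}_{n}(\theta_{n,*}))$, the bound $\acute{g}_{n}(\theta_{n,*})-\hat{g}_{n}(\theta_{n,*})=O_{p}(n^{-1/2})$ established under local alternatives via the dual/KKT representation and an $O_{p}(n^{-1/2})$ bound on the Lagrange multipliers (this is exactly the paper's Lemma~\ref{LP4}, built on Lemmas~\ref{owen112}--\ref{LP3}, which handles precisely the obstacle you identify, namely that Lemma~\ref{L2CS} applies only on $\mathcal{F}_{+}$), and then a reduction to Theorem~2(a) of \cite{andrews2010inference}. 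The only cosmetic difference is in the endgame: the paper concludes that $\acute{\varphi}_{n}$ and $\hat{\varphi}_{n}$, hence the two critical values, coincide with probability approaching one and then cites the GMS result wholesale, whereas you re-derive $\acute{c}_{n}(\theta_{n,*},1-\alpha)\overset{p}{\rightarrow}c_{\pi_{1}}(\varphi^{(1)},1-\alpha)$ directly from LA3--LA4; both are valid.
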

\begin{proof}
See Appendix~\ref{Appendix - proof Thm 2}.
\end{proof}
\noindent The intuition behind Theorem~\ref{LAthm} is essentially the same as Theorem~\ref{asymptoticsize}. For a given sequence $\{(\theta_{n,*}, F_{n}): \ n \geq 1\}$ of $n^{-1/2}$-local alternatives, we show
$\acute{\xi}_{n}(\theta_{n,*})=\hat{\xi}_{n}(\theta_{n,*})+o_{p}(1)$. This asymptotic equivalence is a consequence of applying
$\acute{g}_{n}(\theta_{n,*})-\hat{g}_{n}(\theta_{n,*})=O_{p}(n^{-\frac{1}{2}})$ (see Lemma~\ref{LP4} in Appendix~\ref{Section Local Pwr Thms 2 3}) and Assumption~K to a decomposition of $\acute{\xi}_{n}(\theta_{n,*})$ identical to~(\ref{eq - Asy equiv null seq}). Therefore, the pairs $(\acute{\xi}_{n}(\theta_{n,*}), \hat{\Omega}_{n}(\theta_{n,*}))$ and $(\hat{\xi}_{n}(\theta_{n,*}), \hat{\Omega}_{n}(\theta_{n,*}))$ are asymptotically equivalent along sequences of $n^{-1/2}$-local alternatives. As this is the only point of difference between CMS and GMS, Theorem~\ref{LAthm} follows from Theorem~2(a) of~\cite{andrews2010inference}. An important corollary to Theorem~\ref{LAthm} is that CMS inherits the first-order improvements that GMS exhibits over subsampling and plug-in asymptotic critical values (see \citealp{andrews2010inference}).

\subsection{Local Power Comparison Between CMS and GMS Tests}\label{Subsection CMS and GMS Local Power comparison}
While Theorem~\ref{LAthm} establishes the equality of the limiting local power functions of CMS and GMS tests under $n^{-1/2}$-local alternatives, this section presents results that characterize sequences of local alternatives under which the power of CMS tests dominate their GMS counterparts for sufficiently large, but finite, samples. First, we must establish when it is meaningful to compare tests along sequences of $n^{-1/2}$-local alternatives. Under the conditions of part 2 of Theorem~\ref{asymptoticsize},
for every $r>0,$ there exists $N_{r,0},N_{r,1}\in\mathbb{Z}_{+}$ (depending on $r$) such that
\begin{align}
\left|\sup_{m\geq n}\sup_{(\theta, F) \in \mathcal{F}_{+}}P_{F}(T_{m}(\theta) > \acute{c}_{m}(\theta, 1-\alpha))-\alpha\right| & \leq \frac{r}{2}\quad\forall n\geq N_{r,0}\quad\text{and}\;\label{eq - LP-size CMS}\\
\left|\sup_{m\geq n}\sup_{(\theta, F) \in \mathcal{F}_{+}}P_{F}(T_{m}(\theta) > \hat{c}_{m}(\theta, 1-\alpha))-\alpha\right| & \leq \frac{r}{2}\quad\forall n\geq N_{r,1},\label{eq - LP-size GMS}
\end{align}
by the definition of limit superior (with respect to $n$). Then by the triangular inequality,
\begin{align*}
\left|\sup_{m\geq n}\sup_{(\theta, F) \in \mathcal{F}_{+}}P_{F}(T_{m}(\theta) > \acute{c}_{m}(\theta, 1-\alpha))-\sup_{m\geq n}\sup_{(\theta, F) \in \mathcal{F}_{+}}P_{F}(T_{m}(\theta) > \hat{c}_{m}(\theta, 1-\alpha))\right|\leq r
\end{align*}
for all $n\geq N_r=\max\{N_{r,0},N_{r,1}\},$ holds. In words, given an error tolerance $r,$ the tails of the sequences of exact sizes of CMS and GMS tests are within $r$ of $\alpha$ and of each other, when $n\geq N_r.$ Thus, given $r$ (e.g., 0.0001), it is meaningful to compare the rejection probabilities along sequences of local alternatives when $n\geq N_r.$

\par Let $\mathcal{H}$ denote the set of all sequences $\{(\theta_{n,*},F_{n}): n \geq1\}$ that satisfy Assumption~LA1 and LA2. The family we consider for the comparisons is defined as
\begin{align}\label{eq - local alternatives family}
\mathcal{M}=\left\{\{(\theta_{n,*},F_{n}): n \geq1\}\in\mathcal{H}:\text{$\Omega(\theta_{n,*},F_{n})$ has nonnegative off-diagonal elements, $\forall n$}\right\}.
\end{align}
 For $\{(\theta_{n,*},F_{n}): n \geq 1\} \in \mathcal{H}$, let $\hat{\Upsilon}_{n}(\theta_{n,*}) := \{j\in\mathcal{J}:\varphi_j^{(1)}(\hat{\xi}_{n}(\theta_{n,*}),\hat{\Omega}_{n}(\theta_{n,*}))=0\}$ and $\acute{\Upsilon}_{n}(\theta_{n,*}) := \{j\in\mathcal{J}:\varphi_j^{(1)}(\acute{\xi}_{n}(\theta_{n,*}),\hat{\Omega}_{n}(\theta_{n,*}))=0\}$ for each $n \geq 1$. We have the following result.
\begin{theorem}\label{LPorder}
Let $\mathcal{M}$ be as in~(\ref{eq - local alternatives family}). Suppose that $S$ satisfies Part 1 of Assumption~1, $\varphi=\varphi^{(1)}$ in~(\ref{t-test MS function}), and the sequence $\{\kappa_n:n\geq1\}$ satisfies Assumption~K. For every $\{(\theta_{n,*}, F_{n}): n \geq 1\} \in \mathcal{M},$ there exists $N(\theta_{n,*}, F_{n})  \in \mathbb{Z}_{+}$ such that
\begin{align}\label{LPorder - 1}
P_{F_{n}}\Big(T_{n}(\theta_{n})>\hat{c}_{n}(\theta_{n},1-\alpha)\Big) \leq P_{F_{n}}\Big(T_{n}(\theta_{n}) > \acute{c}_{n}(\theta_{n},1-\alpha)\Big)\quad \forall n \geq N(\theta_{n,*}, F_{n}).
\end{align}
If in addition $S$ satisfies part 1 of Assumption 2 and Part 2 of Assumption 5, and the event $$\bigg\{\acute{\Upsilon}_{n}(\theta_{n,*})\subsetneq \hat{\Upsilon}_{n}(\theta_{n,*})\bigg\} \bigcap \bigg\{\hat{c}_{n}(\theta_{n,*},1-\alpha)>0\bigg\} \bigcap \bigg\{\acute{c}_{n}(\theta_{n,*},1-\alpha)<T_{n}(\theta_{n,*}) \leq \hat{c}_{n}(\theta_{n,*},1-\alpha)\bigg\}$$
has positive probability for each $n \geq N(\theta_{n,*},F_{n})$, then the weak inequalities in~(\ref{LPorder - 1}) are strict.
\end{theorem}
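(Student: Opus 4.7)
The plan is to reduce the rejection-probability comparison in~(\ref{LPorder - 1}) to an ordering of the CMS and GMS critical values, which in turn follows from a componentwise ordering of the constrained and unconstrained moment estimators. The central step is to establish that for $n \geq N(\theta_{n,*},F_{n})$,
\begin{equation*}
\acute{g}_{n}(\theta_{n,*}) \geq \hat{g}_{n}(\theta_{n,*}) \quad \text{componentwise.}
\end{equation*}
By the KKT conditions for~(\ref{eq - EL problem}), the tilted weights take the form $\acute{p}_{i} \propto [\nu_{n} - \sum_{j}\lambda_{n,j}g_{j}(W_{i},\theta_{n,*})]^{-1}$ with Lagrange multipliers $\lambda_{n} \in \mathbb{R}^{J}_{+}$ associated with the moment-inequality constraints. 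Assumption~T together with Lemma~\ref{LP4} deliver $\lambda_{n}=O_{p}(n^{-1/2})$, so a first-order expansion of $\acute{p}_{i}$ about $1/n$ yields
\begin{equation*}
\acute{g}_{n}(\theta_{n,*}) - \hat{g}_{n}(\theta_{n,*}) = \hat{\Sigma}_{n}(\theta_{n,*})\lambda_{n} + o_{p}(n^{-1/2}).
\end{equation*}
Since $\{(\theta_{n,*},F_{n})\} \in \mathcal{M}$, the population covariance of the moment vector has nonnegative off-diagonals, and consistency of $\hat{\Sigma}_{n}(\theta_{n,*})$ makes its entries nonnegative eventually. Combined with $\lambda_{n} \in \mathbb{R}^{J}_{+}$, this renders $\hat{\Sigma}_{n}\lambda_{n}\geq0$ componentwise and absorbs the remainder, giving the claimed dominance for all $n \geq N(\theta_{n,*},F_{n})$.

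Next I propagate the dominance forward. Multiplication by the positive diagonal matrix $\kappa_{n}^{-1}n^{1/2}\hat{D}_{n}^{-1/2}(\theta_{n,*})$ preserves the inequality, so $\acute{\xi}_{n}(\theta_{n,*}) \geq \hat{\xi}_{n}(\theta_{n,*})$ componentwise. Because $\varphi^{(1)}$ in~(\ref{t-test MS function}) is coordinatewise nondecreasing in $\xi$, this gives $\varphi^{(1)}(\acute{\xi}_{n},\hat{\Omega}_{n}) \geq \varphi^{(1)}(\hat{\xi}_{n},\hat{\Omega}_{n})$, equivalently $\acute{\Upsilon}_{n}(\theta_{n,*}) \subseteq \hat{\Upsilon}_{n}(\theta_{n,*})$. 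Part~1 of Assumption~1 (nonincreasingness of $S$ in its first argument) then makes the CMS simulated criterion $L_{n}$ in~(\ref{eq - L_n GMS}), conditional on the data, pointwise weakly smaller than its GMS counterpart; taking $(1-\alpha)$-quantiles preserves this ordering, yielding $\acute{c}_{n}(\theta_{n,*},1-\alpha) \leq \hat{c}_{n}(\theta_{n,*},1-\alpha)$. Integrating the indicator inequality $\mathbf{1}\{T_{n}(\theta_{n,*}) > \hat{c}_{n}\} \leq \mathbf{1}\{T_{n}(\theta_{n,*}) > \acute{c}_{n}\}$ against $P_{F_{n}}$ delivers~(\ref{LPorder - 1}).

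For the strict inequality, I decompose
\begin{equation*}
P_{F_{n}}(T_{n}(\theta_{n,*})>\acute{c}_{n})-P_{F_{n}}(T_{n}(\theta_{n,*})>\hat{c}_{n}) \geq P_{F_{n}}\bigl(\acute{c}_{n}<T_{n}(\theta_{n,*})\leq\hat{c}_{n}\bigr),
\end{equation*}
and note that the right-hand side is bounded below by the (positive) probability of the event displayed in the theorem. The supplementary hypotheses (Part~1 of Assumption~2 and Part~2 of Assumption~5) are exactly what upgrade the strict set inclusion $\acute{\Upsilon}_{n}\subsetneq\hat{\Upsilon}_{n}$ on that event into the strict quantile ordering $\acute{c}_{n}<\hat{c}_{n}$, which is what permits the assumed event to carry positive probability on the intersection with $\{\hat{c}_{n}>0\}$.

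The main obstacle I foresee is the first step: rigorously controlling the remainder in the expansion of $\acute{p}_{i}$ uniformly across the sample, and handling the boundary case where some population correlations in $\Omega(\theta_{n,*},F_{n})$ equal zero rather than being strictly positive, so that entrywise nonnegativity of the \emph{sample} covariance cannot be taken for granted at fixed $n$. The rate $\acute{g}_{n}-\hat{g}_{n}=O_{p}(n^{-1/2})$ from Lemma~\ref{LP4} together with the boundedness of $\acute{t}_{n}$ from Assumption~T are the key ingredients for making the expansion rigorous; once Step~1 is in place, the remaining sign bookkeeping through $\varphi^{(1)}$, $S$, and the quantile map is comparatively routine.
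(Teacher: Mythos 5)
Your overall architecture coincides with the paper's: establish the componentwise dominance $\acute{g}_{n}(\theta_{n,*})\succsim\hat{g}_{n}(\theta_{n,*})$ from the sign of the Lagrange multipliers together with the nonnegative correlations, propagate it through the monotone map $\varphi^{(1)}$ and Part 1 of Assumption 1 to order the simulated criteria and hence the critical values, and obtain the strict version from the displayed event via Part 1 of Assumption 2 and Part 2 of Assumption 5. This is exactly the route taken by Lemmas \ref{LPC1}, \ref{LPC2} and \ref{GMSorder} followed by Steps 1--3 of the proof in Appendix \ref{Appendix - proof Thm 3 and Prop 1}.

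The one place where your argument does not close is precisely the obstacle you flag in Step 1. Writing $\acute{g}_{n}-\hat{g}_{n}=\hat{\Sigma}_{n}\lambda_{n}+o_{p}(n^{-1/2})$ and arguing that the nonnegative leading term ``absorbs the remainder'' fails whenever a component of $\hat{\Sigma}_{n}\lambda_{n}$ is zero or of smaller order than $n^{-1/2}$ (e.g.\ no constraint binds, or moment $j$ is uncorrelated with the binding moments): a nonnegative quantity plus an $o_{p}(n^{-1/2})$ term of ambiguous sign need not be nonnegative. The paper sidesteps the expansion altogether: from $\acute{p}_{i}=n^{-1}\big(1+\acute{\lambda}_{n,b}^{\top}g_{b}(W_{i},\theta_{n,*})\big)^{-1}$ one gets the \emph{exact} identity $\hat{g}_{n,j}(\theta_{n,*})-\acute{g}_{n,j}(\theta_{n,*})=\acute{\lambda}_{n,b}^{\top}\sum_{i=1}^{n}\acute{p}_{i}\,g_{b}(W_{i},\theta_{n,*})g_{j}(W_{i},\theta_{n,*})$, Lemma \ref{LPC1} shows the tilted second-moment vector converges to the population covariance vector $\Xi(\theta_{n,*},F_{n})\geq 0$, and the KKT sign restriction on $\acute{\lambda}_{n,b}$ then delivers the ordering (Lemma \ref{LPC2}), so there is no same-order remainder to absorb against a possibly vanishing leading term. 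Note also that the dominance, and hence $\acute{c}_{n}\leq\hat{c}_{n}$, is obtained only on an event whose probability tends to one, not samplewise for every $n\geq N$; your Step 1 conclusion and the subsequent ``integrate the indicator inequality'' step should be phrased accordingly, as in the paper's Step 1. Two minor citation points: Assumption T is not among the hypotheses of this theorem (the relevant rate results are proved under LA1--LA2, which membership in $\mathcal{M}$ supplies), and the multiplier bound $\|\acute{\lambda}_{n,b}\|=O_{p}(n^{-1/2})$ is Lemma \ref{LP3}, not Lemma \ref{LP4}.
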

\begin{proof}
See Appendix~\ref{Appendix - proof Thm 3 and Prop 1}.
\end{proof}
\noindent Theorem~\ref{LPorder} states the rejection probabilities of CMS tests are no less than their GMS counterparts in large enough, but finite, sample sizes, under local alternatives in $\mathcal{M}.$ It also provides a sufficient condition for the ordering to hold strictly. Thus, for each sequence of local alternatives in $\mathcal{M}$ and small $r>0,$ the local power of a CMS test is larger than its GMS counterpart when $n \geq \max\{N(\theta_{n,*}, F_{n}),N_r\},$ where $N_r=\max\{N_{r,0},N_{r,1}\}$ and $N_{r,0}$ and $N_{r,1}$ defined in~(\ref{eq - LP-size CMS}) and~(\ref{eq - LP-size GMS}), respectively.

\par The key message from Theorem~\ref{LPorder} is that a comparison of GMS and CMS tests based on first-order asymptotics can be misleading, as it does not reflect the finite-sample situation for certain local alternatives. The result of Theorem~\ref{LPorder} is similar to Corollary~6.1 \cite{Lok-Tabri-inpress}; however, it is important to note that their result is specific to moment inequalities arising from restricted stochastic dominance orderings. Consequently, Theorem~\ref{LPorder} provides a nontrivial extension of their result to the moment inequality model with finitely many inequalities and arbitrary moment functions, when the off-diagonal elements of $\Omega(\theta_{n,*},F_{n})$ are non-negative for each $n.$

\par At the heart of this result is the marriage of the non-negative correlational structure on $\Omega(\theta_{n,*},F_{n})$ and constrained empirical likelihood estimation. This marriage begets $\hat{g}_{n,j}(\theta_{n,*}) \leq \acute{g}_{n,j}(\theta_{n,*})$ with probability approaching $1,$ for all sequences in $\mathcal{M}$ (see Lemma~\ref{LPC2}). This ordering of the estimators implies that $\varphi^{(1)}_{j}(\acute{\xi}(\theta_{n,*}),\hat{\Omega}_{n}(\theta_{n,*})) \geq \varphi^{(1)}_{j}(\hat{\xi}_{n}(\theta_{n,*}),\hat{\Omega}_{n}(\theta_{n,*}))$ holds with probability approaching 1, for all sequences in $\mathcal{M}$ (see Lemma~\ref{GMSorder}). It is this ordering of the moment selection functions under such sequences that gives rise to the result of Theorem~\ref{LPorder}.

\par While Theorem~\ref{LPorder} indicates that the local powers of the GMS and CMS tests can be ordered under a class of local alternatives $\mathcal{M}$ for large enough $n,$ it does not specify the extent of the discrepancy in the local powers. It is quite difficult to determine the extent of this discrepancy analytically. However, Section~\ref{Setion - MC Sims} presents Monte Carlo evidence that the discrepancy that Theorem \ref{result3} implies can be very large for local alternative sequences which have some non-violated inequalities (SNVIs). That is, sequences $\{(\theta_{n,*},F_{n}): n \geq1\}$ in $\mathcal{M}$ where there exists $j\in\mathcal{J}$ such that $E_{F_{n}}\big(g_{j}(W_i,\theta_{n,*})\big)>0\;\forall n$ and $\lim_{n\rightarrow+\infty}E_{F_{n}}\big(g_{j}(W_i,\theta_{n,*})\big)=0$.

\subsection{Power Against Distant Alternatives}\label{Subsection - Distant Alternatives}
\par This section shows CMS tests are consistent against distant alternatives. Distant alternatives include fixed alternatives and alternatives that differ from the null by greater than $O(n^{-\frac{1}{2}})$. The next assumption is useful for deducing this result, and it is the same one introduced by~\cite{andrews2010inference} in Section~9 of their paper.
\begin{dassumption}
Let $g_{n,j}^{*}=E_{F_{n}}(g_{j}(W_{i},\theta_{n,*}))/\sigma_{F_{n},j}(\theta_{n,*})$ for each $j \in \mathcal{J},$ and $\upsilon_{n}=\max_{j \in \mathcal{J}}\{-g_{n,j}^{*}\}$.
\begin{enumerate*}
\item $n^{\frac{1}{2}}\upsilon_{n}\rightarrow+\infty$ as $n\rightarrow+\infty$
\item $\Omega(\theta_{n,*},F_{n})\rightarrow \Omega_{1}$, $\Omega_{1} \in \varPsi_{2}$.
\end{enumerate*}
\end{dassumption}
\noindent The key part of this assumption is the first part, which indicates that there exists $j \in \mathcal{J}$ such that $g_{n,j}^{*}<0$ and that the violation of the non-negativity constraint is not $O(n^{-\frac{1}{2}})$. This condition differs from the setup with $n^{-\frac{1}{2}}$-local alternatives, where the sequences of alternatives $\{(\theta_{n,*},F_{n}): n \geq 1\}$ are within a $n^{-\frac{1}{2}}$-neighbourhood of $\mathcal{F}_+$.
\par We have the following result.
\begin{theorem}\label{distant}
Suppose $S$ satisfies Assumptions 1,3,4 and 6, $\varphi=\varphi^{(1)}$ in~(\ref{t-test MS function}), and the sequence $\{\kappa_n:n\geq1\}$ satisfies Assumption~K. Then
$\lim_{n\rightarrow+\infty}P_{F_{n}}\Big(T_{n}(\theta_{n,*})>\acute{c}_{n}(\theta_{n,*},1-\alpha)\Big)=1.$
\end{theorem}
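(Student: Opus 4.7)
The plan is to establish consistency by proving two facts separately that together imply the conclusion: (i) the test statistic $T_{n}(\theta_{n,*})$ diverges to $+\infty$ in $P_{F_{n}}$-probability, and (ii) the CMS critical value $\acute{c}_{n}(\theta_{n,*},1-\alpha)$ is bounded in probability. The argument parallels the GMS analogue (Theorem 3) of~\cite{andrews2010inference}; the novelty lies in handling the constrained estimator $\acute{g}_{n}$. Importantly, no asymptotic equivalence between $\acute{g}_{n}$ and $\hat{g}_{n}$ is required---which is crucial here since Assumption~T is not imposed, and any such equivalence would be doubtful along distant alternatives.

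For step (i), I would combine Assumption DA Part 1, which supplies (along the finite index set $\mathcal{J}$) some $j^{*}$ with $n^{1/2}(-g_{n,j^{*}}^{*})\to+\infty$, with the standard CLT/LLN consequences for $\hat{g}_{n,j^{*}}(\theta_{n,*})$ and $\hat{\sigma}_{n,j^{*}}(\theta_{n,*})$ delivered by the $2+\delta$ moment condition and the variance bounds in Definition~\ref{def1}. These yield $n^{1/2}\hat{g}_{n,j^{*}}(\theta_{n,*})/\hat{\sigma}_{n,j^{*}}(\theta_{n,*})\to-\infty$ in $P_{F_{n}}$-probability. The monotonicity and explosion properties of $S$ encoded by Assumptions~1 and~6 then force $T_{n}(\theta_{n,*})=S(n^{1/2}\hat{g}_{n}(\theta_{n,*}),\hat{\Sigma}_{n}(\theta_{n,*}))\overset{p}{\to}+\infty$.

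For step (ii), I would dominate the CMS critical value by an ``include-all-moments'' quantile which is bounded in probability. Because $\varphi^{(1)}$ in~(\ref{t-test MS function}) takes values only in $\{0,+\infty\}$, monotonicity of $S$ in its first argument (Assumption~1) together with the ``drop the $+\infty$-coordinates'' property (Assumption~4) produces the pointwise bound
\begin{equation*}
S\!\left(\hat{\Omega}_{n}^{1/2}(\theta_{n,*})Z^{*}+\varphi^{(1)}(\acute{\xi}_{n}(\theta_{n,*}),\hat{\Omega}_{n}(\theta_{n,*})),\,\hat{\Omega}_{n}(\theta_{n,*})\right)\leq S\!\left(\hat{\Omega}_{n}^{1/2}(\theta_{n,*})Z^{*},\,\hat{\Omega}_{n}(\theta_{n,*})\right)=:L_{n}^{0},
\end{equation*}
which holds almost surely for each realization of $Z^{*}$ and the data. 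Conditional-on-data monotonicity of quantiles then yields $\acute{c}_{n}(\theta_{n,*},1-\alpha)\leq c_{n}^{0}$, where $c_{n}^{0}$ is the conditional $1-\alpha$ quantile of $L_{n}^{0}$. Assumption DA Part 2 and a ULLN give $\hat{\Omega}_{n}(\theta_{n,*})\overset{p}{\to}\Omega_{1}\in\varPsi_{2}$; continuity of $S$ in its second argument (Assumption~3) and the continuous mapping theorem then give $L_{n}^{0}\overset{d}{\to}S(\Omega_{1}^{1/2}Z^{*},\Omega_{1})$, a tight random variable. Hence $c_{n}^{0}=O_{p}(1)$, and a fortiori $\acute{c}_{n}(\theta_{n,*},1-\alpha)=O_{p}(1)$.

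Combining (i) and (ii) gives $P_{F_{n}}(T_{n}(\theta_{n,*})\leq\acute{c}_{n}(\theta_{n,*},1-\alpha))\to 0$, which is the claim. The main obstacle will be rigorously justifying the pointwise domination in step (ii) when $\varphi^{(1)}(\acute{\xi}_{n},\hat{\Omega}_{n})$ has a mix of zero and $+\infty$ coordinates: one first invokes Assumption~4 to reduce $S$ on the $+\infty$ coordinates to a lower-dimensional $S$ on the remaining coordinates, and then applies the coordinate-wise monotonicity of Assumption~1 to compare against the unreduced statistic $L_{n}^{0}$. Once this comparison is secured, the remainder is routine and---crucially---bypasses any need to control $\acute{g}_{n}-\hat{g}_{n}$ under distant alternatives.
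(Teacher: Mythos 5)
Your proposal is correct and follows essentially the same route as the paper: the paper's Step 3 is exactly your step (ii) (the CMS critical value is dominated by the plug-in quantile $c(\hat{\Omega}_{n}(\theta_{n,*}),1-\alpha)=O_{p}(1)$ because $\varphi^{(1)}\geq 0$ and $S$ is nonincreasing, with no control of $\acute{g}_{n}-\hat{g}_{n}$ required), while its Steps 1--2 make your step (i) rigorous by passing to a subsequence along which $g_{n}^{*}/\upsilon_{n}\to e\in[-1,+\infty]^{J}$ and using the homogeneity Assumption 6 to write $(n^{1/2}\upsilon_{n})^{-\chi}T_{n}(\theta_{n,*})\overset{p}{\to}S(e,\Omega_{1})>0$ (positivity by Assumption 3), which is needed because a single fixed $j^{*}$ with $n^{1/2}(-g_{n,j^{*}}^{*})\to+\infty$ need not exist along the full sequence and a general $S$ requires the normalization to deliver the explosion. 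The only slips are labeling ones: the pointwise domination in your step (ii) follows from monotonicity (Assumption 1) alone rather than Assumption 4, and continuity of $S$ is Part 4 of Assumption 1, not Assumption 3.
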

\begin{proof}
See Appendix~\ref{Appendix - proof Thm 4}.
\end{proof}

\section{Simulation Results}\label{Setion - MC Sims}
\par This section studies the finite-sample performance of the CMS procedure and compares it to the GMS procedure using a simulation experiment based on the designs in~\cite{andrews2012inference}. The study uses the test statistics $S_1$ in~(\ref{MMM}) and $S_{2A}$ in~(\ref{QLR}), the recommended moment selection function $\varphi=\varphi^{(1)}$ in~(\ref{t-test MS function}), and the recommended localisation parameter $\kappa_n=(\ln n)^{1/2}.$ The nominal level is set to $\alpha=0.05,$ and we considered sample sizes $n=50,100$, and $250$. We also report simulation results for (i) the RSW procedure that use $S_1$ and $S_{2A}$, and (ii) the recommended RMS testing procedure, which combines $S=S_{2A}$, $\varphi=\varphi^{(1)}$, and $\kappa$-auto (a data-driven choice of $\kappa_n$), as additional benchmarks in studying the finite-sample performance of CMS; see Appendices~\ref{RSWdetails} and~\ref{RMSdetails}, respectively, for further details on these testing procedures. Only bootstrap versions of the tests were implemented, with 10000 bootstrap samples per Monte Carlo replication.  The computations were implemented using \textsf{R}.

\par For a given $\theta,$ the null hypothesis is $H_0:\theta\in\Theta(F_0).$ The experimental design in~\cite{andrews2012inference,andrews2012supplement} is a general formulation of that testing problem that does not require the specification of a particular form for the moment functions $\{g_{j}(\cdot,\theta): j=1,...,J\}.$ They note that the finite-sample properties of tests of $H_0$ depend on the moment functions only through (i) the vector $\mu=[E_{F_{0}}\big(g_{1}(W_i,\theta)\big),\ldots,E_{F_{0}}\big(g_{J}(W_i,\theta)\big)],$ (ii) the correlation matrix $\Omega=\text{Corr}\left(g_{1}(W_i,\theta),\ldots,g_{J}(W_i,\theta)\right),$ and (iii) the distribution of the mean zero, variance $I_J$ random vector $Z^{\dagger}=[Z_1^{\dagger},\ldots,Z_J^{\dagger}]$ where
$$
Z_j^{\dagger}=\text{Var}_{F_0}^{-1/2}\left(g_{j}(W_i,\theta)\right)\left(g_{j}(W_i,\theta)-E_{F_{0}}\big(g_{j}(W_i,\theta)\big)\right),\quad j=1,\ldots, J.
$$
We consider the case $Z^{\dagger}\sim N(0_J,I_J)$ and three correlation matrices, $\Omega_{\text{Neg}},$ $\Omega_{\text{Zero}},$ and $\Omega_{\text{Pos}},$ which exhibit negative, zero, and positive correlations.

\par The assertion of the null hypothesis in this general formulation is $H_0:\mu_j\geq0\;\forall j=1,\ldots,J$. For comparisons under the null hypothesis, we follow~\cite{andrews2012inference} by comparing the tests' maximum null rejection probabilities (MNRPs). The MNRPs are computed over the mean vectors $\mu$ in the null parameter space given the correlation matrix $\Omega\in\{\Omega_{\text{Neg}},\Omega_{\text{Zero}},\Omega_{\text{Pos}}\}$ and under the assumption of normally distributed moment inequalities. Based on simulation evidence, they conjecture that the MNRPs occur for mean vectors $\mu$ whose elements are $0$'s and $+\infty$'s. Thus, given a nominal level $\alpha,$ they compute MNRP results over the set of mean vectors $\mu$ which have that form. The results we report are for $J=2,4$ and $10.$ The matrix $\Omega_{\text{Zero}}$ equals the $J$-dimensional identity matrix. The matrices $\Omega_{\text{Neg}}$ and $\Omega_{\text{Pos}}$ are Toeplitz matrices with correlations given by the following: for $J=2:$ $\rho=-.9$ for $\Omega_{\text{Neg}}$ and $\rho=.5$ for $\Omega_{\text{Pos}};$ for $J=4:$ $\rho=(-.9,.7,-.5)$ for $\Omega_{\text{Neg}}$ and $\rho=(.9,.7,.5)$ for $\Omega_{\text{Pos}};$ for $J=10:$ $\rho=(-.9,.8,-.7,.6,-.5,.4,-.3,.2,-.1)$ for $\Omega_{\text{Neg}}$ and $\rho=(.9,.8,.7,.6,.5,\ldots,.5)$ for $\Omega_{\text{Pos}}$. As in~\cite{andrews2012inference}, the simulation study treats the correlation matrices as unknown in the implementation of all of the tests.

\par For power comparisons, we also follow~\cite{andrews2012inference,andrews2012supplement}. They compare the power of different tests by comparing their empirical power for a chosen set of alternative parameter vectors
$\mu\in\mathbb{R}^{J}$ for a given correlation matrix $\Omega\in\{\Omega_{\text{Neg}},\Omega_{\text{Zero}},\Omega_{\text{Pos}}\}$. The sets of $\mu$ vectors in the alternative are similar to the ones described in~\cite{andrews2012inference,andrews2012supplement}. We adjust those sets so as to compare the local power properties of the testing procedures. The adjustment is as follows. For each $J\in\{2,4,10\}$ the set of $\mu$ vectors are given by $\mathcal{M}_{J,n}(\Omega) = \left\{\mu/\sqrt{n}: \mu\in\mathcal{M}_J(\Omega)\right\},$ where the set $\mathcal{M}_J(\Omega)$ of $\mu$ vectors is described in Section~7.1 of~\cite{andrews2012supplement}. The $\mu$ vectors in $\mathcal{M}_{J,n}(\Omega)$ are scaled versions of those in $\mathcal{M}_J(\Omega),$ where the scaling is by $n^{-1/2}$ to create the $n^{-1/2}$-local alternatives. There are $7,24,$ and $40$ elements in $\mathcal{M}_{J,n}(\Omega)$ for $J=2,4$ and $10,$ respectively. We omit their description for brevity.

\par As the MNRPs of the tests can differ in finite-samples, the simulation results on power comparisons are based on a MNRP correction that is similar to the one employed by~\cite{andrews2012inference}. For each test statistic $S$, the MNRP correction of the CMS, GMS and RMS procedures is to add a constant based on the true matrix $\Omega$ to their corresponding critical values, so that their resulting MNRPs match that of the RSW testing procedure with nominal level $\alpha=0.05;$ see Section~\ref{scdetails} in the appendix for the details. The simulation studies in~\cite{andrews2012inference} and~\cite{romano2014practical} compare tests under the alternative using average MNRP-corrected power, where the average is computed over alternative $\mu$ vectors in $\mathcal{M}_J(\Omega).$ We report simulation results graphically using boxplots of the MNRP-corrected \emph{local} powers over sets of $\mu$ vectors $\mathcal{M}_{J,n}(\Omega)$ for the 54 different combinations of $(J,\Omega,S,n)$ for each of the CMS, GMS and RSW procedures, and 9 different combination of $(J,\Omega,S_{2A},n)$ for the recommended RMS test. Additionally, we report average MNRP-corrected local powers of the different tests across the aforementioned configurations using the symbol $\oplus$ in these plots.

\par While the average MNRP-corrected power is a useful criterion for comparing tests across $\mu$ vectors in a given set $\mathcal{M}_{J,n}(\Omega)$, it does not convey the whole picture of the tests' performance over elements in $\mathcal{M}_{J,n}(\Omega).$ Reporting boxplots, as we do, reveals the variation in powers of the tests across elements in $\mathcal{M}_{J,n}(\Omega);$ thus, presenting a broader and more extensive approach to comparing the tests under the alternative. These plots are especially useful for detecting differences in the performances of tests when the averages of their MNRP-corrected powers are close, but exhibit different distributional variations in MNRP-corrected power across $\mu$ vectors in $\mathcal{M}_{J,n}(\Omega).$

\subsection{Maximum Null Rejection Probabilities}
\par As in~\cite{andrews2010inference},~\cite{andrews2012inference}, and~\cite{romano2014practical}, empirical MNRPs are simulated as the maximum rejection probability over all $\mu$ vectors whose components are $0$ and $+\infty,$ with at least one component equal to zero. Table~\ref{Table - MNRPs} reports the MNRPs for tests. Each experiment used 10000 Monte Carlo replications when $J \in \{2,4\}$ and 2500 when $J=10$.
\begin{table}[pt]
  \caption{Empirical Maximum Null Rejection Probabilities}\label{Table - MNRPs}

\centering
\resizebox{16.5cm}{!}{
  \begin{tabular}{lccccccccccccc}
  \toprule
  & & \multicolumn{3}{r}{$J=2$} & & \multicolumn{3}{r}{$J=4$} & & \multicolumn{3}{r}{$J=10$} \smallskip\smallskip \\ \cline{4-6} \cline{8-10} \cline{12-14}
$n$ & Procedure & Statistic & $\Omega_{\text{Neg}}$ & $\Omega_{\text{Zero}}$ & $\Omega_{\text{Pos}}$ & & $\Omega_{\text{Neg}}$ & $\Omega_{\text{Zero}}$ & $\Omega_{\text{Pos}}$ & & $\Omega_{\text{Neg}}$ & $\Omega_{\text{Zero}}$ & $\Omega_{\text{Pos}}$ \smallskip\smallskip  \\
 \midrule

50   &  GMS  & $S_{1}$ &         0.06         &    0.052             &     0.052      & &   0.065               &       0.054             &         0.053      & & 0.078  & 0.054 & 0.053   \smallskip\smallskip \\
   &    & $S_{2A}$ &    0.07               &   0.052                 &  0.052         & &     0.077                 &      0.054              &      0.058         & & 0.083 & 0.054 & 0.067  \smallskip\smallskip \\
     &  CMS   &  $S_{1}$ &     0.053              &      0.052            &      0.052  & &        0.052             &       0.054          & 0.053   & & 0.062 & 0.054 & 0.053   \smallskip\smallskip \\
     &    &  $S_{2A}$ &        0.053            &       0.052          &   0.053     & &      0.054               &            0.054          &  0.058  & & 0.061 &  0.054 & 0.067 \smallskip\smallskip \\
     &  RSW   & $S_{1}$  &    0.047    & 0.047 & 0.047 & &0.047 &0.047 & 0.047 & &0.054 &0.047 & 0.047
      \smallskip\smallskip \\
           &    & $S_{2A}$  &  0.047	& 0.046 &	0.047 & &	0.047 &	0.047 &	0.047 & &	0.054	& 0.047 & 0.047	
      \smallskip\smallskip \\
           &  RMS   & $S_{2A}$ & 0.053 &	0.052 &	0.056 & &	0.049 &	0.050 &	0.048 & & 0.048 &0.050 & 0.047
      \smallskip\smallskip \\
\midrule

100   &  GMS  & $S_{1}$ &  0.056	& 0.052 &	0.051 & & 	0.059 &	0.052 &	0.051 & & 	0.070	& 0.052 &	0.058  \smallskip\smallskip \\
   &    & $S_{2A}$ &   0.063	&0.052 &	0.052& &	0.067 &	0.050	& 0.055 & &	0.072	& 0.052	&  0.062 \smallskip\smallskip \\
     &  CMS   &  $S_{1}$ &  0.051&	0.052 &	0.051 && 	0.052& 	0.052 &	0.051 & &	0.061	 &0.052	 &0.058   \smallskip\smallskip \\
     &    &  $S_{2A}$ & 0.051&	0.053	&0.052	& &0.053&	0.050&	0.055 & &	0.062	&0.052	& 0.062 \smallskip\smallskip \\
     &  RSW   & $S_{1}$  & 0.047	&0.048&	0.046	&&0.046&	0.045	&0.046& &	0.055&	0.048&	0.050
      \smallskip\smallskip \\
           &    & $S_{2A}$  &  0.047	&0.048&	0.046 & &	0.046	&0.045&	0.046 & &0.056	&	0.048	& 0.051
      \smallskip\smallskip \\
           &  RMS   & $S_{2A}$ &   0.051	& 0.052	 & 0.053 &	& 0.048	 & 0.046	&0.046	& & 0.052 &	0.042	& 0.043
      \smallskip\smallskip \\
\midrule
250   &  GMS  & $S_{1}$ &  0.049 &	0.049	 & 0.051& &	0.053	&0.052 &	0.052 & &	0.053	& 0.058	 & 0.057  \smallskip\smallskip \\
   &    & $S_{2A}$ &       	0.052 &	0.049	&0.051 & &	0.057&	0.052&	0.054& &	0.057 &0.055	& 0.059 \smallskip\smallskip \\
     &  CMS   &  $S_{1}$ &    0.049 &	0.049	& 0.051 & &	0.051& 	0.052	& 0.052 & &	0.051 &	0.058 &	0.057  \smallskip\smallskip \\
     &    &  $S_{2A}$ &  0.049	 & 0.049	 &0.051	 & &0.052&	0.052&	0.054 & &	0.052 & 0.055	& 0.059 \smallskip\smallskip \\
     &  RSW   & $S_{1}$  &     0.044 &	0.043 &	0.046 & &	0.046 &	0.047 &	0.048 & &	0.046	& 0.049	& 0.052     \smallskip\smallskip \\
           &    & $S_{2A}$  &    	0.044	 & 0.043 &	0.046 & &	0.046 &	0.047 &	0.047	 & & 0.046 & 	0.049	 & 0.052
      \smallskip\smallskip \\
           &  RMS   & $S_{2A}$ &  0.049 &	0.049 & 	0.051	 &  & 0.046 &	0.048 & 	0.048	& & 0.046 &	0.046 &	0.045
      \smallskip\smallskip \\
\bottomrule

    \end{tabular}
    }
\end{table}

\par Overall, the procedures achieve a satisfactory performance for all cases considered. The RMS and RSW tests perform the best, as their MNRPs are closest to the 5\% nominal level across all of the cases considered. For the RSW procedure: the MNRPs fall within the ranges [.043,.056] and [.043,.055] when using $S_{2A}$ and $S_1$ test statistics, respectively. For the RMS test: the MNRPs fall into the range [.042,.053]. The CMS tests over-reject the null slightly: the MNRPs fall within the ranges [.049,.067] and [.049,.061] when using $S_{2A}$ and $S_1$ test statistics, respectively. The tables also show CMS tests have better MNRPs than their GMS versions as the latter tend to over-reject more: the GMS MNRPs fall within the ranges [.049,.083] and [.049,.078] for $S_{2A}$ and $S_1$, respectively. The largest MNRPs arise in the configurations where $\Omega=\Omega_{\text{Neg}},$ and these MNRPs increase with larger $J$, for the CMS, GMS and RSW tests. However, the MNRPs of all of these tests do get closer to the 5\% nominal level with larger sample sizes, across all configurations, and for CMS tests, this numerical result is a consequence of Theorem~\ref{asymptoticsize}.

\par While we don't have a theoretical result on improved size control of CMS tests over their GMS versions, Table~\ref{Table - MNRPs} provides simulation-based evidence of such an improvement. Hence, these results point to the potential benefit of implementing the information~(\ref{eq - moment inequality}), as we do, in two step testing procedures, under the null. The next section presents simulation results on MNRP-corrected power of these tests, under local alternatives, and illustrates the result of Theorem~\ref{LPorder}.

\subsection{Local Power}
\par Figures~\ref{Figure - Power QLR} and~\ref{Figure - Power MMM} below report boxplots of the MNRP-corrected powers of the tests under $S_{2A}$ and $S_1$, respectively. The results can be summarised as follows. For each test statistic, the MNRP-corrected power values of the tests are generally distributed in a similar way in configurations where $\Omega=\Omega_{\text{Neg}},$ and all of the tests have comparable average powers in those configurations. By contrast, in configurations where $\Omega=\Omega_{\text{Zero}}$, for each test statistic, the boxplots show the RSW tests' MNRP-corrected power values tend to be (i) more dispersed (as shown by the lengths of their boxes), (ii) have a wider overall range, and (iii) have lower average power in comparison to the remaining tests, which all behave similarly as can be seen by their boxplots. For example, the average power of the RSW test when $S=S_{2A}$, $J=10$, and $n=250$ is approximately equal to 0.57, while the averages of the remaining procedures in that scenario are all approximately equal to 0.66, which is a large difference.

\par More noticeable differences in the tests' performance arise in configurations where $\Omega=\Omega_{\text{Pos}}.$ For each test statistic, there is evidence for the following ranking in terms of average MNRP-corrected power, uniformly in $J$ and $n$: CMS in first place, GMS in second place, and RSW in third place, with the RMS test tied in first place with the CMS-$S_{2A}$ test. The boxplots also show:
\begin{itemize}
\item The MNRP-corrected power values for the RMS and CMS-$S_{2A}$ tests are generally distributed in a similar way for each $n$ and $J$, except when $J=2$ the CMS-$S_{2A}$ power values are slightly more dispersed (as shown by the length of the boxes) than their RMS counterparts for each $n$.
\item For each $S$, the MNRP-corrected power values of CMS tests are markedly less dispersed and have smaller overall ranges than their GMS and RSW counterparts.

\item The difference among the CMS, GMS and RSW tests in these configurations with $S=S_1$ can be strikingly large in terms of average power; for example, with $J=4$ and $n=250,$ the average powers of CMS, GMS and RSW tests are approximately equal to 0.75, 0.65, and 0.60, respectively. By contrast, with $S=S_{2A}$ the difference among these tests is less pronounced, which is on account of using a more effective test statistic. For example, in the aforementioned configuration, the average powers of CMS, GMS and RSW tests are approximately equal to 0.76, 0.74, and 0.73, respectively. However, this less pronounced difference in average powers does not mean that these procedures behave similarly, as evidenced by the radically different boxplots of the tests' power values.
\end{itemize}

\begin{landscape}
 \begin{figure}
\centering
\includegraphics[height=16cm, width=18cm]{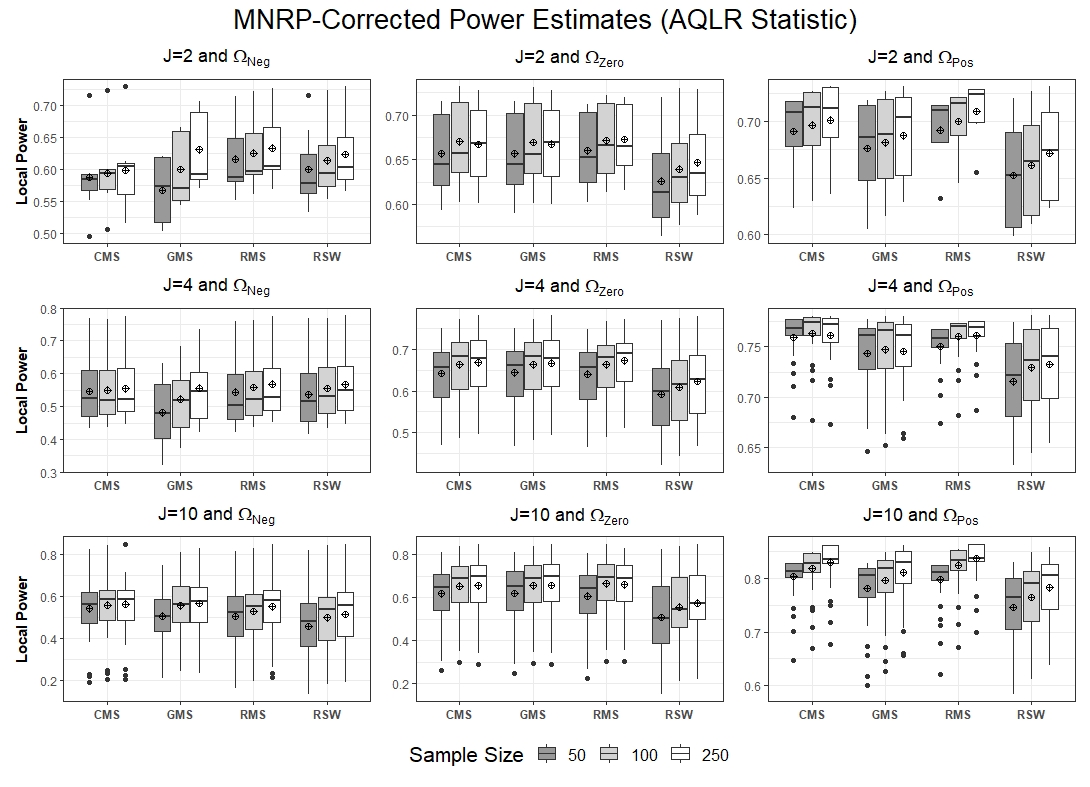}
\caption{Boxplots of MNRP-corrected powers of $S_{2A}$-based tests. For each configuration, the symbol $\oplus$ marks the location of the average MNRP-corrected power of a test. }\label{Figure - Power QLR}
\end{figure}
\end{landscape}

\begin{landscape}
 \begin{figure}
\centering
\includegraphics[height=16cm, width=18cm]{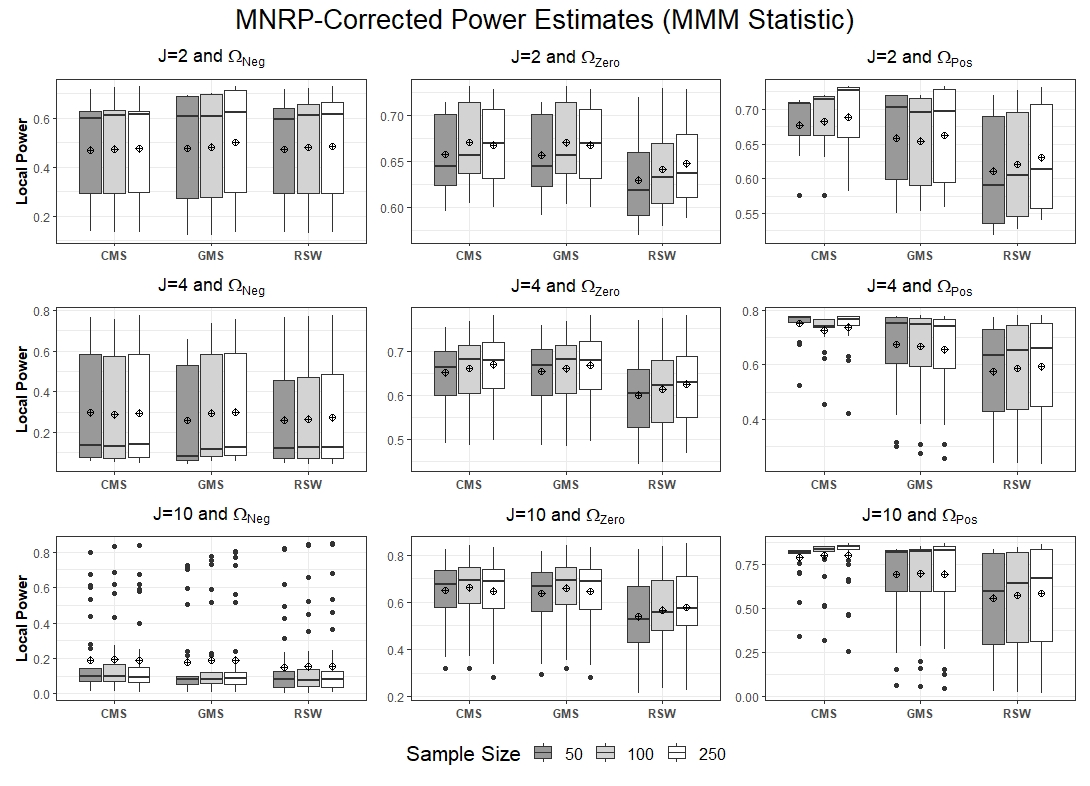}
\caption{Boxplots of MNRP-corrected powers of $S_{1}$-based tests. For each configuration, the symbol $\oplus$ marks the location of the average MNRP-corrected power of a test. }\label{Figure - Power MMM}
\end{figure}
\end{landscape}

\par The result of Theorem~\ref{LAthm} implies that the average power of CMS and GMS tests should get closer together with larger sample sizes. The simulations reflect this implication across all configurations, but indicate that it happens slowly when $\Omega=\Omega_{\text{Pos}}.$ Consequently, there is simulation-based evidence that shows the implementation of the information~(\ref{Moment inequality restrictions S3}), as we do with CMS, may not improve the local power of GMS tests for configurations in which $\Omega\neq\Omega_{\text{Pos}}.$ The reason is that the boxplots for MNRP-corrected power values of CMS and GMS tests are generally quite similar in those configurations. By contrast, the result of Theorem~\ref{LPorder} points to such an improvement in local power for configurations in which $\Omega=\Omega_{\text{Pos}}$, and this result is reflected in the simulations as described above.

\par Table~\ref{Table - LP Pos Omega} reports the average MNRP-corrected powers of the tests when $\Omega=\Omega_{\text{Pos}}$ and we use these results to further contextualize the local power improvement associated with CMS tests over GMS and RSW tests. We benchmark our analysis to RMS because simulation evidence in~\cite{andrews2012inference} suggests that it is superior in terms of asymptotic average power and is therefore the recommended test. The CMS-$S_{2A}$ and RMS tests are neck and neck as their average powers are essentially identical and achieve the highest average powers in all of those scenarios, with the CMS-$S_1$ test having slightly lower average powers than those tests. For a given $S$, the RSW tests are the worst performing, as they achieve the lowest average powers in each corresponding scenario, and the difference between them and the RMS test can be quite large. For example, when $J=10$ and $n=250$, the difference between RSW-$S_1$ and RMS is 0.252, and with RSW-$S_{2A}$ it is 0.055 which is a much smaller on account of using a more effective test statistic. The CMS-$S_1$ test dominates the GMS-$S_{1}$ test in each of those scenarios, where the difference can be as large as 10 percentage points -- see the scenarios with $J=10$. Consequently, the importance of incorporating the statistical information from the constraints, as we do with CMS, picks up the difference in average powers between the RMS and GMS test when $S=S_{2A}$ and most of the difference when $S=S_1,$ in each of the those scenarios.

\begin{table}[pt]
  \caption{MNRP-Corrected Average Powers: $\Omega = \Omega_{pos}$}\label{Table - LP Pos Omega}
     
\centering
\resizebox{13cm}{!}{
  \begin{tabular}{lccccccccc}
  \toprule
$J$ & $n$ & & GMS-$S_{1}$ & GMS-$S_{2A}$ & CMS-$S_{1}$ & CMS-$S_{2A}$ & RSW-$S_{1}$ & RSW-$S_{2A}$ & RMS \smallskip\smallskip  \\
 \midrule
& $50$ & & 0.658 &	0.676	&0.677	&0.691	&0.611	&0.652	&0.692  \smallskip\smallskip \\
2 & $100$ & & 0.654 &	0.681&	0.682	&0.697	&0.620&	0.661&	0.700 \smallskip\smallskip \\
 & $250$ & &0.662&	0.687	&0.689	&0.701&	0.630	&0.672&	0.709\smallskip\smallskip \\
 \midrule
 & $50$ & & 0.675	&0.743	&0.752&	0.759	&0.575	&0.715&	0.750\smallskip\smallskip \\
4 & $100$ & & 0.667	&0.747	&0.728&0.763&	0.587&	0.729&	0.760 \smallskip\smallskip \\
 & $250$ & & 0.656	&0.746&	0.738&	0.761&	0.593	&0.733&	0.761\smallskip\smallskip \\
 \midrule
 & $50$ & & 0.692	&0.782	&0.787&	0.803&	0.557	&0.746	&0.797 \smallskip\smallskip \\
10 & $100$ & & 0.696&	0.796&	0.799	&0.819&	0.573	&0.765&	0.825\smallskip\smallskip \\
 & $250$ & & 0.695	&0.810	&0.802	&0.830	&0.585	&0.782&	0.837\smallskip\smallskip \\
 \bottomrule

    \end{tabular}
    }
\end{table}

\par While the focus above has been on average power, for individual $\mu$ vectors the power differences can be massive with $\Omega=\Omega_{\text{Pos}}$. Consider, for example, the element $\mu/\sqrt{n}\in\mathcal{M}_{4,n}(\Omega_{\text{Pos}})$ with $\mu=(-2.4705,1,1,1)^{\intercal}$. This mean vector is an example of an SNVI local alternative. Table \ref{Table - Illustration Output} reports the MNRP-corrected power estimates for the tests under this local alternative for $n=50,100,250$. The estimates indicate that:
\begin{table}[pt]
  \caption{MNRP-Corrected Power: $\mu/\sqrt{n}\in\mathcal{M}_{4,n}(\Omega_{\text{Pos}})$ with $\mu=(-2.4705,1,1,1)^{\intercal}$.} \label{Table - Illustration Output}
  \smallskip   
\centering
\resizebox{13cm}{!}{
  \begin{tabular}{lccccccccc}
  \toprule
$J$ & $n$ & & GMS-$S_{1}$ & GMS-$S_{2A}$ & CMS-$S_{1}$ & CMS-$S_{2A}$ & RSW-$S_{1}$ & RSW-$S_{2A}$ & RMS \smallskip\smallskip  \\
 \midrule
  & $50$ &   & 0.3        &	0.668	     & 0.684	   & 0.733	      & 0.237	    & 0.633	       & 0.734  \smallskip\smallskip \\
4 & $100$ &  & 0.275      &	0.663        & 0.625	   & 0.726	      & 0.236       & 0.645        & 0.74 \smallskip\smallskip \\
  & $250$ &  & 0.256      &	0.665	     & 0.616	   & 0.718        &	0.234	    & 0.654        & 0.744\smallskip\smallskip \\
\bottomrule

    \end{tabular}
    }
\end{table}
 \begin{figure}[H]
\centering
\includegraphics[height=6.5cm, width=15cm]{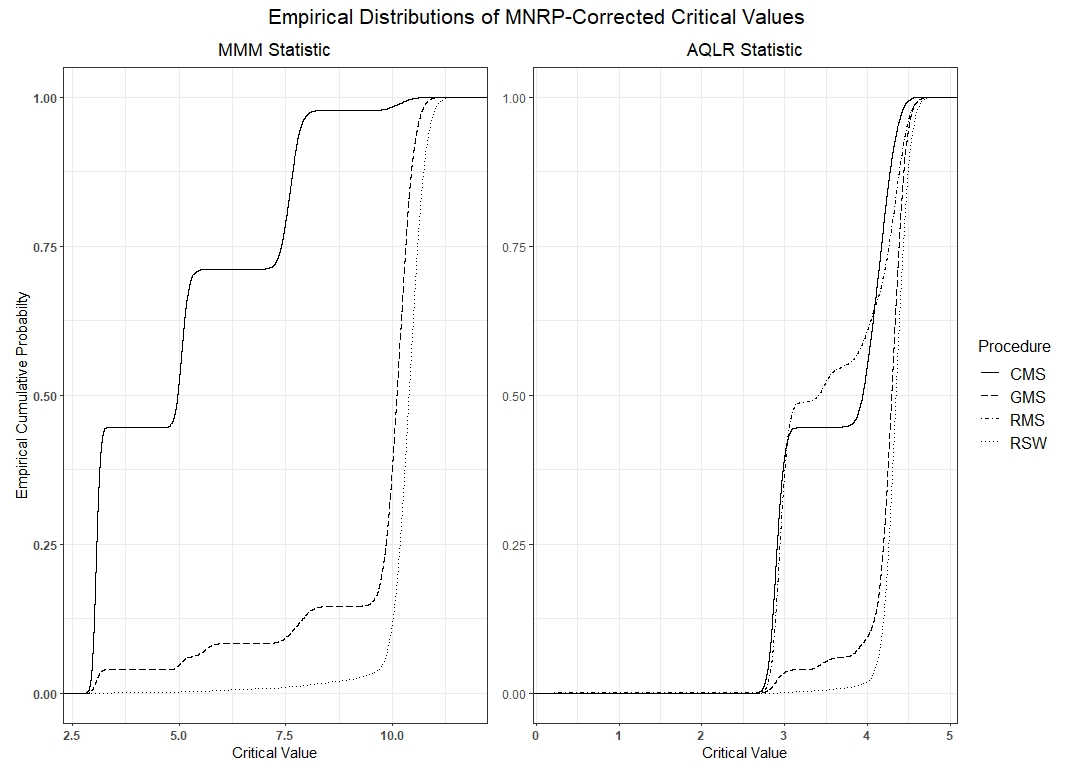}
\caption{ECDFs of MNRP-corrected CMS (solid line), GMS (dashed line), RMS (dash-dot line), and RSW (dotted line) critical values using the $S_1$ (MMM) and $S_{2A}$ (AQLR) test statistics.}\label{Figure - ECDFs_new}
\end{figure}

\begin{itemize}
\item There can be \textit{extremely} large power improvements associated with CMS relative to RSW and GMS when $S=S_{1}$. Indeed, the improvement in power of CMS over GMS is approximately 36 percentage points and 40 percentages points over RSW.
\item The improvements persist with $S=S_{2A}$, but are not as large. The AQLR statistic results in CMS experiencing a six percentage point improvement over GMS and eight percentages point improvement over RSW. In absolute terms, all procedures experience higher local power with $S=S_{2A}$.
\item The MNRP-corrected powers of CMS-$S_{2A}$ are comparable to their RMS counterparts.
\end{itemize}

 To gain a deeper insight into the behavior of the tests under this local alternative, Figure \ref{Figure - ECDFs_new} reports the empirical distribution functions (ECDFs) of the MNRP-corrected critical values for $n=250.$ The focus on this sample size is without loss of generality as similar graphs of the critical values' ECDFs arise in all of the other values of $n$ we considered. For either test statistic, the ECDFs in Figure~\ref{Figure - ECDFs_new} show strong evidence of a first-order stochastic dominance ranking among the critical values of the CMS, GMS, and RSW, tests. Specifically, for both types of test statistics, there is evidence for the ordering $\acute{c}_{n} \leq \hat{c}_{n} \leq \check{c}_{n}$, where $\check{c}_{n}$ denotes the RSW critical value. By contrast, the ECDF of the recommended RMS test crosses that of CMS with $S=S_{2A},$ which means that there isn't evidence of a clear ordering of their critical values. Overall, the differences between the ECDFs is quite striking and indicates that there is a big difference in the behavior of the tests even in moderately large sample sizes. The stochastic ordering of the CMS and GMS critical values is a reflection of Theorem \ref{result3} and provides evidence for local power improvements under SNVI local alternatives which have positively correlated moment functions. Finally, we discuss the behavior of the RSW procedure. The RSW procedure rejects on the event $\{M_{n}(\beta) \nsubseteq \mathbb{R}_{+}^{J}\}\bigcap \{S>\check{c}_{n}\}$, where $M_{n}(\beta)$ is a lower confidence rectangle that is used to detect ``positive'' moments in the first step of their two-step procedure (see Appendix \ref{RSWdetails}). Across the two test statistics, our simulations indicate that (i) the event $\{M_{n}(\beta) \nsubseteq \mathbb{R}_{+}^{J}\}$ occurs with empirical probability close to $1$, and (ii) in $9465$ times out of $10000$ Monte Carlo replications, their critical value $\check{c}_{n}$ corresponds to the case where none of the moment inequalities have been omitted from its calculation. These findings show the RSW procedure fails to reliably detect the ``positive'' moments in $\mu/\sqrt{n}$ in most of the $10000$ Monte Carlo replications, resulting in it having low empirical power.

\section{Conclusion}\label{Section - Conclusion}
\par This paper has proposed a surgical modification of the generalized moment selection (GMS) procedure put forward by~\cite{andrews2010inference} that improves its performance, called constrained moment selection (CMS). The basic idea of the CMS procedure is to use empirical likelihood to incorporate the information embedded in the moment inequality constraints into the moment selection step of the GMS procedure. Our analyses highlights the importance of using this information to more reliably detect the binding moments, which is the source of the improvement of CMS over GMS tests.

\par There are a number of directions for future research. Although we focus on modifying GMS tests, the intuition of incorporating the information embedded in the identified set transcends this choice and we conjecture that similar finite-sample benefits would arise in a similar modification to the two-step procedure of~\cite{romano2014practical}. There is also an emerging literature that focuses on testing with `many' moments, where the number of inequalities grow exponentially with the sample size (e.g.,~\citealp{chernozhukov2019inference}, and~\citealp{bai2019practical}). Extending the empirical likelihood modification to such testing procedures may improve their performance, but different theoretical tools must be employed to account for the increasing number of constraints. Finally, our paper is related to the semi-infinite programming empirical likelihood procedure proposed by~\cite{Lok-Tabri-inpress} for two-step bootstrap tests of stochastic dominance, where the continuum of unconditional moment inequalities is akin to inference for conditional moment inequalities. Their results are limited to restricted stochastic dominance tests and it would be interesting to extend the insights from this paper to the general conditional moment inequality models of~\cite{andrews2013inference, andrews2017inference}.

\section{Acknowledgements}
We are grateful to Jonathan Roth for providing valuable comments. We are also appreciative of feedback from participants at the Graduate Student Workshop in Econometrics, Harvard University. The computations in this paper were run on the FASRC Cannon cluster supported by the FAS Division of Science Research Computing Group at Harvard University. All errors are our own.
\bibliography{CMS}
\newpage
\appendix
\setcounter{page}{1}
\section{Outline}
This Appendix provides supplementary material to this paper. It is organized as follows.
\begin{itemize}
\item Section~\ref{Section - Test Stat Assump} lists the complete set of assumptions on the test statistic that~\cite{andrews2010inference} use in their work. We use these conditions in the proofs of the main results in the paper.
\item Section~\ref{Section - Proofs of Thms} presents the proofs of the results in the paper: Theorems~1, 2, 3, and 4.
\item Section~\ref{Section - Technical Lemmas for Thm 1} presents technical lemmas used in the proof of Theorem 1.
\item Section~\ref{Section Local Pwr Thms 2 3} presents technical lemmas used in the proofs of Theorems 2 and 3.
\item Section~\ref{RMSdetails} outlines the refined moment selection procedure of~\cite{andrews2012inference}.
\item Section~\ref{RSWdetails} outlines the two-step procedure of \cite{romano2014practical}.
\item Section~\ref{scdetails} details the MNRP corrections.
\end{itemize}
\section{Test Statistic Assumptions}\label{Section - Test Stat Assump}
\begin{assumption}\label{Test1}
\begin{enumerate}
\item Monotonicity: $S(g, \Sigma)$ is nonincreasing in $g$ for all $(g, \Sigma) \in \mathbb{R}^{J}\times \mathbb{R}^{J\times J}$.\label{T11}
\item  Invariance: $S(g, \Sigma) = S(Dg, D\Sigma D)$ for all $g \in \mathbb{R}^{J}$, $\Sigma \in \mathbb{R}^{J\times J}$ and positive definite diagonal matrix of $\Sigma$, $D \in \mathbb{R}^{J\times J}$.
\item Nonnegativity: $S(g, \Omega) \geq 0$ for all $(g, \Omega) \in \mathbb{R}^{J} \times \varPsi_{2}$.
\item Continuity: $S(g, \Omega)$ is a continuous function of $g \in \mathbb{R}^{J}$ and $\Omega \in \varPsi_{2}$.
\end{enumerate}
\end{assumption}
\begin{assumption}\label{T4}
For any $h_{1} \in \mathbb{R}_{+,\infty}^{J}$, $\Omega \in \varPsi_{2}$, $Z^{*} \sim N(0_{J}, I_{J})$, and $x \in \mathbb{R}$, the distribution function of $S(\Omega^{\frac{1}{2}}Z^{*} + h_{1}, \Omega)$ is 1. continuous at $x>0$, 2. strictly increasing in $x>0$ unless $h_{1}= [\infty,...,\infty]^{\top} \in \mathbb{R}_{+,\infty}^{J}$, and 3. does not exceed $1/2$ at $x=0$ when $h_{1} = 0_{J}$.
\end{assumption}
\begin{assumption}\label{T2}
A necessary and sufficient condition for $S(g, \Omega)>0$ is that there exists $j \in \mathcal{J}$ that satisfies $g_{j}<0$, where $g=(g_{1},...,g_{J})^{\top}$ and $\Omega \in \varPsi_{2}$.
\end{assumption}
\begin{assumption}\label{T5}
Let $Z^{*}\sim N(0_{J},I_{J})$, $\alpha \in (0,\frac{1}{2})$, and $c(\Omega, 1-\alpha)$ be the $(1-\alpha)$-quantile of the distribution of $S(\Omega^{\frac{1}{2}}Z^{*}, \Omega)$. We assume
\begin{enumerate}
\item The distribution function of $S(\Omega^{\frac{1}{2}}Z^{*}, \Omega)$ is continuous at $c(\Omega,1-\alpha)$ for all $\Omega \in \varPsi_{2}$.
\item $c(\Omega, 1-\alpha)$ is a uniformly continuous function of $\Omega \in \varPsi_{2}$.\footnote{We apply the definition of uniform continuity provided in \cite{rudin1976principles}. That is, a function $f: X \rightarrow Y$, where $(X,d_{X})$ and $(Y,d_{Y})$ are metric spaces, is uniformly continuous if $\forall \ \varepsilon > 0, \ \exists \ \delta:=\delta(\varepsilon)>0 \ s.t. \ \forall \ x,y \in X, \ d_{X}(x,y)<\delta \ \implies \ d_{Y}(f(x),f(y))< \varepsilon$.}
\end{enumerate}
\end{assumption}
\begin{assumption}\label{T6}
\begin{enumerate}
\item Let $v \in \mathbb{R}_{[+\infty]}^{J}$ and $\Omega \in \varPsi_{2}$ be arbitrary. The distribution function of $S(\Omega^{\frac{1}{2}}Z^{*}+v, \Omega)$ is a) continuous for $x > 0$ and is b) strictly increasing at $x>0$ unless $v = [\infty,...,\infty]^{\top}\in \mathbb{R}_{+,\infty}^{J}$.
\item For all $g_{1},g_{1}^{*} \in \mathbb{R}^{J}_{+,\infty}$ that satisfy $g_{1}^{*}\succ g_{1}$, we assume that $P(S(\Omega^{\frac{1}{2}}Z^{*}+g_{1}, \Omega)\leq x) < P(S(\Omega^{\frac{1}{2}}Z^{*}+g_{1}^{*}, \Omega)\leq x),$ where $x >0$.\footnote{The relation `$b\succ a$' means that every element in $a$ is less than or equal to every element in $b$ and the inequality holds strictly for each least one element.}
\end{enumerate}
\end{assumption}
\begin{assumption}
There exists $\chi>0$ such that for each $a\in \mathbb{R}_{++}$, $S(ag, \Omega)=a^{\chi}S(g,\Omega)$ for all $g \in \mathbb{R}^{J}$ and $\Omega \in \varPsi_{2}$.
\end{assumption}
\begin{assumption}\label{Assump test stat 7}
Let $h_{1,j}: \mathcal{F} \rightarrow \mathbb{R}_{+,\infty}$ given by $h_{1,j}(\theta,F) = \infty$ if $E_{F}(g_{j}(W_{i},\theta))>0$ and $h_{1,j}(\theta, F)=0$ if $E_{F}(g_{j}(W_{i},\theta))=0$, and define $h_{1}(\theta,F)=[h_{1,1}(\theta, F),...,h_{1,J}(\theta,F)]^{\top}$. Moreover, let $\Omega(\theta, F):= \lim_{n\rightarrow\infty}\text{Corr}_{F}(n^{\frac{1}{2}}\hat{g}_{n}(\theta))$. There exists $(\theta, F) \in \mathcal{F}$ such that the distribution of $$S(\Omega^{\frac{1}{2}}(\theta,F)Z^{*}+h_{1}(\theta,F), \Omega(\theta,F))$$ is continuous at its $1-\alpha$ quantile, where $Z^{*} \sim N(0_{J},I_{J})$.
\end{assumption}
\section{Proofs of Theorems}\label{Section - Proofs of Thms}
We introduce notation: $\hat{\varphi}_{n}(\theta):=\varphi\big(\hat{\xi}_{n}(\theta),\hat{\Omega}_{n}(\theta)\big)$, $\acute{\varphi}_{n}(\theta):=\varphi\big(\acute{\xi}_{n}(\theta),\hat{\Omega}_{n}(\theta)\big)$, $\hat{L}_{n}(\theta,Z^{*}):=S\big(\hat{\Omega}_{n}^{\frac{1}{2}}(\theta)Z^{*}+\varphi(\hat{\xi}_{n}(\theta), \hat{\Omega}_{n}(\theta)), \hat{\Omega}_{n}(\theta)\big)$, and $\hat{L}_{n}(\theta,Z^{*}):=S\big(\hat{\Omega}_{n}^{\frac{1}{2}}(\theta)Z^{*}+\varphi(\acute{\xi}_{n}(\theta), \hat{\Omega}_{n}(\theta)), \hat{\Omega}_{n}(\theta)\big)$ for each $\theta \in \Theta$. We are assuming that $\varphi = \varphi^{(1)}$; see the discussion in Section \ref{OtherGMS} for the general case.
\subsection{Theorem 1}\label{Appendix - proof Thm 1}

\begin{proof}
We present an outline of the proof and then the steps in detail.
\paragraph{Outline.} Lemma~\ref{null0} establishes the feasible set in the empirical likelihood optimisation problem~(\ref{eq - EL problem}) is non-empty with probability tending to one uniformly over $\mathcal{F}_{+}$. Consequently, the constrained estimator of the moments exists and is unique with probability tending to one uniformly over $\mathcal{F}_{+}$. With this technical result in mind, the proof has four steps. First, we show that $\{\acute{\varphi}_{n}(\theta) = \hat{\varphi}_{n}(\theta)\}$ occurs with probability approaching $1$ as $n\rightarrow+\infty$ with uniformity over $\mathcal{F}_{+}$. In the second step, we use the first result to show that for any $\alpha \in (0,\frac{1}{2})$ and any $r>0$, $\{|\acute{c}_{n}(\theta,1-\alpha)-\hat{c}_{n}(\theta,1-\alpha)|<r\}$ occurs with probability tending to $1$ as $n\rightarrow+\infty$, uniformly over $\mathcal{F}_{+}$. In the third step, we use step 2 to show that
\begin{align*}
\liminf_{n\rightarrow +\infty}\inf_{(\theta,F) \in \mathcal{F}_{+}}P_{F}\Big(T_{n}(\theta) \leq \acute{c}_{n}(\theta,1-\alpha)\Big) = \liminf_{n\rightarrow +\infty}\inf_{(\theta,F) \in \mathcal{F}_{+}}P_{F}\Big(T_{n}(\theta) \leq \hat{c}_{n}(\theta,1-\alpha)\Big) .
\end{align*}
In the final step, we prove all three statements in the theorem simultaneously by invoking Theorem 1 of \cite{andrews2010inference}.
\paragraph{Step 1.} The complement rule for probability measures implies that it suffices to show
\begin{align*}
\limsup_{n\rightarrow+\infty}\sup_{(\theta,F) \in \mathcal{F}_{+}}P_{F}\Big(\hat{\varphi}_{n}(\theta)\neq\acute{\varphi}_{n}(\theta)\Big)=0,
\end{align*}
which amounts to proving
\begin{align*}
\limsup_{n\rightarrow+\infty}\sup_{(\theta,F) \in \mathcal{F}_{+}}P_{F}\Big(\hat{\varphi}_{n,j}(\theta)\neq\acute{\varphi}_{n,j}(\theta)\Big)=0
\end{align*}
for each $j \in \{1,...,J\}$. Indeed, $\big\{\hat{\varphi}_{n}(\theta)\neq\acute{\varphi}_{n}(\theta)\big\} = \bigcup_{j=1}^{J}\big\{\hat{\varphi}_{n,j}(\theta)\neq\acute{\varphi}_{n,j}(\theta)\big\}$ implies
\begin{align*}
\limsup_{n\rightarrow+\infty}\sup_{(\theta,F) \in \mathcal{F}^{+}}P_{F}\bigg(\hat{\varphi}_{n}(\theta)\neq\acute{\varphi}_{n}(\theta)\bigg) \leq \sum_{j=1}^{J}\limsup_{n\rightarrow+\infty}\sup_{(\theta,F) \in \mathcal{F}^{+}}P_{F}\Big(\hat{\varphi}_{n,j}(\theta)\neq\acute{\varphi}_{n,j}(\theta)\Big)
\end{align*}
by the finite subadditivity of probability measures and basic properties of the supremum.
\paragraph{}
To this end, fix $j \in \{1,...,J\}$ arbitrarily. Recognizing that $\{\hat{\varphi}_{n,j}(\theta)\neq\acute{\varphi}_{n,j}(\theta)\}=\{\hat{\varphi}_{n,j}(\theta)>\acute{\varphi}_{n,j}(\theta)\}\bigcup\{\hat{\varphi}_{n,j}(\theta)<\acute{\varphi}_{n,j}(\theta)\}$, it follows that
\begin{align*}
P_{F}\Big(\hat{\varphi}_{n,j}(\theta)\neq\acute{\varphi}_{n,j}(\theta)\Big) &= P_{F}\Big(\hat{\varphi}_{n,j}(\theta)>\acute{\varphi}_{n,j}(\theta)\Big)+P_{F}\Big(\hat{\varphi}_{n,j}(\theta)<\acute{\varphi}_{n,j}(\theta)\Big) \\
&=P_{F}\Big(\hat{\xi}_{n,j}(\theta)>1, \ \acute{\xi}_{n,j}(\theta)\leq 1 \Big)  +P_{F}\Big( \hat{\xi}_{n,j}(\theta) \leq 1, \ \acute{\xi}_{n,j}(\theta)> 1 \Big) \\
&\leq P_{F}\Big(\hat{g}_{n,j}(\theta)> \acute{g}_{n,j}(\theta)\Big)+P_{F}\Big(\hat{g}_{n,j}(\theta) < \acute{g}_{n,j}(\theta)\Big) \\
&=P_{F}\Big(\hat{g}_{n,j}(\theta) \neq \acute{g}_{n,j}(\theta)\Big)
\end{align*}
where the second equality and the inequality hold by definition of $\varphi^{(1)}$. Lemma \ref{L2CS} then is invoked to establish that
\begin{align*}
\limsup_{n\rightarrow+\infty}\sup_{(\theta,F) \in \mathcal{F}_{+}}P_{F}\Big(\hat{g}_{n,j}(\theta)\neq \acute{g}_{n,j}(\theta)\Big)=0
\end{align*}
and therefore
\begin{align*}
\limsup_{n\rightarrow+\infty}\sup_{(\theta,F) \in \mathcal{F}_{+}}P_{F}\Big(\hat{\varphi}_{n,j}(\theta)\neq\acute{\varphi}_{n,j}(\theta)\Big)\leq \limsup_{n\rightarrow+\infty}\sup_{(\theta,F) \in \mathcal{F}_{+}}P_{F}\Big(\hat{g}_{n,j}(\theta)\neq \acute{g}_{n,j}(\theta)\Big)=0
\end{align*}
which completes the proof of Step 1.
\paragraph{Step 2.} We use step 1 to show that the event $\big\{\hat{c}_{n}(\theta,1-\alpha)\neq \acute{c}_{n}(\theta,1-\alpha)\big\}$ occurs with probability approaching $0$ as $n\rightarrow+\infty$, with uniformity over $\mathcal{F}_{+}$. This follows immediately from step 1 because
\begin{align*}
P_{F}\Big(\hat{c}_{n}(\theta,1-\alpha)\neq\acute{c}_{n}(\theta,1-\alpha)\Big)
\leq P_{F}\Big(\acute{\varphi}_{n}(\theta) \neq \hat{\varphi}_{n}(\theta)\Big)
\end{align*}
where the inequality holds because $\hat{L}_{n}(\theta,Z^{*})$ and $\acute{L}_{n}(\theta,Z^{*})$ only differ through the realization of the moment selection function a.s. $[Z^{*}]$. Step 1 and the squeeze rule then implies that $$\limsup_{n\rightarrow+\infty}\sup_{(\theta,F) \in \mathcal{F}_{+}}P_{F}(\hat{c}_{n}(\theta,1-\alpha) \neq \acute{c}_{n}(\theta,1-\alpha)) =0.$$
\paragraph{Step 3.} The result established in the second step allows us to conclude that $$\big\{T_{n}(\theta)\leq \acute{c}_{n}(\theta,1-\alpha)\big\} = \big\{T_{n}(\theta)\leq \hat{c}_{n}(\theta,1-\alpha)+o_{p}(1)\big\}$$ uniformly over $\mathcal{F}_{+}$. The uniformity implies that
\begin{align*}
\liminf_{n\rightarrow +\infty}\inf_{(\theta,F) \in \mathcal{F}_{+}}P_{F}\Big(T_{n}(\theta) \leq \acute{c}_{n}(\theta,1-\alpha)\Big) = \liminf_{n\rightarrow +\infty}\inf_{(\theta,F) \in \mathcal{F}_{+}}P_{F}\Big(T_{n}(\theta) \leq \hat{c}_{n}(\theta,1-\alpha)\Big).
\end{align*}
\paragraph{Step 4.} The previous step established that the asymptotic confidence sizes of GMS and CMS are equal. Combine this with the fact that $\mathcal{F}_{+}\subseteq \mathcal{F}$ and apply Theorem 1 in \cite{andrews2010inference} to conclude all three statements in the theorem simultaneously.
\end{proof}

\subsection{Theorem 2}\label{Appendix - proof Thm 2}
\begin{proof}
We present an outline of the proof and then the steps in detail.
\paragraph{Outline.} Lemma~\ref{LP0} establishes the feasible set in the empirical likelihood optimisation problem~(\ref{eq - EL problem}) is non-empty with probability tending to one under local alternatives that satisfy Assumption~LA1 and LA2, i.e., local alternatives in the set $\mathcal{H}$. Consequently, the constrained estimator of the moments exists and is unique with probability tending to one, under these local alternatives. With this technical result in mind, the proof has four steps and is similar to the proof of Theorem 1. First, we show $\{\acute{\varphi}_{n}(\theta_{n,*}) = \hat{\varphi}_{n}(\theta_{n,*})\}$ occurs with probability approaching $1$ as $n\rightarrow+\infty$ for any sequence $\{(\theta_{n,*},F_{n}): n \geq 1\}$. Next, we show that $\{\hat{c}_{n}(\theta_{n,*},1-\alpha)\neq \acute{c}_{n}(\theta_{n,*},1-\alpha)\}$ is an event that occurs with probability approaching $0$ as $n\rightarrow +\infty$ along $\{(\theta_{n,*},F_{n}): n \geq 1\}$. In the third step, we use the second step to conclude that $$\lim_{n\rightarrow+\infty}P_{F_{n}}\big(T_{n}(\theta_{n,*}) \leq \hat{c}_{n}(\theta_{n,*},1-\alpha)\big)=\lim_{n\rightarrow+\infty}P_{F_{n}}\big(T_{n}(\theta_{n,*}) \leq \acute{c}_{n}(\theta_{n,*},1-\alpha)\big)$$ for all sequences $\{(\theta_{n,*},F_{n}): n \geq 1\}$. In the fourth step, we invoke Part A of Theorem 2 in \cite{andrews2010inference} to establish the result.
\paragraph{Step 1.} Step 1 follows a similar line of reasoning to the same step in Theorem 1. We pick an arbitrary sequences of $n^{-\frac{1}{2}}$-local alternatives $\{(\theta_{n,*},F_{n}): n \geq 1\}$ and show that
\begin{align*}
\lim_{n\rightarrow+\infty}P_{F_{n}}\big(\hat{\varphi}_{n}(\theta_{n,*}) \neq \acute{\varphi}_{n}(\theta_{n,*})\big) = 0.
\end{align*}
To do this, we recognize that $\big\{\hat{\varphi}_{n}(\theta_{n,*}) \neq \acute{\varphi}_{n}(\theta_{n,*})\big\}=\bigcup_{j=1}^{J}\big\{\hat{\varphi}_{n,j}(\theta_{n,*}) \neq \acute{\varphi}_{n,j}(\theta_{n,*})\big\}$ and therefore that
\begin{align*}
P_{F_{n}}\Big(\hat{\varphi}_{n}(\theta_{n,*}) \neq \acute{\varphi}_{n}(\theta_{n,*})\Big)&\leq\sum_{j=1}^{J}P_{F_{n}}\Big(\hat{\varphi}_{n,j}(\theta_{n,*}) \neq \acute{\varphi}_{n,j}(\theta_{n,*})\Big) \\
&\leq \sum_{j=1}^{J}P_{F_{n}}\Big(\hat{g}_{n,j}(\theta_{n,*}) \neq \acute{g}_{n,j}(\theta_{n,*})\Big)
\end{align*}
using identical reasoning to the corresponding result in the proof of Theorem 1, except replace $\theta$ with $\theta_{n,*}$ and $F$ with $F_{n,*}$. It follows then that
\begin{align*}
\lim_{n\rightarrow+\infty}P_{F_{n}}\Big(\hat{\varphi}_{n}(\theta_{n,*}) \neq \acute{\varphi}_{n}(\theta_{n,*})\Big)\leq \sum_{j=1}^{J}\lim_{n\rightarrow+\infty}P_{F_{n}}\Big(\hat{g}_{n,j}(\theta_{n,*}) \neq \acute{g}_{n,j}(\theta_{n,*})\Big)=0
\end{align*}
where the second equality holds by Lemma \ref{LP4}.
\paragraph{Step 2.} The proof of step 2 is almost identical to step 2 in Theorem 1. We use the exact same reasoning as Step 2 of Theorem 1 to conclude that
\begin{align*}
P_{F_{n}}\Big(\acute{c}_{n}(\theta_{n,*},1-\alpha)\neq \hat{c}_{n}(\theta_{n,*},1-\alpha)\Big) \leq P_{F_{n}}\Big(\acute{\varphi}_{n}(\theta_{n,*}) \neq \hat{\varphi}_{n}(\theta_{n,*})\Big) \quad \forall \ n \geq 1
\end{align*}
and therefore that $\lim_{n\rightarrow+\infty}P_{F_{n}}\Big(\acute{c}_{n}(\theta_{n,*},1-\alpha)\neq \hat{c}_{n}(\theta_{n,*},1-\alpha)\Big)=0$ following step 1.
\paragraph{Step 3.} The result established in the second step allows us to conclude that $$\big\{T_{n}(\theta_{n,*})\leq \acute{c}_{n}(\theta_{n,*},1-\alpha)\big\} = \big\{T_{n}(\theta_{n,*})\leq \hat{c}_{n}(\theta_{n,*},1-\alpha)+o_{p}(1)\big\}$$ along any sequence $\{(\theta_{n,*},F_{n}) : n \geq 1\}$. As such,
\begin{align*}
\lim_{n\rightarrow +\infty}P_{F_{n}}\Big(T_{n}(\theta_{n,*}) \leq \acute{c}_{n}(\theta_{n,*},1-\alpha)\Big) = \lim_{n\rightarrow +\infty}P_{F_{n}}\Big(T_{n}(\theta_{n,*}) \leq \hat{c}_{n}(\theta_{n,*},1-\alpha)\Big)
\end{align*}
for all sequences $\{(\theta_{n,*},F_{n}): n \geq 1\}$.
\paragraph{Step 4.} The previous step established that the $n^{-\frac{1}{2}}$-local power functions of GMS and CMS are equivalent to first order. We can then apply Part A of Theorem 2 in \cite{andrews2010inference} to conclude the theorem.
\end{proof}

\subsection{Theorem 3}\label{Appendix - proof Thm 3 and Prop 1}
For the proof of Theorem 3, we let $A_{n,\alpha}^{*}$ denote the event
\begin{align*}
\bigg\{\acute{\Upsilon}_{n}(\theta_{n,*})\subsetneq \hat{\Upsilon}_{n}(\theta_{n,*})\bigg\} \bigcap \bigg\{\hat{c}_{n}(\theta_{n,*},1-\alpha)>0\bigg\} \bigcap \bigg\{\acute{c}_{n}(\theta_{n,*},1-\alpha)<T_{n}(\theta_{n,*}) \leq \hat{c}_{n}(\theta_{n,*},1-\alpha)\bigg\}.
\end{align*}
We also let $\{W_{i,n}: i \leq n\}$ denote the $n$th row of the triangular array induced by $\{(\theta_{n,*},F_{n}): n \geq 1\}$.
\subsubsection{Proof of Theorem 3}
\begin{proof}
We outline the argument and then prove the result in detail.
\paragraph{Outline.} Lemma~\ref{LP0} establishes the feasible set in the empirical likelihood optimisation problem~(\ref{eq - EL problem}) is non-empty with probability tending to one under local alternatives that satisfy Assumption~LA1 and LA2, i.e., local alternatives in the set $\mathcal{H}$. Consequently, the constrained estimator of the moments exists and is unique with probability tending to one, under these local alternatives. With this technical result in mind, the proof has three steps. First, we show $\big\{\acute{c}_{n}(\theta_{n,*},1-\alpha) \leq \hat{c}_{n}(\theta_{n,*},1-\alpha)\big\}$ occurs with probability tending to $1$ along any sequence $\{(\theta_{n,*},F_{n}): n \geq 1\} \in \mathcal{M}$. This allows us to conclude the first part of the theorem. In the second step, we show that the event $A_{n,\alpha}^{*}$ implies that $\big\{\acute{c}_{n}(\theta_{n,*},1-\alpha) < \hat{c}_{n}(\theta_{n,*},1-\alpha)\big\}$. In the final step, we conclude the strict ordering of the rejection probabilities.
\paragraph{Step 1.} Let $\{(\theta_{n,*}, F_{n}): n \geq 1\}\in \mathcal{M}$. Lemma \ref{GMSorder} states $\bigcap_{j=1}^{J}\big\{\acute{\varphi}_{n,j}(\theta_{n,*}) \geq \hat{\varphi}_{n,j}(\theta_{n,*})\big\}$
with probability approaching 1 along $\{(\theta_{n,*},F_{n}): n \geq 1\}$. It follows from Part 1 of Assumption 1 that $\big\{\acute{L}_{n}(\theta_{n,*},Z^{*}) \leq \hat{L}_{n}(\theta_{n,*},Z^{*}) \ a.s. \ [Z^{*}]\big\}$ with probability approaching $1$ under $\{(\theta_{n,*},F_{n}): n \geq 1\}$. Consequently, $\big\{\acute{c}_{n}(\theta_{n,*}, 1-\alpha) \leq \hat{c}_{n}(\theta_{n,*},1-\alpha)\big\}$ occurs with probability approaching 1 along $\{(\theta_{n,*},F_{n}): n \geq 1\}$. Thus, there exists $N(\theta_{n,*}, F_{n})\geq 1$ such that $P_{F_{n}}(T_{n}(\theta_{n,*}) > \hat{c}_{n}(\theta_{n,*},1-\alpha)) \leq P_{F_{n}}(T_{n}(\theta_{n,*}) > \acute{c}_{n}(\theta_{n,*}, 1-\alpha))$ for all $n \geq N(\theta_{n,*}, F_{n})$.
\paragraph{Step 2.} The event $A_{n,\alpha}^{*}$ implies the event $\big\{\acute{\Upsilon}_{n}(\theta_{n,*}) \subsetneq \hat{\Upsilon}_{n}(\theta_{n,*})\big\}\bigcap \big\{\hat{c}_{n}(\theta_{n,*},1-\alpha)>0\big\}$, which allows us to apply Part 1 of Assumption 2 and Part 2 of Assumption 5 to deduce that $1-\alpha=P\Big(\hat{L}_{n}(\theta_{n,*},Z^{*}) \leq \hat{c}_{n}(\theta_{n,*},1-\alpha)\Big)
<P\Big(\acute{L}_{n}(\theta_{n,*},Z^{*}) \leq \hat{c}_{n}(\theta_{n,*},1-\alpha)\Big) \ a.s. \ \big[\{W_{i,n}: i\leq n \}\big]$. Applying Part 1 of Assumption 2 again, we conclude that $A_{n,\alpha}^{*}\subseteq\big\{\acute{c}_{n}(\theta_{n,*},1-\alpha)<\hat{c}(\theta_{n,*},1-\alpha)\big\}$. This completes step 2.
\paragraph{Step 3.} Since $A_{n,\alpha}^{*}\subseteq \big\{\acute{c}_{n}(\theta_{n,*},1-\alpha)<T_{n}(\theta_{n,*}) \leq \hat{c}_{n}(\theta_{n,*},1-\alpha)\big\}$ by construction, we use Step 2 to deduce $A_{n,\alpha}^{*} \subseteq \big\{\acute{c}_{n}(\theta_{n,*},1-\alpha)<\hat{c}_{n}(1-\alpha,\theta_{n,*})\big\}\bigcap \big\{\acute{c}_{n}(\theta_{n,*},1-\alpha)<T_{n}(\theta_{n,*}) \leq \hat{c}_{n}(\theta_{n,*},1-\alpha)\big\}.$ Consequently, if $P_{F_{n}}(A_{n,\alpha}^{*})>0$ then the proof is complete because
\begin{align*}
& P_{F_{n}}(T_{n}(\theta_{n,*})>\acute{c}_{n}(\theta_{n,*},1-\alpha))-P_{F_{n}}(T_{n}(\theta_{n,*})>\hat{c}_{n}(\theta_{n,*},1-\alpha)) \\
&\quad = P_{F_{n}}\Big(\big\{\acute{c}_{n}(\theta_{n,*},1-\alpha)<\hat{c}_{n}(1-\alpha,\theta_{n,*})\big\}\bigcap \big\{\acute{c}_{n}(\theta_{n,*},1-\alpha)<T_{n}(\theta_{n,*}) \leq \hat{c}_{n}(\theta_{n,*},1-\alpha)\big\}\Big) \\
&\quad \geq P_{F_{n}}(A_{n,\alpha}^{*})
\end{align*}
where the inequality uses monotonicity of probability measures.
\end{proof}
\subsection{Theorem 4}\label{Appendix - proof Thm 4}
In the proof of Theorem 4, we use the notation $\nu_{n}(\theta_{n,*}) := D_{n}^{-\frac{1}{2}}(\theta_{n,*})n^{\frac{1}{2}}\big(\hat{g}_{n}(\theta_{n,*})- E_{F_{n}}g(W_{i},\theta_{n,*})\big)$.
\begin{proof}
Our approach is based on the proof for the corresponding result in \cite{andrews2010inference}. For ease of exposition, we outline the proof  and then provide the details.
\paragraph{Outline. } For $\{w_{n}: n \geq 1\}$ any subsequence of $\{n\}$, it suffices to show that there exists a further subsequence $\{u_{n}: n \geq 1\}$ such that $\lim_{n\rightarrow\infty}P_{F_{u_{n}}}\big(T_{u_{n}}(\theta_{u_{n},*})>\acute{c}_{u_{n}}(\theta_{u_{n},*},1-\alpha)\big)=1$. In Step 1, we define the sub-subsequence. In Step 2, we show that $(u_{n}^{\frac{1}{2}}\upsilon_{u_{n}})^{-\chi}T_{u_{n}}(\theta_{u_{n},*})$ has a positive probability limit, where $\chi>0$ is arbitrary. In Step 3, we show that the probability limit of $(u_{n}^{\frac{1}{2}}\upsilon_{u_{n}})^{-\chi}\acute{c}_{u_{n}}(\theta_{u_{n},*},1-\alpha)$ is zero. In the final step, we use Step 2 and Step 3 to establish that $\lim_{n\rightarrow\infty}P_{F_{u_{n}}}\big(T_{u_{n}}(\theta_{u_{n},*})>\acute{c}_{u_{n}}(\theta_{u_{n},*},1-\alpha)\big)=1$.
\paragraph{Step 1.} Consider any subsequence $\{w_{n}: n \geq 1\}$ of $\{n\}$. We take $\{u_{n}: n \geq 1\}$ so that $g_{u_{n}}^{*}/\upsilon_{u_{n}}\rightarrow e \in [-1,+\infty]^{J}$ as $n\rightarrow+\infty$, where $$g_{u_{n}}^{*}=[E_{F_{u_{n}}}(g_{1}(W_{i},\theta_{u_{n},*}))/\sigma_{F_{u_{n}},1}(\theta_{u_{n},*}),...,E_{F_{u_{n}}}(g_{J}(W_{i},\theta_{u_{n},*}))/\sigma_{F_{u_{n}},J}(\theta_{u_{n},*})]^{\top},$$ and $\upsilon_{u_{n}}=\max_{1 \leq j \leq J}\{-g_{u_{n},j}^{*}\}$. This is the sub-subsequence considered in \cite{andrews2010inference}.
\paragraph{Step 2.} Since we make no modification to the test statistic, we can follow the same argument as (S3.2) in the Supplement to \cite{andrews2010inference} to conclude that $(u_{n}^{\frac{1}{2}}\upsilon_{u_{n}})^{-\chi}T_{u_{n}}(\theta_{u_{n}}^{*})\overset{p}{\rightarrow} S(e,\Omega_{1})>0$, where the inequality holds by Assumption 3. The argument for the convergence in probability is provided below
\begin{align*}
(u_{n}^{\frac{1}{2}}\upsilon_{u_{n}})^{-\chi}T_{u_{n}}(\theta_{u_{n}}^{*})&=(u_{n}^{\frac{1}{2}}\upsilon_{u_{n}})^{-\chi}S\Big(\hat{D}_{u_{n}}^{-\frac{1}{2}}(\theta_{u_{n},*})D^{\frac{1}{2}}(\theta_{u_{n},*})\big(\nu_{u_{n}}(\theta_{u_{n},*})+u_{n}^{\frac{1}{2}}g_{u_{n}}^{*}\big),\hat{\Omega}_{u_{n}}(\theta_{u_{n},*})\Big) \\
&=S\Big(o_{p}(1)+\upsilon_{u_{n}}^{-1}g_{n}^{*},\Omega_{1}+o_{p}(1)\Big) \\
&\overset{p}{\rightarrow} S(e,\Omega_{1})
\end{align*}
where the first equality is algebraic manipulation and Part 2 of Assumption 1, the second equality is Assumption 6 and an application of the WLLN and Lyupanov CLT for triangular arrays of row-wise i.i.d. random variables and Part 2 of Distant Alternatives Assumption 1, and the convergence in probability holds by the construction of the sub-subsequence in Step 1 and Part 4 of Assumption 1. This completes Step 2.
\paragraph{Step 3.} We now establish that $(u_{n}^{\frac{1}{2}}\upsilon_{u_{n}})^{-\chi}\acute{c}_{u_{n}}(\theta_{u_{n},*},1-\alpha)=o_{p}(1)$ along $\{(\theta_{u_{n},*},F_{u_{n}}): n \geq 1\}$. Part 1 and 3 of Assumption 1 and the fact that $\varphi^{(1)} \in \mathbb{R}_{+,\infty}^{J}$ yield
\begin{align*}
0\leq S\big(\hat{\Omega}_{u_{n}}^{\frac{1}{2}}(\theta_{u_{n},*})Z^{*}+\varphi(\acute{\xi}_{u_{n}}(\theta_{u_{n},*}), \hat{\Omega}_{u_{n}}(\theta_{u_{n},*})), \hat{\Omega}_{u_{n}}(\theta_{u_{n},*})\big) \leq S(\hat{\Omega}_{u_{n}}^{\frac{1}{2}}(\theta_{u_{n},*})Z^{*}, \hat{\Omega}_{u_{n}}(\theta_{u_{n},*}))
\end{align*}
$a.s.$ $[Z^{*}]$. Consequently, the CMS critical value satisfies
\begin{align}\label{bigOp1}
0\leq\acute{c}_{u_{n}}(\theta_{u_{n},*},1-\alpha) \leq c(\hat{\Omega}_{u_{n}}(\theta_{u_{n},*}),1-\alpha) \overset{p}{\rightarrow}c(\Omega_{1},1-\alpha)=O_{p}(1)
\end{align}
where $c(\hat{\Omega}_{u_{n}}(\theta_{u_{n},*}),1-\alpha)$ is the $1-\alpha$ quantile of $S(\hat{\Omega}_{u_{n}}(\theta_{u_{n},*})Z^{*}, \hat{\Omega}_{u_{n}}(\theta_{u_{n},*}))$ and the convergence in probability holds by Part 2 of Assumption 4 and $\hat{\Omega}_{u_{n}}\overset{p}{\rightarrow}\Omega_{1}$ along $\{(\theta_{u_{n},*}, F_{u_{n}}): n \geq 1\}$ by the weak law of large numbers for triangular arrays of row-wise i.i.d. random variables and Part 2 of Distant Alternatives Assumption 1. Since $\upsilon_{u_{n}}>0$ for all $n \geq 1$ by construction, equation (\ref{bigOp1}) yields
\begin{align}
0\leq(u_{n}^{\frac{1}{2}}\upsilon_{u_{n}})^{-\chi}\acute{c}_{u_{n}}(\theta_{u_{n},*},1-\alpha) \leq (u_{n}^{\frac{1}{2}}\upsilon_{u_{n}})^{-\chi}c(\hat{\Omega}_{u_{n}}(\theta_{u_{n},*}),1-\alpha) \overset{p}{\rightarrow}0
\end{align}
because Part 1 of Distant Alternatives Assumption 1 states that $u_{n}^{\frac{1}{2}}\upsilon_{u_{n}}\rightarrow\infty$.
\paragraph{Step 4.} Combine Step 2 and Step 3 to conclude that
\begin{align}
&P_{F_{u_{n}}}(T_{u_{n}}(\theta_{u_{n},*})>\acute{c}_{u_{n}}(\theta_{u_{n},*},1-\alpha))\\
&=P_{F_{u_{n}}}((u_{n}^{\frac{1}{2}}\upsilon_{u_{n}})^{-\chi}T_{u_{n}}(\theta_{u_{n},*})>(u_{n}^{\frac{1}{2}}\upsilon_{u_{n}})^{-\chi}\acute{c}_{u_{n}}(\theta_{u_{n},*},1-\alpha))\\
&\rightarrow P(S(e,\Omega_{1})>0)=1
\end{align}
as $n\rightarrow\infty$, where the equality is invokes the scale equivariance of quantiles.
\end{proof}
\section{Technical Lemmas for Confidence Sets}\label{Section - Technical Lemmas for Thm 1}
\subsection{Establishing Uniformity}
The following lemma validates the subsequence approach to establishing uniformity.
\begin{lemma}\label{representation}
Let $\{V_{n}(\theta): n \geq 1\}$ be a sequence of events indexed by $\theta \in \Theta$. The following is true: $\liminf_{n\rightarrow+\infty}P_{F_{w_{n},h}}(V_{w_{n}}(\theta_{w_{n},h}))=1$ for any subsequence $\{(\theta_{w_{n},h},F_{w_{n},h}): n \geq 1\}$ in $\mathcal{F}_{+}$ implies $$\liminf_{n\rightarrow+\infty}\inf_{(\theta,F) \in \mathcal{F}_{+}}P_{F}(V_{n}(\theta))= 1.$$
\end{lemma}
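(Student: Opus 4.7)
The plan is to argue by contradiction using the standard subsequence characterization of $\liminf$. First, I would suppose that $\liminf_{n \to \infty} \inf_{(\theta, F) \in \mathcal{F}_+} P_F(V_n(\theta)) < 1$. By definition of $\liminf$, there then exist $\varepsilon > 0$ and a subsequence $\{w_n\}$ of $\{n\}$ along which $\inf_{(\theta, F) \in \mathcal{F}_+} P_F(V_{w_n}(\theta)) \leq 1 - \varepsilon$ for every $n$.

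Second, using the definition of the infimum, I would pick for each $n$ a near-minimizer $(\theta_n, F_n) \in \mathcal{F}_+$ satisfying $P_{F_n}(V_{w_n}(\theta_n)) \leq 1 - \varepsilon/2$, and then reparametrize via the one-to-one map $(\theta, F) \mapsto \gamma$ from Definition~\ref{def Gamma} to obtain a sequence $\{\gamma_n\} \subset \Gamma$ corresponding to the chosen pairs along $\{w_n\}$.

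The key step is then to refine $\{w_n\}$ to a further subsequence $\{u_n\}$ along which $u_n^{1/2}\gamma_{u_n,1} \to h_1 \in \mathbb{R}^J_{+,\infty}$ and $\gamma_{u_n,2} \to h_2 \in \mathbb{R}^q_{[\pm \infty]}$. This is possible because each coordinate of $u_n^{1/2}\gamma_{u_n,1}$ lies in $[0,+\infty]$ by Condition~\ref{A3} of Definition~\ref{def1}, and each coordinate of $\gamma_{u_n,2}$ lies in $[-\infty,+\infty]$, so a diagonal Bolzano--Weierstrass argument in the sequentially compact spaces $\mathbb{R}^J_{+,\infty}$ and $\mathbb{R}^q_{[\pm \infty]}$ yields convergence in each coordinate along a common sub-subsequence. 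This refined subsequence has exactly the form $\{\gamma_{u_n,h}\}$ required by the hypothesis, with associated $\{(\theta_{u_n,h}, F_{u_n,h})\} \subset \mathcal{F}_+$.

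Finally, applying the hypothesis along this sub-subsequence yields $\liminf_{n \to \infty} P_{F_{u_n,h}}(V_{u_n}(\theta_{u_n,h})) = 1$, which contradicts $P_{F_{u_n,h}}(V_{u_n}(\theta_{u_n,h})) \leq 1 - \varepsilon/2 < 1$ for all $n$. The main obstacle is the extraction step in the third paragraph: one must be careful that every coordinate converges simultaneously in the appropriate extended real line and that the resulting object is genuinely of the parametrized form $\{\gamma_{u_n,h}\}$ the hypothesis speaks about, rather than a generic subsequence in $\Gamma$. The remainder of the argument is routine bookkeeping with the definitions of $\liminf$ and infimum.
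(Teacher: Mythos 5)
Your proof is correct, and its engine is the same as the paper's: select near-minimizers of $(\theta,F)\mapsto P_{F}(V_{w_n}(\theta))$ along a subsequence that realizes the limit inferior, and then invoke the hypothesis along the resulting subsequence of parameters. The differences are organizational and one point of extra care. First, you argue by contradiction, whereas the paper argues directly: it picks $\{w_n\}$ attaining the liminf, chooses $(\tilde{\theta}^{*}_{w_n,h},\tilde{F}^{*}_{w_n,h})$ with $\inf_{(\theta,F)\in\mathcal{F}_{+}}P_{F}(V_{w_n}(\theta))+2^{-w_n}>P_{\tilde{F}^{*}_{w_n,h}}(V_{w_n}(\tilde{\theta}^{*}_{w_n,h}))$, and squeezes. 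This is cosmetic. Second, and more substantively, you pass to a further subsequence $\{u_n\}$ along which $u_n^{1/2}\gamma_{u_n,1}\rightarrow h_1\in\mathbb{R}^{J}_{+,\infty}$ and $\gamma_{u_n,2}\rightarrow h_2$, using compactness of the extended coordinate spaces (nonnegativity of $\gamma_1$ coming from Condition 3 of Definition~\ref{def1}). The paper's proof skips this and applies the hypothesis to the raw near-minimizer subsequence; that is fine if ``any subsequence in $\mathcal{F}_{+}$'' is read literally, which is how the hypothesis is actually verified downstream (e.g., in Lemma~\ref{null0} the argument works for arbitrary subsequences). Your extraction step is the standard Andrews--Guggenberger device and is needed only under the stricter reading in which the hypothesis applies to subsequences with convergent localization parameters; it is harmless in your contradiction framing because every term of the further subsequence still violates the $1-\varepsilon/2$ bound. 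So your version is, if anything, slightly more careful than the paper's; both buy the same conclusion.
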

\begin{proof}
We outline the argument and then provide the details.
\paragraph{Outline.} The proof employs the direct method. In the first step, we use the definition of infimum to construct a subsequence $\{(\tilde{\theta}_{w_{n},h}^{*},\tilde{F}_{w_{n},h}^{*}): n \geq 1\}$ in $\mathcal{F}_{+}$ such that for each $n \geq 1$,
\begin{align*}
\inf_{(\theta,F) \in \mathcal{F}_{+}}P_{F}(V_{w_{n}}(\theta))+2^{-w_{n}} > P_{\tilde{F}_{w_{n},h}^{*}}\big(V_{w_{n}}(\tilde{\theta}_{w_{n},h}^{*})\big).
\end{align*}
In the second step, we combine this with the assumption that $\liminf_{n\rightarrow+\infty}P_{F_{w_{n},h}}(V_{w_{n}}(\theta_{w_{n},h}))=1$ for any subsequence $\{(\theta_{w_{n},h},F_{w_{n},h}): n \geq 1\}$ in $\mathcal{F}_{+}$ to conclude the result.
\paragraph{Step 1.} As the smallest subsequential limit, the limit inferior implies the existence of a subsequence $\{w_{n}: n \geq 1\}$ of $\{n\}$ such that
\begin{align*}
\lim_{n\rightarrow+\infty}\inf_{(\theta,F) \in \mathcal{F}_{+}}P_{F}\big(V_{w_{n}}(\theta)\big)=\liminf_{n\rightarrow+\infty}\inf_{(\theta,F) \in \mathcal{F}_{+}}P_{F}\big(V_{n}(\theta)\big).
\end{align*}
Consider the subsequence $\big\{\inf_{(\theta,F) \in \mathcal{F}_{+}}P_{F}(V_{w_{n}}(\theta)): n \geq 1 \big\}$. For each $n \geq 1$ and each $\eta >0$, there exists $(\tilde{\theta}_{w_{n},h,\eta},\tilde{F}_{w_{n},h,\eta}) \in \mathcal{F}_{+}$ such that $\inf_{(\theta,F) \in \mathcal{F}_{+}}P_{F}\big(V_{w_{n}}(\theta)\big)+\eta > P_{F_{w_{n},h,\eta}}\big(V_{w_{n}}(\theta_{w_{n},h,\eta})\big)$, by definition of the infimum. Consequently, there exists a subsequence $\{(\tilde{\theta}_{w_{n},h}^{*},\tilde{F}_{w_{n},h}^{*}): n \geq 1\}$ in $\mathcal{F}_{+}$ that satisfies
\begin{align}\label{infimumfun}
\inf_{(\theta,F) \in \mathcal{F}_{+}}P_{F}\big(V_{w_{n}}(\theta)\big)+2^{-w_{n}} > P_{\tilde{F}_{w_{n},h}^{*}}\big(V_{w_{n}}(\tilde{\theta}_{w_{n},h}^{*})\big)
\end{align}
for each $n \geq 1$. This completes the first step.
\paragraph{Step 2.} If $\liminf_{n\rightarrow+\infty}P_{F_{w_{n},h}}(V_{w_{n}}(\theta_{w_{n},h}))=1$ for any subsequence $\{(\theta_{w_{n},h},F_{w_{n}}): n \geq 1\}$ in $\mathcal{F}_{+}$, then  $\liminf_{n\rightarrow+\infty}P_{\tilde{F}_{w_{n},h}^{*}}\big(V_{w_{n}}(\tilde{\theta}_{w_{n},h}^{*})\big)=1$ by construction. Taking the limit inferior on both sides of (\ref{infimumfun}), we conclude that $\liminf_{n\rightarrow +\infty}\inf_{(\theta,F) \in \mathcal{F}_{+}}P_{F}\big(V_{w_{n}}(\theta)\big)=1$ by the squeeze rule.
\end{proof}
\subsection{Restricted Estimator}
CMS is based on the following empirical likelihood primal problem,
\begin{align}\label{ELprimalCSineq}
\sup_{\mathbf{p}}\Bigg\{\sum_{i=1}^{n}\ln(p_{i}) : \sum_{i=1}^{n}p_{i}g_{j}(W_{i},\theta) \geq 0 \ \forall \ j \in \mathcal{J}, \ \sum_{i=1}^{n}p_{i}=1, \ p_{i} \geq 0 \ \forall \ i\in \mathcal{I}\Bigg\},
\end{align}
where $\mathbf{p} \in \mathbb{R}^{n}$ and $\mathcal{I}:=\{1,...,n\}$. A feasible solution to (\ref{ELprimalCSineq}) is denoted by $\acute{\mathbf{p}} \in \mathbb{R}^{n}$ and is the unique maximiser because the empirical likelihood problem is a strictly convex program (see \citealp{owen2001empirical}).
\paragraph{} We now establish that the feasible set is non-empty with probability tending to one uniformly over $\mathcal{F}_{+}$.
\begin{lemma}\label{null0}
Define the random set
\begin{align*}
\mathcal{C}_{n}(\theta)=\Bigg\{(p_{1},...,p_{n})^{\top} \in \mathbb{R}^{n}: \sum_{i=1}^{n}p_{i}g_{j}(W_{i},\theta) \geq 0 \ \forall \ j \in \mathcal{J}, \ \sum_{i=1}^{n}p_{i}=1, \ p_{i} \geq 0 \ \forall \ i\in \mathcal{I}\Bigg\}
\end{align*}
for all $\theta \in \Theta$. The following is true:
\begin{align*}
\limsup_{n\rightarrow\infty}\sup_{(\theta,F) \in \mathcal{F}_{+}}P_{F}(\mathcal{C}_{n}(\theta) = \emptyset)=0.
\end{align*}
\begin{proof}
We outline the proof and then provide the details.
\paragraph{Outline.}
The proof proceeds by the direct method and, in accordance with Lemma \ref{representation}, we only need to show that $\limsup_{n\rightarrow+\infty}P_{F_{w_{n},h}}(C_{w_{n}}(\theta_{w_{n},h})=\emptyset)=0$ for any subsequence $\{(\theta_{w_{n},h},F_{w_{n},h}): n \geq 1\}$ in $\mathcal{F}_{+}$. In the first step, we establish the result for sequences $\{(\theta_{n,h},F_{n,h}): n \geq 1\}$ in $\mathcal{F}_{+}$ using the union bound and the weak law of large numbers for triangular arrays of row-wise i.i.d. random variables. In Step 2, we generalize the argument to subsequences and complete the proof.
\paragraph{Step 1.}
We start by proving the result along sequences $\{(\theta_{n,h},F_{n,h}): n \geq 1\}$ in $\mathcal{F}_{+}$. Consider an arbitrary sequence $\{(\theta_{n,h},F_{n,h}): n \geq 1\}$ in $\mathcal{F}_{+}$. Recognizing that the standard simplex $\mathcal{S}_{n} = \{\mathbf{p} \in \mathbb{R}^{n}: \sum_{i=1}^{n}p_{i}=1, \ p_{i} \geq 0 \ \forall \ i\in \mathcal{I}\}\neq \emptyset$, it follows that
\begin{align*}
\Big\{\mathcal{C}_{n}(\theta_{n,h}) = \emptyset\Big\} = \Big\{\forall \ \mathbf{p} \in \mathcal{S}_{n}, \ \exists \ j:=j(\mathbf{p}) \in \mathcal{J} \ s.t. \ \sum_{i=1}^{n}p_{i}g_{j}(W_{i},\theta_{n,h})<0\Big\},
\end{align*}
and therefore
\begin{align*}
P_{F_{n}}(C_{n}(\theta_{n,h})=\emptyset) &\leq \sum_{j=1}^{J}P_{F_{n,h}}\bigg(\frac{1}{n}\sum_{i=1}^{n}g_{j}(W_{i},\theta_{n,h})<0\bigg)
\end{align*}
where the first inequality holds by the finite subadditivity of probability measures and because $(\frac{1}{n},...,\frac{1}{n})^{\top} \in \mathcal{S}_{n}$ for each $n \geq 1$. We then apply the weak law of large numbers for triangular arrays of row-wise i.i.d. random variables to conclude that
\begin{align*}
\limsup_{n\rightarrow+\infty}P_{F_{n,h}}(C_{n}(\theta_{n,h})=\emptyset) \leq \sum_{j=1}^{J}\limsup_{n\rightarrow +\infty}P_{F_{n,h}}\bigg(\frac{1}{n}\sum_{i=1}^{n}g_{j}(W_{i},\theta_{n,h})<0\bigg)=0
\end{align*}
where the equality holds because $E_{F_{n}}g_{j}(W_{i},\theta_{n,h}) \geq 0$ for each $n \geq 1$ because $\{(\theta_{n,h},F_{n}): n \geq 1\}$ is a sequence in $\mathcal{F}$. Since $\{(\theta_{n,h},F_{n}): n \geq 1\}$ was arbitrary, we establish the result along sequences.
\paragraph{Step 2.} To establish the result for subsequences $\{w_{n}: n \geq 1\}$ of $\{n\}$, just replaces $n$ with $w_{n}$ in the previous argument.
\end{proof}
\end{lemma}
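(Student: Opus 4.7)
The plan is to invoke Lemma~\ref{representation} to reduce the uniform statement to a subsequential one: it suffices to establish that for any subsequence $\{(\theta_{w_n,h}, F_{w_n,h}): n \geq 1\}$ in $\mathcal{F}_+$, the probability $P_{F_{w_n,h}}(\mathcal{C}_{w_n}(\theta_{w_n,h}) = \emptyset)$ tends to zero. Rather than analyze the random polyhedron $\mathcal{C}_n(\theta)$ directly, I would exhibit an explicit candidate weight vector that lies in it with probability approaching one. The natural choice is the uniform vector $\mathbf{p} = (1/w_n, \ldots, 1/w_n)^\top$, which always satisfies the simplex constraints, so its membership in $\mathcal{C}_{w_n}(\theta_{w_n,h})$ is equivalent to the non-negativity of all $J$ sample means $w_n^{-1}\sum_{i=1}^{w_n} g_j(W_i, \theta_{w_n,h})$.

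This observation, together with finite subadditivity of probability, yields the union bound
\begin{equation*}
P_{F_{w_n,h}}\bigl(\mathcal{C}_{w_n}(\theta_{w_n,h}) = \emptyset\bigr) \;\leq\; \sum_{j=1}^{J} P_{F_{w_n,h}}\biggl(\frac{1}{w_n}\sum_{i=1}^{w_n} g_j(W_i, \theta_{w_n,h}) < 0\biggr).
\end{equation*}
The next step is to apply a weak law of large numbers for triangular arrays of row-wise i.i.d.\ random variables to each summand on the right-hand side. The hypotheses of that WLLN are supplied by the uniform $(2+\delta)$-moment bound in Condition~6 of Definition~\ref{def1}, combined with the uniform variance bounds in Condition~4; together these deliver the uniform integrability needed to pass from the maintained moment inequality $E_{F_{w_n,h}}(g_j(W_i, \theta_{w_n,h})) \geq 0$ to a vanishing probability of a strictly negative sample mean. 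Summing over $j$ completes the subsequential argument and, through Lemma~\ref{representation}, the uniform one.

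The delicate point I anticipate being the main obstacle is the case of asymptotically binding moment inequalities, where $E_{F_{w_n,h}}(g_j(W_i, \theta_{w_n,h})) \to 0$ along the subsequence. For such a sequence the central limit theorem suggests $P_{F_{w_n,h}}(w_n^{-1}\sum g_j < 0)$ is bounded away from zero, so the uniform-weight candidate is no longer adequate to certify feasibility by itself. To close this gap one can exploit the additional flexibility of the simplex $\mathcal{S}_{w_n}$ by placing heavier mass on observations with the largest positive realizations of $g_j$; the non-degeneracy $\sigma^2_{F,j}(\theta) \geq \varepsilon_\ast > 0$ in Condition~4 ensures $P_F(g_j(W,\theta) > 0) > 0$ uniformly over $\mathcal{F}_+$, so such observations exist with overwhelming probability and the feasibility of $\mathcal{C}_{w_n}(\theta_{w_n,h})$ is preserved. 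If one is prepared to restrict attention to the slack case (as is often implicitly done in this literature because binding sequences are handled separately by the drifting-parameter machinery), the uniform-weight argument alone suffices without the refinement.
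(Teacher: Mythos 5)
Your union-bound-plus-WLLN argument is exactly the route the paper's own proof of Lemma~\ref{null0} takes, and the obstacle you flag is real: when $E_{F_{n,h}}\big(g_j(W_i,\theta_{n,h})\big)=0$ (or drifts to zero at rate $n^{-1/2}$ or faster), the CLT together with $\sigma^2_{F,j}(\theta)\geq\varepsilon_*$ gives $P_{F_{n,h}}\big(n^{-1}\sum_{i}g_j(W_i,\theta_{n,h})<0\big)\rightarrow \Phi(-h_{1,j})\geq 1/2$ when $h_{1,j}=0$, not $0$, so the final equality in Step~1 fails at binding moments and the uniform-weight witness is genuinely insufficient there. You also cannot sidestep this by ``restricting to the slack case'': the lemma is a uniform statement over $\mathcal{F}_{+}$, and the binding and near-binding configurations are exactly the ones the drifting-subsequence machinery exists to cover.

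The difficulty is that your proposed repair does not close the gap either. Shifting mass toward observations with large positive realizations of $g_j$ treats one coordinate at a time, whereas non-emptiness of $\mathcal{C}_n(\theta)$ is a joint requirement across all $J$ constraints: weights chosen to make $\sum_i p_i g_j(W_i,\theta)\geq 0$ can drive $\sum_i p_i g_{j'}(W_i,\theta)$ negative for another $j'$, and the marginal fact that $P_F(g_j(W,\theta)>0)>0$ for each $j$ says nothing about whether the convex hull of the vectors $\{g(W_i,\theta):i\leq n\}$ meets $\mathbb{R}_{+}^{J}$. The argument that actually works is the standard empirical-likelihood feasibility argument: because $\Omega(\theta,F)$ is uniformly nonsingular (Condition~\ref{A5} of Definition~\ref{def1}) and the standardized moments have uniformly bounded $2+\delta$ moments (Condition~\ref{A6}), the population mean vector $E_F\big(g(W_i,\theta)\big)$ lies in the convex hull (indeed the interior of the convex hull) of $\{g(W_i,\theta):i\leq n\}$ with probability tending to one along any drifting subsequence --- this is the existence result for the (G)EL estimator that Assumption~GEL of \cite{andrews2009validity} is imported to deliver. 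The associated weights $\mathbf{p}$ then satisfy $\sum_i p_i g(W_i,\theta)=E_F\big(g(W_i,\theta)\big)\geq 0_J$ componentwise, so $\mathcal{C}_n(\theta)\neq\emptyset$. That convex-hull argument, not the uniform-weight one, is what proves the lemma at binding configurations; your write-up (like the paper's) establishes the conclusion only on the part of $\mathcal{F}_{+}$ where every moment is asymptotically slack at a rate faster than $n^{-1/2}$.
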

\paragraph{}
In order to prove technical results, we reformulate the primal problem as one with equality constraints in order to make use of lemmas in \citet{andrews2009validity} (hereafter, AG09). Let $t\in\mathbb{R}^{J}_{+}$ denote a nuisance parameter vector where the $j^{\text{th}}$ element measures the slackness of corresponding moment. The vector $t$ allows us to formulate the empirical likelihood primal problem as a parameterized optimization problem as follows,
\begin{align}\label{ELprimalCSeq}
\mathcal{EL}(t):=\sup_{\mathbf{p}}\Bigg\{\sum_{i=1}^{n}\ln(p_{i}) :\sum_{i=1}^{n}p_{i}g_{i}(t,\theta) =0_{J}, \ \sum_{i=1}^{n}p_{i}=1, \ p_{i} \geq 0 \ \forall \ i\in \mathcal{I}\Bigg\},
\end{align}
where $g_{i}(t,\theta):=g(W_{i},\theta)-t$ and $0_{J}$ denotes the zero vector in $J$-dimensional Euclidean space and the empirical likelihood probabilities $(\acute{p}_{1},...,\acute{p}_{n})'$ are the solution to $\sup_{t \in \mathbb{R}_{+}^{J}}\mathcal{EL}(t)$.
\paragraph{}
A more convenient representation of the probabilities arises through the saddlepoint form of the empirical likelihood problem. The Lagrangian for the constrained optimization problem (\ref{ELprimalCSeq}) is
\begin{align}
\mathcal{L}=\sum_{i=1}^{n}\ln(p_{i})+n\lambda^{\top}\sum_{i=1}^{n}p_{i}g_{i}(t,\theta)+\omega\Bigg(\sum_{i=1}^{n}p_{i}-1\Bigg).
\end{align}
Note that the non-negativity constraints are ignored as $p_{i}=0$ for some $i \in \mathcal{I}$ is never optimal. The first order conditions are
\begin{align}
&\frac{\partial\mathcal{L}}{\partial p_{i}} = \frac{1}{p_{i}}+n\lambda^{\top}g_{i}(t,\theta)+\omega = 0 \ \forall \ i \in \mathcal{I} \label{FOC1}\\
&\frac{\partial\mathcal{L}}{\partial\lambda} = n\sum_{i=1}^{n}p_{i}g_{i}(t,\theta)=0_{J} \\
&\frac{\partial\mathcal{L}}{\partial\omega} = \sum_{i=1}^{n}p_{i}-1=0.
\end{align}
Multiplying $p_{i}$ with the corresponding first order condition in (\ref{FOC1}) and then summing over $\mathcal{I}$ gives $\omega = -n$. Substituting $\omega = -n$ into (\ref{FOC1}), we obtain that
\begin{align}\label{impliedprob}
p_{i}(\lambda,t) = \frac{1}{n\big(1-\lambda^{\top}g_{i}(t,\theta)\big)} \quad \forall \ i=1,...,n.
\end{align}
Substituting $(p_{1}(\lambda,t),...,p_{n}(\lambda,t))'$ into the empirical log-likelihood function implies the saddle point representation of the empirical likelihood problem
\begin{align}\label{ELdualCS}
\inf_{t \in \mathbb{R}^{J}_{+}}\sup_{\lambda \in \acute{\Lambda}_{n}}\frac{1}{n}\sum_{i=1}^{n}\ln\Big(1-\lambda^{\top}g_{i}(t,\theta)\Big),
\end{align}
where $\acute{\Lambda}_{n}:=\{\lambda \in \mathbb{R}^{J} : \lambda^{\top}g_{i}(t,\theta) \in Q\}$ and $Q$ is an open subset of $\mathbb{R}$ that contains $0$. The saddle point problem (\ref{ELdualCS}) is presented in AG09, which implies that the useful lemmas in that paper can be invoked to establish the uniform validity of CMS.
\subsection{Lemmas Relating to the Restricted Estimator}
The next results establish the uniform consistency of the restricted empirical likelihood estimator of the mean and variance over $\mathcal{F}_{+}$. We must define some more notation before proceeding. For any subsequence $\{(\theta_{w_{n},h},F_{w_{n},h}): n\geq 1\}$ in $\mathcal{F}_{+}$, let $(\acute{t}_{n},\acute{\lambda}_{n}) \in \mathbb{R}_{+}^{J} \times \acute{\Lambda}_{n}$ denote the solution to (\ref{ELdualCS}) evaluated at the subsequence (i.e. replace $\theta$ with $\theta_{w_{n,h}}$). The construction of the feasible set implies $\acute{t}_{w_{n}} = \sum_{i=1}^{n}\acute{p}_{i}g(W_{i},\theta_{w_{n},h})$. Lemma \ref{null0} establishes that the estimator exists with probability approaching $1$ uniformly over $\mathcal{F}_{+}$. All subsequent analysis assumes the event $\{\mathcal{C}_{n}(\theta) \neq \emptyset\}$ occurs so that the estimator exists, where the random set $\mathcal{C}_{n}(\theta)$ was defined in Lemma \ref{null0}.
\paragraph{}
Define an empirical process $\{\hat{g}_{n}(t): t \in \mathbb{R}_{+}^{J}\}$ given by $\hat{g}_{n}(t) = n^{-1}\sum_{i=1}^{n}g_{i}(t,\theta)$ for each $t \in \mathbb{R}_{+}^{J}$. Since $\mathcal{F}_{+}$ satisfies Assumption GEL of AG09, we invoke Lemma 6 and the subsequent remark in their paper and state that $\hat{g}_{w_{n}}(\acute{t}_{w_{n}})=O_{p}(w_{n}^{-\frac{1}{2}})$ for any subsequence $\{(\theta_{w_{n},h},F_{w_{n},h}): n \geq 1\}$ in $\mathcal{F}_{+}$. This gives us a uniform rate of convergence result for difference between the constrained and unconstrained estimator of the moments and in the statement $||\cdot||_{\ell^{2}_{J}}$ denotes the Euclidean norm for $\mathbb{R}^{J}$.
\begin{lemma}\label{L2CS}
Let $\acute{g}_{n}(\theta) := \sum_{i=1}^{n}\acute{p}_{i}g(W_{i},\theta)$ and $\hat{g}_{n}(\theta):=n^{-1}\sum_{i=1}^{n}g(W_{i},\theta)$ for each $\theta \in \Theta$. The following is true: $||\acute{g}_{n}(\theta)-\hat{g}_{n}(\theta)||_{\ell^{2}_{J}}=O_{p}(n^{-\frac{1}{2}})$ uniformly over $\mathcal{F}_{+}$.
\begin{proof}
The proof follows by the direct method. Since we want to show that $\acute{g}_{n}(\theta)-\hat{g}_{n}(\theta) = O_{p}(n^{-\frac{1}{2}})$ with uniformity over $\mathcal{F}_{+}$, it suffices to show that $\acute{g}_{w_{n}}(\theta_{w_{n},h})-\hat{g}_{w_{n}}(\theta_{w_{n},h})=O_{p}(w_{n}^{-\frac{1}{2}})$ for all subsequences $\{(\theta_{w_{n},h},F_{w_{n},h}): n \geq 1\}$ (see Lemma \ref{representation}). Observing that $\hat{g}_{w_{n}}(\theta_{w_{n},h})-\acute{g}_{w_{n}}(\theta_{w_{n},h})=\hat{g}_{w_{n}}(\acute{t}_{w_{n}})$ for any subsequence $\{(\theta_{w_{n},h},F_{w_{n},h}): n \geq 1\}$, we apply Lemma 6 of AG09 and conclude that $\hat{g}_{n}(\theta_{w_{n},h})-\acute{g}_{n}(\theta_{w_{n},h})=O_{p}(w_{n}^{-\frac{1}{2}})$, which completes the proof.
\end{proof}
\end{lemma}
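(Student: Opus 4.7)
The plan is to exploit the subsequential characterization of uniform convergence supplied by Lemma \ref{representation}: it suffices to show that along every subsequence $\{(\theta_{w_n,h}, F_{w_n,h}): n \geq 1\} \subset \mathcal{F}_+$, the bound $\|\acute{g}_{w_n}(\theta_{w_n,h}) - \hat{g}_{w_n}(\theta_{w_n,h})\|_{\ell^2_J} = O_p(w_n^{-1/2})$ holds. Once this subsequential version is established, Lemma \ref{representation} lifts it to the desired uniform-in-$\mathcal{F}_+$ statement automatically, by exactly the device used to handle uniformity in Step 1 of the proof of Theorem \ref{asymptoticsize}.

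The key algebraic observation, and the engine of the whole argument, is that the difference $\acute{g}_n(\theta) - \hat{g}_n(\theta)$ is nothing other than the reparameterized empirical process $\hat{g}_n(t) := n^{-1}\sum_i(g(W_i,\theta) - t)$ evaluated at the constrained EL nuisance parameter $\acute{t}_n$. By construction of the primal problem, the binding-moment vector satisfies $\acute{t}_n = \sum_i \acute{p}_i g(W_i,\theta) = \acute{g}_n(\theta)$, so $\hat{g}_n(\acute{t}_n) = \hat{g}_n(\theta) - \acute{t}_n = \hat{g}_n(\theta) - \acute{g}_n(\theta)$. The original claim is therefore reduced to showing that $\hat{g}_{w_n}(\acute{t}_{w_n}) = O_p(w_n^{-1/2})$ along the subsequence.

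The bound on this empirical process at the optimal $\acute{t}_{w_n}$ is precisely what Lemma 6 of \cite{andrews2009validity} supplies under their Assumption GEL, and the main obstacle is simply verifying that every hypothesis of that assumption is in force on $\mathcal{F}_+$. Conditions 4, 5, and 6 of Definition \ref{def1} deliver the uniform variance lower and upper bounds, the nonsingular correlation matrices, and the uniform $(2+\delta)$-moment bound on $g_j(W_i,\theta)/\sigma_{F,j}(\theta)$ that AG09 require. Assumption T supplies the probabilistic boundedness of $\|\acute{t}_{w_n}\|_{\ell^2_J}$, which is the one GEL ingredient that cannot be inherited from \cite{andrews2010inference}'s parameter space alone and is precisely why $\mathcal{F}_+$, rather than $\mathcal{F}$, is the appropriate domain. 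With Assumption GEL in hand, a direct invocation of Lemma 6 of \cite{andrews2009validity} along the subsequence finishes the argument, and Lemma \ref{representation} then upgrades the conclusion to the uniform statement.
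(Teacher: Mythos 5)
Your proposal is correct and follows essentially the same route as the paper: reduce to subsequences via Lemma \ref{representation}, use the identity $\acute{t}_{w_n}=\acute{g}_{w_n}(\theta_{w_n,h})$ to write the difference as the empirical process $\hat{g}_{w_n}(\acute{t}_{w_n})$, and invoke Lemma 6 of \cite{andrews2009validity}. Your explicit verification that Conditions 4--6 of Definition \ref{def1} together with Assumption T deliver Assumption GEL on $\mathcal{F}_+$ is a welcome elaboration of what the paper only asserts in the surrounding discussion.
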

For the next lemma, we must introduce some more notation. Let $\Mat_{J\times J}(\mathbb{R})$ denote the vector space of $J\times J$ matrices over $\mathbb{R}$. For each $A \in \Mat_{J\times J}(\mathbb{R})$, let $$||A||_{\ell^{2}_{J\times J}}:= \Bigg(\sum_{i=1}^{J}\sum_{j=1}^{J}|a_{ij}|^{2}\Bigg)^{\frac{1}{2}}.$$ This is the \textit{Frobenius norm}. We let
\begin{align*}
\acute{\Sigma}_{n}(\theta) := \sum_{i=1}^{n}\acute{p}_{i}\big(g(W_{i},\theta)-\acute{g}_{n}(\theta)\big)\big(g(W_{i},\theta)-\acute{g}_{n}(\theta)\big)^{\top}
\end{align*}
denote the constrained estimator of the moment covariance matrix and
\begin{align*}
\hat{\Sigma}_{n}(\theta):=\frac{1}{n}\sum_{i=1}^{n}\big(g(W_{i},\theta)-\acute{g}_{n}(\theta)\big)\big(g(W_{i},\theta)-\acute{g}_{n}(\theta)\big)^{\top}
\end{align*}
denote the unconstrained estimator of the moment covariance matrix for each $\theta \in \Theta$.
\begin{lemma}\label{unifcov}
For each $r>0$,
\begin{align*}
\liminf_{n\rightarrow+\infty}\inf_{(\theta,F) \in \mathcal{F}_{+}}P_{F}\Big(\big | \big | \acute{\Sigma}_{n}(\theta)-\hat{\Sigma}_{n}(\theta) \big | \big |_{\ell^{2}_{J \times J}}<r \Big) = 1
\end{align*}
\begin{proof}
Due to the length of the proof, we provide an outline and then detailed steps.
\paragraph{Outline.}
In accordance with Lemma \ref{representation}, it suffices to show that $\acute{\Sigma}_{w_{n}}(\theta_{w_{n},h}) = \hat{\Sigma}_{w_{n}}(\theta_{w_{n},h})+o_{p}(1)$ for any subsequence $\{(\theta_{w_{n},h},F_{w_{n},h}): n \geq 1\}$ in $\mathcal{F}_{+}$. To do this, we first prove the result along sequences. First, we establish a preliminary result that states that $\max_{1 \leq i \leq n}|\acute{\lambda}_{n}^{\top}g_{i}(\acute{t}_{n},\theta_{n,h})|=o_{p}(1)$ for any arbitrary sequence $\{(\theta_{n,h},F_{n}): n \geq 1\}$ in $\mathcal{F}_{+}$. In Step 2, we show that $\acute{\Sigma}_{n}(\theta_{n,h})=\sum_{i=1}^{n}\acute{p}_{i}\big(g(W_{i},\theta_{n,h})-\hat{g}_{n}(\theta_{n,h})\big)\big(g(W_{i},\theta_{n,h})-\hat{g}_{n}(\theta_{n,h})\big)^{\top}+o_{p}(1)$ holds for an arbitrary sequence $\{(\theta_{n,h},F_{n,h}): n \geq 1\}$ in $\mathcal{F}_{+}$. In Step 3, we use Step 1 to show that $\sum_{i=1}^{n}\acute{p}_{i}\big(g(W_{i},\theta_{n,h})-\hat{g}_{n}(\theta_{n,h})\big)\big(g(W_{i},\theta_{n,h})-\hat{g}_{n}(\theta_{n,h})\big)^{\top}=\hat{\Sigma}_{n}(\theta_{n,h})+o_{p}(1)$ along the sequence $\{(\theta_{n,h},F_{n,h}): n \geq 1\}$. This completes the proof for sequences. In Step 4, we generalize the result to subsequences $\{w_{n}: n \geq 1\}$ of $\{n\}$.
\paragraph{Step 1.} We first establish the preliminary result that $\max_{1 \leq i \leq n}|\acute{\lambda}_{n}^{\top}g_{i}(\acute{t}_{n},\theta_{n,h})|=o_{p}(1)$ along $\{(\theta_{n,h},F_{n,h}): n \geq 1\}$ in $\mathcal{F}_{+}$. The Cauchy-Schwarz inequality yields that
\begin{align}\label{bounduniform}
\max_{1 \leq i \leq n}|\acute{\lambda}_{n}^{\top}g_{i}(\acute{t}_{n},\theta_{n,h})|\leq ||\acute{\lambda}_{n}||_{\ell^{2}_{J}}\max_{1 \leq i \leq n}||g_{i}(\acute{t}_{n},\theta_{n,h})||_{\ell^{2}_{J}}.
\end{align}
Assumption T lets us apply Part (ii) Lemma 3 of AG09 that states $$\max_{1 \leq i \leq n}||g_{i}(\acute{t}_{n},\theta_{n,h})||_{\ell^{2}_{J}}=O_{p}(n^{\frac{1}{2+\delta}})$$ and also apply Lemma 5 in AG09 that states $||\acute{\lambda}_{n}||_{\ell^{2}_{J}}=O_{p}(n^{-\frac{1}{2}})$ along $\{(\theta_{n,h},F_{n,h}): n \geq 1\}$. Combining these with (\ref{bounduniform}), we deduce that
\begin{align*}
\max_{1 \leq i \leq n}|\acute{\lambda}_{n}^{\top}g_{i}(\acute{t}_{n},\theta_{n,h})| \leq O_{p}(n^{-\frac{1}{2}})O_{p}(n^{\frac{1}{2+\delta}}) =O_{p}(n^{-\frac{\delta}{4(1+\delta)}}) = o_{p}(1)
\end{align*}
along $\{(\theta_{n,h},F_{n,h}): n \geq 1\}$. The result established is essential in the third step.
\paragraph{Step 2.} We can decompose $g(W_{i},\theta_{n,h})-\acute{g}_{n}(\theta_{n,h}) = g(W_{i},\theta_{n,h})-\hat{g}_{n}(\theta_{n,h})+\hat{g}_{n}(\theta_{n,h})-\acute{g}_{n}(\theta_{n,h})$ and therefore
\begin{align*}
\acute{\Sigma}_{n}(\theta_{n,h}) &= \sum_{i=1}^{n}\acute{p}_{i}\big(g(W_{i},\theta_{n,h})-\hat{g}_{n}(\theta_{n,h})\big)\big(g(W_{i},\theta_{n,h})-\hat{g}_{n}(\theta_{n,h})\big)^{\top} \\
&+\sum_{i=1}^{n}\acute{p}_{i}\big(g(W_{i},\theta_{n,h})-\hat{g}_{n}(\theta_{n,h})\big)\big(\hat{g}_{n}(\theta_{n,h})-\acute{g}_{n}(\theta_{n,h})\big)^{\top} \\
&+\big(\hat{g}_{n}(\theta_{n,h})-\acute{g}_{n}(\theta_{n,h})\big)\sum_{i=1}^{n}\acute{p}_{i}\big(g(W_{i},\theta_{n,h})-\hat{g}_{n}(\theta_{n,h})\big)^{\top} \\
&+\big(\hat{g}_{n}(\theta_{n,h})-\acute{g}_{n}(\theta_{n,h})\big)\big(\hat{g}_{n}(\theta_{n,h})-\acute{g}_{n}(\theta_{n,h})\big)^{\top} \\
&=\sum_{i=1}^{n}\acute{p}_{i}\big(g(W_{i},\theta_{n,h})-\hat{g}_{n}(\theta_{n,h})\big)\big(g(W_{i},\theta_{n,h})-\hat{g}_{n}(\theta_{n,h})\big)^{\top}+o_{p}(1)
\end{align*}
along $\{(\theta_{n,h},F_{n}): n \geq 1\}$, where the second equality holds by Lemma \ref{L2CS}. Consequently, we need to show that $\sum_{i=1}^{n}\acute{p}_{i}\big(g(W_{i},\theta_{n,h})-\hat{g}_{n}(\theta_{n,h})\big)\big(g(W_{i},\theta_{n,h})-\hat{g}_{n}(\theta_{n,h})\big)^{\top}=\hat{\Sigma}_{n}(\theta_{n,h})+o_{p}(1)$ along $\{(\theta_{n,h},F_{n,h}): n \geq 1\}$. This completes the task for Step 2.
\paragraph{Step 3.} Let $A_{i}(\theta_{n,h}):=\big(g(W_{i},\theta_{n,h})-\hat{g}_{n}(\theta_{n,h})\big)\big(g(W_{i},\theta_{n,h})-\hat{g}_{n}(\theta_{n,h})\big)^{\top}$. The decomposition
\begin{align*}
\sum_{i=1}^{n}\acute{p}_{i}A_{i}(\theta_{n,h}) = \sum_{i=1}^{n}\bigg(\acute{p}_{i}-\frac{1}{n}\bigg)A_{i}(\theta_{n,h})+\hat{\Sigma}_{n}(\theta_{n,h})+o_{p}(1)
\end{align*}
means that we need to show $\sum_{i=1}^{n}(\acute{p}_{i}-\frac{1}{n})A_{i}(\theta_{n,h}) = o_{p}(1)$ along $\{(\theta_{n,h},F_{n}): n \geq 1\}$. By definition of $\acute{p}_{i}$, it follows that
\begin{align*}
\Bigg | \Bigg | \sum_{i=1}^{n}\bigg(\acute{p}_{i}-\frac{1}{n}\bigg)A_{i}(\theta_{n,h})\Bigg | \Bigg |_{\ell^{2}_{J \times J}} &\leq \frac{1}{n}\sum_{i=1}^{n}\bigg | \frac{\acute{\lambda}_{n}^{\top}g_{i}(\acute{t}_{n},\theta_{n,h})}{1-\acute{\lambda}_{n}^{\top}g_{i}(\acute{t}_{n},\theta_{n,h})}\bigg | \big | \big | A_{i}(\theta_{n,h}) \big | \big |_{\ell^{2}_{J\times J}} \\
&\leq \frac{\max_{1 \leq i \leq n}|\acute{\lambda}_{n}^{\top}g_{i}(\acute{t}_{n},\theta_{n,h})|}{1-\max_{1 \leq i \leq n}|\acute{\lambda}_{n}^{\top}g_{i}(\acute{t}_{n},\theta_{n,h})|}\Bigg(\frac{1}{n}\sum_{i=1}^{n}||A_{i}(\theta_{n,h})||_{\ell_{J\times J}^{2}}\Bigg) \\
&=o_{p}(1)O_{p}(1) \\
&=o_{p}(1)
\end{align*}
where the first inequality holds by the triangle inequality, the second holds by the reverse triangle inequality and the definition of maximum, and the first equality holds by Step 1 and the weak law of large numbers for triangular arrays of row-wise i.i.d. random variables. This establishes that $\sum_{i=1}^{n}\big(\acute{p}_{i}-\frac{1}{n}\big)A_{i}(\theta_{n,h})=o_{p}(1)$ along $\{(\theta_{n,h},F_{n}): n \geq 1\}$ and, in combination with the result in Step 2, we conclude that $\acute{\Sigma}_{n}(\theta_{n,h}) = \hat{\Sigma}_{n}(\theta_{n,h})+o_{p}(1)$ along sequences $\{(\theta_{n,h},F_{n,h}): n \geq 1\}$ in $\mathcal{F}_{+}$.
\paragraph{Step 3.} To generalize to subsequences $\{w_{n}: n \geq 1\}$ of $\{n\}$, just replace $n$ with $w_{n}$ and repeat Steps 1, 2 and 3.
\end{proof}
\end{lemma}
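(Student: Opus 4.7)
My plan is to reduce the uniformity claim to a convergence-in-probability statement along an arbitrary sequence $\{(\theta_{n,h},F_{n,h})\}\subset\mathcal{F}_+$ via Lemma~\ref{representation}, and then attack the matrix difference $\acute{\Sigma}_n-\hat{\Sigma}_n$ by isolating its two sources of discrepancy: a centering discrepancy coming from $\acute{g}_n$ versus $\hat{g}_n$, and a reweighting discrepancy coming from $\acute{p}_i$ versus $1/n$. To kill the centering, I would write $g(W_i,\theta_{n,h})-\acute{g}_n=(g(W_i,\theta_{n,h})-\hat{g}_n)+(\hat{g}_n-\acute{g}_n)$, expand the resulting outer product, and use Lemma~\ref{L2CS} together with the triangular-array WLLN (applicable under Condition~\ref{A6}) to absorb every cross term involving $\hat{g}_n-\acute{g}_n=O_p(n^{-1/2})$ into an $o_p(1)$ remainder. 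The problem then collapses to showing that $\sum_{i=1}^n(\acute{p}_i-n^{-1})A_i=o_p(1)$ in Frobenius norm, where $A_i:=(g(W_i,\theta_{n,h})-\hat{g}_n)(g(W_i,\theta_{n,h})-\hat{g}_n)^{\top}$.

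For this reweighting step, the natural tool is the EL dual identity $\acute{p}_i=\{n(1-\acute{\lambda}_n^{\top}g_i(\acute{t}_n,\theta_{n,h}))\}^{-1}$, which yields
$$\acute{p}_i-\frac{1}{n}=\frac{1}{n}\cdot\frac{\acute{\lambda}_n^{\top}g_i(\acute{t}_n,\theta_{n,h})}{1-\acute{\lambda}_n^{\top}g_i(\acute{t}_n,\theta_{n,h})}.$$
Bounding the Frobenius norm of $\sum_i(\acute{p}_i-n^{-1})A_i$ by $\bigl\{M_n/(1-M_n)\bigr\}\cdot n^{-1}\sum_i\|A_i\|_{\ell^2_{J\times J}}$, where $M_n:=\max_{1\le i\le n}|\acute{\lambda}_n^{\top}g_i(\acute{t}_n,\theta_{n,h})|$, reduces the task to a scalar bound on $M_n$. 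The average $n^{-1}\sum_i\|A_i\|_{\ell^2_{J\times J}}$ is $O_p(1)$ by the WLLN for triangular arrays (using Condition~\ref{A6}), and $M_n$ is controlled by Cauchy--Schwarz, $M_n\le \|\acute{\lambda}_n\|_{\ell^2_J}\cdot\max_i\|g_i(\acute{t}_n,\theta_{n,h})\|_{\ell^2_J}$. I would invoke the AG09 toolkit here: Lemma~5 of AG09 delivers $\|\acute{\lambda}_n\|_{\ell^2_J}=O_p(n^{-1/2})$ and Lemma~3 delivers $\max_i\|g_i(\acute{t}_n,\theta_{n,h})\|_{\ell^2_J}=O_p(n^{1/(2+\delta)})$, so their product is $O_p(n^{-\delta/[2(2+\delta)]})=o_p(1)$ whenever $\delta>0$.

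The main obstacle is verifying that the AG09 Lagrange-multiplier and moment-truncation lemmas apply in our setup --- this is why the parameter space $\mathcal{F}_+$ imposes Assumption~T (existence and boundedness of $\acute{t}_n$), Condition~\ref{A6} (the $(2+\delta)$ moment bound), and Condition~\ref{A5} (nonsingular correlation matrix bounded away from zero). With these in hand, the rate arithmetic closes mechanically, Lemma~\ref{representation} upgrades sequential convergence to uniform convergence over $\mathcal{F}_+$, and the extension to arbitrary subsequences $\{w_n\}$ is a cosmetic relabeling of indices.
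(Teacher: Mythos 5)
Your proposal is correct and follows essentially the same route as the paper's proof: the same centering decomposition absorbed via Lemma~\ref{L2CS}, the same reduction of the reweighting term to the bound $\{M_n/(1-M_n)\}\cdot n^{-1}\sum_i\|A_i\|_{\ell^2_{J\times J}}$ using the EL dual identity, and the same appeal to Lemmas~3 and~5 of AG09 plus the triangular-array WLLN, with Lemma~\ref{representation} handling uniformity. (Your rate exponent $-\delta/[2(2+\delta)]$ is in fact the correct product of $n^{-1/2}$ and $n^{1/(2+\delta)}$; the paper's stated exponent differs but the conclusion $o_p(1)$ is unaffected.)
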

\section{Technical Lemmas for Local Power}\label{Section Local Pwr Thms 2 3}
\subsection{A Preliminary Lemma}
This technical result shows that along any sequence of $\{(\theta_{n,*},F_{n}): n \geq 1\}$ that satisfies Assumption LA1, $\max_{1 \leq i \leq n }|g_{j}(W_{i},\theta_{n,*})|=O_{p}(n^{\frac{1}{2+\delta}})=o_{p}(n^{\frac{1}{2}})$ for any $j \in \mathcal{J}$, where $\delta>0$ is defined in Assumption LA1. The practical consequence is that $\max_{1 \leq i \leq n}\max_{1\leq j \leq J}|g_{j}(W_{i},\theta_{n,*})|=o_{p}(n^{\frac{1}{2}})$ and, by the equivalence of norms in Euclidean space, $\max_{1 \leq i \leq n}||g_{j}(W_{i},\theta_{n,*})||_{\ell^{2}_{J}}=o_{p}(n^{\frac{1}{2}})$. The result is used in the proofs of Lemma \ref{LP3}, Lemma \ref{LP4}, and Lemma \ref{LPC1}.
\begin{lemma}\label{owen112}
For any sequence $\{(\theta_{n,*},F_{n}): n \geq 1\}$ of $n^{-\frac{1}{2}}$-local alternatives that satisfies LA1, the following is true: $\max_{1 \leq i \leq n}|g_{j}(W_{i},\theta_{n,*})|=O_{p}(n^{\frac{1}{2+\delta}})$ for each $j \in \mathcal{J}$, where $\delta>0$ is defined in Assumption LA1.
\begin{proof}
We outline the proof and then provide the steps.
\paragraph{Outline.} The proof is similar to that of equation~(2.4) in \cite{guggenberger2005generalized}. In the first step, we choose an appropriate $C>0$. In the second step, we apply the union bound and Markov's inequality to establish the result.
\paragraph{Step 1.} Fix $r>0$, $j \in \mathcal{J}$, and $\{(\theta_{n,*},F_{n}): n \geq 1\}$ arbitrarily. We know that $$K:=\sup_{n \geq 1}E_{F_{n}}|g_{j}(W_{i},\theta_{n,*})|^{2+\delta}<+\infty$$ by Assumption LA1 and can therefore choose $C>0$ so that $K/C<r$. Such a constant $C>0$ exists. For example, one may pick a sufficiently large natural number.
\paragraph{Step 2.} We know that
\begin{align*}
P_{F_{n}}\Big(\max_{1 \leq i \leq n}|g_{j}(W_{i},\theta_{n,*})|\leq (C n)^{\frac{1}{2+\delta}}\Big)\leq \sum_{i=1}^{n}P_{F_{n}}\Big(|g_{j}(W_{i},\theta_{n,*})|^{2+\delta}>n C \Big)\leq \frac{K}{C} <r
\end{align*}
where the first inequality applies the union bound, the second follows from Markov's inequality and taking the supremum of $\{E_{F_{n}}|g_{j}(W_{i},\theta_{n,*})|^{2+\delta}: n \geq 1\}$, and the third holds by the construction of $C$. We have constructed an upper bound (i.e. $K/C$) that is does not depend on $n$, so we can take the supremum to conclude that
\begin{align*}
\sup_{n \geq 1}P_{F_{n}}\Big(\max_{1 \leq i \leq n}|g_{j}(W_{i},\theta_{n,*})|\leq (C n)^{\frac{1}{2+\delta}}\Big)<r
\end{align*}
and complete the proof.
\end{proof}
\end{lemma}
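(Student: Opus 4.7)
The plan is to invoke the classical maximal-moment inequality: if row-wise i.i.d.~random variables have a uniformly bounded $(2+\delta)$-th absolute moment, their row maximum is $O_p(n^{1/(2+\delta)})$. The argument requires only Markov's inequality and the union bound; no deep concentration machinery or truncation is needed.

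First, I would translate Assumption~LA1(iii), which controls the \emph{standardised} moment $E_{F_n}|g_j(W_i,\theta_{n,*})/\sigma_{F_n,j}(\theta_{n,*})|^{2+\delta}$ uniformly in $n$, into a uniform bound on the raw moment. This uses the variance ceiling $\sigma^2_{F_n,j}(\theta_{n,*}) \le M_*$ inherited from Condition~4 of Definition~\ref{def1} (valid because the local-alternative pair $(\theta_{n,*},F_n)$ is within $O(n^{-1/2})$ of an element of $\mathcal{F}_+$ by LA1(i)). The product bound yields a finite constant $K := \sup_{n\ge 1} E_{F_n}|g_j(W_i,\theta_{n,*})|^{2+\delta} < +\infty$.

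Next, for any $r>0$ I would pick $C>0$ large enough that $K/C < r$, then combine the union bound across $i=1,\ldots,n$ with Markov's inequality applied to the nonnegative random variable $|g_j(W_i,\theta_{n,*})|^{2+\delta}$, exploiting the i.i.d.~structure within each row (Condition~2 of Definition~\ref{def1}):
\[
P_{F_n}\!\left(\max_{1\le i\le n}|g_j(W_i,\theta_{n,*})| > (Cn)^{1/(2+\delta)}\right) \le n \cdot \frac{E_{F_n}|g_j(W_1,\theta_{n,*})|^{2+\delta}}{Cn} \le \frac{K}{C} < r.
\]
Since this bound is independent of $n$, the supremum over $n$ remains below $r$, which matches the definition of $O_p(n^{1/(2+\delta)})$.

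I do not anticipate any substantive obstacle. The only subtlety is the first step: LA1(iii) is phrased in terms of the standardised moment, whereas the lemma's conclusion concerns the raw quantity, so one must pass through the uniform variance ceiling to pick up a constant that does not blow up with $n$. Once that is done, the rest is textbook — Markov, union bound, and i.i.d.~sampling within each row close the argument in a single line.
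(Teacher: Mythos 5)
Your proposal is correct and follows essentially the same route as the paper: a union bound across $i=1,\ldots,n$ combined with Markov's inequality applied to $|g_j(W_i,\theta_{n,*})|^{2+\delta}$, using the uniform $(2+\delta)$-moment bound from Assumption~LA1 to obtain an $n$-free tail bound. Your explicit passage from the standardised moment in LA1(iii) to the raw moment via the variance ceiling is a detail the paper's proof glosses over (it simply asserts $K<+\infty$ ``by Assumption LA1''), so if anything your write-up is slightly more careful on that point.
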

\subsection{Restricted Estimator Under Local Alternatives}
The restricted empirical likelihood problem is
\begin{align}
\sup_{p_{1},...,p_{n}} \Bigg \{\sum_{i=1}^{n}\ln(p_{i}) \bigg | \sum_{i=1}^{n}p_{i}g(W_{i},\theta_{n,*}) \geq 0_{J}, \ \sum_{i=1}^{n}p_{i}=1, \ p_{i} \geq 0 \ \forall \ i=1,...,n\Bigg\}.
\end{align}
The Lagrangian is
\begin{align}\label{LPEL1}
\mathcal{L}(p_{1},...,p_{n}, \mathbf{\lambda}(\theta_{n,*}), \omega(\theta_{n,*})) = \sum_{i=1}^{n}\ln(p_{i})+\omega\Bigg(1-\sum_{i=1}^{n}p_{i}\Bigg) -n\lambda^{\top}\Bigg(\sum_{i=1}^{n}p_{i}g(W_{i}, \theta_{n,*})\Bigg)
\end{align}
and the Karusch-Kuhn-Tucker (KKT) conditions are
\begin{align}
&\frac{\partial \mathcal{L}}{\partial p_{i}} = \frac{1}{p_{i}}-\omega- n \lambda'g(W_{i}, \theta_{n,*})=0, \quad \forall \ i=1,...,n \label{KKT1} \\
&\lambda_{j} \leq 0, \quad \sum_{i=1}^{n}p_{i}g_{j}(W_{i}, \theta_{n,*})\geq 0, \quad \forall \ j \in \mathcal{J} \label{KKT2} \\
&\sum_{i=1}^{n}p_{i}=1, \quad \lambda_{j}\sum_{i=1}^{n}p_{i}g_{j}(W_{i}, \theta_{n,*})=0 \quad \forall \ j \in \mathcal{J}. \label{KKT3}
\end{align}
From the Karusch-Kuhn-Tucker conditions, we have that
\begin{align}\label{LPE2}
\acute{p}_{i} = \frac{1}{n}\bigg (\frac{1}{1+(\acute{\lambda}_{n,b})^{\top}g_{b}(W_{i},\theta_{n,*})}\bigg)
\end{align}
where $g_{b}(W_{i},\theta_{n,*})$ denotes the vector of estimating functions for the moments that are deemed binding by the Karusch-Kuhn-Tucker conditions and $\acute{\lambda}_{n,b}$ is the vector of Lagrange multipliers that corresponds to $g_{b}(W_{i},\theta_{n,*})$. Substituting (\ref{LPE2}) into (\ref{LPEL1}), we obtain the dual representation of the empirical likelihood problem,
\begin{align}\label{LPdual}
\sup_{\lambda \in \mathbb{R}^{J}_{-}}\Bigg\{n\ln(n)+\sum_{i=1}^{n}\ln\big(1+\lambda^{\top}g(W_{i},\theta_{n,*})\big)\Bigg\}.
\end{align}
The existence of Lagrange multipliers holds because of the fact that the constraints are affine functions of the choice variables in the primal problem (\ref{LPEL1}).
\subsection{Technical Results Relating to the Constrained Estimator}
Recall that the set $\mathcal{H}$ is defined as the set of all local alternatives $\{(\theta_{n,*},F_{n}): n \geq 1\}$ that satisfy Assumptions~LA1 and LA2.
\begin{lemma}\label{LP0}
For each $\{(\theta_{n,*},F_{n}): n \geq 1\} \in \mathcal{H}$, define the random set
\begin{align*}
\mathcal{C}_{n}(\theta_{n,*})=\Bigg\{(p_{1},...,p_{n})^{\top} \in \mathbb{R}^{n}: \sum_{i=1}^{n}p_{i}g_{j}(W_{i},\theta_{n,*}) \geq 0 \ \forall \ j \in \mathcal{J}, \ \sum_{i=1}^{n}p_{i}=1, \ p_{i} \geq 0 \ \forall \ i\in \mathcal{I}\Bigg\}.
\end{align*}
Then $\lim_{n\rightarrow+\infty}P_{F_{n}}\big(\mathcal{C}_{n}(\theta_{n,*}) = \emptyset\big) = 0$ for each $\{(\theta_{n,*},F_{n}): n \geq 1\} \in \mathcal{H}$.
\begin{proof}
We start with an outline and the provide the details.
\paragraph{Outline.} The proof proceeds by the direct method. In Step 1, we establish we establish that it suffices to show that $P_{F_{n}}(\hat{g}_{n,j}(\theta_{n,*}) < 0) \rightarrow 0$ as $n\rightarrow +\infty$ for each $j \in \mathcal{J}$. In Step 2, we establish the result using a mean-value expansion and the WLLN for triangular arrays of row-wise i.i.d. random variables.
\paragraph{Step 1.} The proof of the first step follows the a similar argument to that of Lemma \ref{null0}. We know that $\big\{\mathcal{C}_{n}(\theta_{n,*})=\emptyset\big\} = \Big\{\forall \ \mathbf{p} \in \mathcal{S}_{n}, \ \exists \ j:=j(\mathbf{p}) \in \mathcal{J} \ s.t. \ \sum_{i=1}^{n}p_{i}g_{j}(W_{i},\theta_{n,*})<0\big\}$, where $\mathcal{S}_{n}$ is the standard simplex. That is, $\mathcal{S}_{n}:= \{\mathbf{p} \in \mathbb{R}^{n}: \sum_{i=1}^{n}p_{i}=1, \ p_{i} \geq 0 \ \forall \ i \in \mathcal{I}\}$. Since $(n^{-1},...,n^{-1})^{\top} \in \mathcal{S}_{n}$, it follows that
\begin{align*}
\big\{\mathcal{C}_{n}(\theta_{n,*})=\emptyset\big\} \subseteq \bigcup_{j=1}^{J}\Bigg\{\frac{1}{n}\sum_{i=1}^{n}g_{j}(W_{i},\theta_{n,*})<0\Bigg\}
\end{align*}
and therefore
\begin{align*}
\lim_{n\rightarrow+\infty}P_{F_{n}}\Big(\mathcal{C}_{n}(\theta_{n,*})=\emptyset\Big) \leq \sum_{j=1}^{J}\lim_{n\rightarrow+\infty}P_{F_{n}}\Bigg(\frac{1}{n}\sum_{i=1}^{n}g_{j}(W_{i},\theta_{n,*})<0\Bigg).
\end{align*}
Hence it suffices to show that $\lim_{n\rightarrow+\infty}P_{F_{n}}(\hat{g}_{n,j}(\theta_{n,*})< 0 )=0$ for each $j \in \mathcal{J}$.
\paragraph{Step 2.} For each $j \in \mathcal{J}$, we can mean-value expand $E_{F_{n}}g_{j}(W_{i},\theta_{n,*})/\sigma_{F_{n},j}(\theta_{n,*})$ around $\{(\theta_{n},F_{n}): n \geq 1\} \in \mathcal{F}$ and conclude that
\begin{align*}
\frac{E_{F_{n}}g_{j}(W_{i},\theta_{n,*})}{\sigma_{F_{n},j}(\theta_{n,*})} = \frac{E_{F_{n}}g_{j}(W_{i},\theta_{n})}{\sigma_{F_{n},j}(\theta_{n})} + O(n^{-\frac{1}{2}})
\end{align*}
for each $n \geq 1$. So by the weak law of large numbers for triangular arrays of row-wise i.i.d. data, it follows that $\lim_{n\rightarrow+\infty}P_{F_{n}}(\hat{g}_{n,j}(\theta_{n,*})<0)=0$, and therefore $\lim_{n\rightarrow+\infty}P_{F_{n}}(\mathcal{C}_{n}(\theta_{n,*})=\emptyset)=0$.
\end{proof}
\end{lemma}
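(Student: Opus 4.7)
The plan is to show that $\mathcal{C}_n(\theta_{n,*})$ contains an explicit feasible vector with probability tending to one, which immediately forces $P_{F_n}(\mathcal{C}_n(\theta_{n,*}) = \emptyset) \to 0$. The natural candidate is the uniform weight $p_i = 1/n$, under which $\sum_i p_i g_j(W_i, \theta_{n,*}) = \hat g_{n,j}(\theta_{n,*})$, and so uniform weights are feasible precisely when $\hat g_{n,j}(\theta_{n,*}) \ge 0$ for every $j \in \mathcal{J}$. The union bound therefore reduces the task to the coordinate-wise statement that $P_{F_n}(\hat g_{n,j}(\theta_{n,*}) < 0) \to 0$ for each $j \in \mathcal{J}$.

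To control each such summand, I would first perform a first-order mean-value expansion of the normalized moment $E_{F_n}g_j(W_i, \theta)/\sigma_{F_n,j}(\theta)$ in $\theta$ around $\theta_n$. Assumption~LA2 supplies the requisite continuous differentiability, and Assumption~LA1 gives the shift $\theta_{n,*} - \theta_n = \eta\, n^{-1/2}(1+o(1))$, yielding
$$\frac{E_{F_n} g_j(W_i, \theta_{n,*})}{\sigma_{F_n,j}(\theta_{n,*})} = \frac{E_{F_n} g_j(W_i, \theta_n)}{\sigma_{F_n,j}(\theta_n)} + O(n^{-1/2}),$$
in which the first right-hand term is non-negative because $(\theta_n, F_n) \in \mathcal{F}_+ \subset \mathcal{F}$ by LA1. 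The uniform $(2+\delta)$-moment bound in LA1 then licenses the weak law of large numbers for triangular arrays of row-wise i.i.d.\ random variables, giving $\hat g_{n,j}(\theta_{n,*}) - E_{F_n} g_j(W_i, \theta_{n,*}) = o_p(1)$. Stringing these together, the sample mean is a non-negative population quantity perturbed by $O(n^{-1/2}) + o_p(1)$.

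The step I expect to be most delicate is reading a strict sign off this chain when $h_{1,j}$ is finite. In that regime $E_{F_n} g_j(W_i, \theta_n) = O(n^{-1/2})$, and the WLLN fluctuation is of the same order, so a pure WLLN argument cannot cleanly separate signal from noise. Closing this case calls either for a one-sided CLT refinement --- exploiting that the Gaussian limit is centered at $h_{1,j}$ plus a local-perturbation correction that, in the regimes in which the lemma is subsequently invoked, stays non-negative --- or for replacing the uniform-weight certificate by a data-dependent probability vector that shifts mass toward observations with larger $g_j(W_i, \theta_{n,*})$, using that the third part of LA1 still guarantees $\max_i g_j(W_i, \theta_{n,*}) = o_p(n^{1/2})$ while remaining strictly positive with high probability. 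The remaining ingredients --- union bound, mean-value expansion, and triangular-array WLLN --- are otherwise standard.
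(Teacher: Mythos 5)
Your reduction is exactly the paper's: uniform weights certify feasibility, the union bound reduces the lemma to showing $P_{F_n}(\hat g_{n,j}(\theta_{n,*})<0)\to 0$ for each $j$, and the mean-value expansion together with $(\theta_n,F_n)\in\mathcal{F}_+$ shows the population mean is non-negative up to an $O(n^{-1/2})$ perturbation. The ``delicate step'' you flag is a genuine gap, and you are right not to paper over it: when $h_{1,j}<\infty$ the drift $E_{F_n}\big(g_j(W_i,\theta_{n,*})\big)$ is $O(n^{-1/2})$ (possibly exactly zero, possibly negative of that order), while the sampling fluctuation of $\hat g_{n,j}(\theta_{n,*})$ is $O_p(n^{-1/2})$ by the CLT, so $P_{F_n}(\hat g_{n,j}(\theta_{n,*})<0)$ converges to a strictly positive limit (for instance $1/2$ when the mean is exactly zero), not to zero. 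The WLLN only delivers $P_{F_n}(\hat g_{n,j}(\theta_{n,*})<-\varepsilon)\to 0$ for each fixed $\varepsilon>0$, which is strictly weaker than what is needed. You should know that the paper's own Step~2 asserts the conclusion directly from the WLLN and therefore commits precisely the error you are worried about (as does the analogous Step in Lemma~\ref{null0} for the null case, where $E_F\big(g_j(W_i,\theta)\big)$ may equal zero); your write-up is, if anything, more careful than the paper's about where the argument as stated stops working.

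Of your two suggested repairs, only the second can succeed. No one-sided CLT refinement rescues the uniform-weight certificate, because the event $\{\hat g_{n,j}(\theta_{n,*})<0\}$ genuinely has asymptotically positive probability in the finite-$h_{1,j}$ regime; the inclusion $\{\mathcal{C}_n(\theta_{n,*})=\emptyset\}\subseteq\bigcup_{j}\{\hat g_{n,j}(\theta_{n,*})<0\}$ is simply too loose there. The lemma itself remains plausible because $\mathcal{C}_n(\theta_{n,*})\neq\emptyset$ if and only if the convex hull of $\{g(W_i,\theta_{n,*}): i\le n\}$ intersects $\mathbb{R}_{+}^{J}$, and an Owen-type convex-hull argument --- using the nondegeneracy of the correlation matrix in Definition~\ref{def1} and the uniform $(2+\delta)$ moment bound to show that the hull contains a (slowly shrinking) neighbourhood of the population mean, hence a point of $\mathbb{R}_{+}^{J}$, with probability tending to one --- is the data-dependent-weights route you sketch and is what a complete proof would need.
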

Lemma \ref{LP0} is an important intermediate technical result because it allows us to conclude that along any sequence $\{(\theta_{n,*},F_{n}): n \geq 1\}\in \mathcal{H}$, the empirical likelihood estimator exists with probability approaching $1$. In all of the subsequent results, it is implicit that the event $\{\mathcal{C}_{n}(\theta_{n,*}) \neq \emptyset\}$ occurs.
\begin{lemma}\label{LP1}
Define $\hat{g}_{n,b}(\theta_{n,*}):=n^{-1}\sum_{i=1}^{n}g_{b}(W_{i},\theta_{n,*})$. The following result holds for any sequence $\{(\theta_{n,*},F_{n}): n \geq 1\}$ of $n^{-\frac{1}{2}}$-local alternatives: $P_{F_{n}}\big((\acute{\lambda}_{n,b})^{\top}\hat{g}_{n,b}(\theta_{n,*})\geq 0\big)=1$ for each $n \geq 1$.
\begin{proof}
We outline the proof and then provide details.
\paragraph{Outline.} The proof employs the direct method. The first step shows that for any sequence $\{(\theta_{n,*},F_{n}): n \geq 1\}$, $\log(1+(\acute{\lambda}_{n,b})^{\top}\hat{g}_{n,b}(\theta_{n,*})) \geq 0$ with probability equal to $1$. The second step concludes the result using basic properties of the logarithmic function.
\paragraph{Step 1.}
Let $\acute{\lambda}_{n}$ denote a feasible solution to the dual problem (\ref{LPdual}) under an arbitrary sequence $\{(\theta_{n,*},F_{n}): n \geq 1\}$. Since the dual variables for the slack inequalities are equal to zero with probability 1 under the Karusch-Kuhn-Tucker conditions, we have that $\acute{\lambda}_{n}^{\top}g(W_{i},\theta_{n,*}) = (\acute{\lambda}_{n,b})^{\top}g_{b}(W_{i},\theta_{n,*})$ and the following holds with probability equal to 1:
\begin{align}
0 \leq \frac{1}{n}\sum_{i=1}^{n}\ln\Big(1+(\acute{\lambda}_{n,b})^{\top}g_{b}(W_{i},\theta_{n,*})\Big) &\leq \ln\bigg( 1+(\acute{\lambda}_{n,b})^{\top}\frac{1}{n}\sum_{i=1}^{n}g_{b}(W_{i},\theta_{n,*})\bigg)
\end{align}
where the first inequality holds as $2\sum_{i=1}^{n}\ln\Big(1+(\acute{\lambda}_{n,b})^{\top}g_{b}(W_{i},\theta_{n,*})\Big)$ is the empirical likelihood ratio statistic for testing the null hypothesis (see \citet{canay2010inference}) and the second holds by Jensen's inequality. This implies that $\log(1+(\acute{\lambda}_{n,b})^{\top}\hat{g}_{n,b}(\theta_{n,*})) \geq 0$ with probability equal to 1 for all $n \geq 1$.
\paragraph{Step 2.} For any $x \in \mathbb{R}$, $\ln(1+x) \geq 0$ if and only $x \geq 0$. Consequently, we use the conclusion of Step 1 to conclude that $(\acute{\lambda}_{n,b})^{\top}\hat{g}_{n,b}(W_{i},\theta_{n,*}) \geq 0$ with probability equal to 1 for each $n \geq 1$.
\end{proof}
\end{lemma}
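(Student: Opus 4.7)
The plan is to chain together two elementary facts: non-negativity of the empirical log-likelihood ratio statistic and Jensen's inequality applied to the concave function $x \mapsto \ln(1+x)$.

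First, I would invoke the KKT complementary slackness conditions (\ref{KKT2})--(\ref{KKT3}) to note that $\acute{\lambda}_{n}^{\top} g(W_{i},\theta_{n,*}) = (\acute{\lambda}_{n,b})^{\top} g_{b}(W_{i},\theta_{n,*})$ for every $i$, since the multipliers attached to slack constraints are zero with probability one. This reduces all subsequent quantities to their binding-coordinate representations.

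Second, I would establish that $\frac{1}{n}\sum_{i=1}^{n}\ln\bigl(1 + (\acute{\lambda}_{n,b})^{\top} g_{b}(W_{i},\theta_{n,*})\bigr) \geq 0$ with $P_{F_{n}}$-probability one for each $n$. The cleanest justification is to identify this quantity (up to a factor of two) as the empirical likelihood ratio statistic for testing the moment inequality restrictions. Substituting the explicit form $\acute{p}_{i} = 1/[n(1+(\acute{\lambda}_{n,b})^{\top} g_{b}(W_{i},\theta_{n,*}))]$ from (\ref{LPE2}) into $\sum_{i}\ln \acute{p}_{i}$ gives $-n\ln n - \sum_{i}\ln(1+(\acute{\lambda}_{n,b})^{\top} g_{b}(W_{i},\theta_{n,*}))$. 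Because the uniform weights $p_{i}=1/n$ are always feasible for the primal problem without the moment restrictions and maximize $\sum_{i}\ln p_{i}$ subject only to $\sum_{i} p_{i} = 1$ (by AM--GM), the constrained maximizer must satisfy $\sum_{i}\ln \acute{p}_{i} \leq -n\ln n$, which rearranges exactly to the desired non-negativity. This is the fact exploited by \citet{canay2010inference} in their empirical-likelihood bootstrap test.

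Third, I would apply Jensen's inequality to the concave map $x \mapsto \ln(1+x)$ to obtain $\frac{1}{n}\sum_{i=1}^{n}\ln\bigl(1+(\acute{\lambda}_{n,b})^{\top} g_{b}(W_{i},\theta_{n,*})\bigr) \leq \ln\bigl(1 + (\acute{\lambda}_{n,b})^{\top} \hat{g}_{n,b}(\theta_{n,*})\bigr)$. Chaining this with the bound from the second step gives $\ln\bigl(1 + (\acute{\lambda}_{n,b})^{\top} \hat{g}_{n,b}(\theta_{n,*})\bigr) \geq 0$ almost surely, and since $\ln(1+x) \geq 0$ if and only if $x \geq 0$, the conclusion follows for every $n$.

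I do not anticipate a substantive obstacle here: the argument is a short composition of standard facts. The only care required is bookkeeping of sign conventions so that complementary slackness cleanly collapses the full-vector inner product to its binding-coordinate restriction, and verifying that feasibility of the dual representation (\ref{LPdual}) keeps $1 + \acute{\lambda}_{n}^{\top} g(W_{i},\theta_{n,*})$ strictly positive so that all logarithms are well defined.
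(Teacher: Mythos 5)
Your proposal is correct and follows essentially the same route as the paper: complementary slackness to collapse to the binding coordinates, non-negativity of the empirical likelihood ratio statistic, Jensen's inequality for the concave map $x \mapsto \ln(1+x)$, and monotonicity of the logarithm. The only difference is that you spell out why the ELR statistic is non-negative (by comparing $\sum_i \ln \acute{p}_i$ with the unconstrained maximum $-n\ln n$), where the paper simply cites \citet{canay2010inference}; this is a welcome bit of added rigor but not a different argument.
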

\begin{lemma}\label{LP2}
Define random index set $\acute{\mathcal{B}}_{\varrho_{n}}:=\{j \in \mathcal{J}: \acute{g}_{n,j}(\theta_{n,*})=0\}$ and deterministic index set $C := \{j \in \mathcal{J}: \lim_{n\rightarrow\infty}E_{F_{n}}\big(g_{j}(W_{i},\theta_{n,*})\big)=0\}$. If Assumptions LA1 and LA2 hold, then the following result is true for any sequence of $n^{-\frac{1}{2}}$-local alternatives $\{(\theta_{n,*},F_{n}): n \geq 1\}$:
\begin{align}
\lim_{n\rightarrow\infty}P_{F_{n}}(\acute{\mathcal{B}}_{\varrho_{n}} \subseteq C) = 1
\end{align}
where $P_{F_{n}}(\cdot)$ is the probability measure induced by repeated sampling from $F_{n}$.
\begin{proof}
The proof has multiple steps so we present an outline and then the steps in detail.
\paragraph{Outline.} We want to show that the event $\big\{\acute{\mathcal{B}}_{\varrho_{n}} \subseteq C\big\}$ occurs with probability approaching $1$ along any sequence $\{(\theta_{n,*},F_{n}): n \geq 1\}$. This involves three steps. In Step 1, we use the complement rule to deduce that this is equivalent to showing that $\big\{\acute{\mathcal{B}}_{\varrho_{n}}\cap C^{c}\neq \emptyset \big\}$ occurs with probability approaching $0$ along $\{(\theta_{n,*},F_{n}): n \geq 1\}$. In Step 2, we characterize the event $\big\{\acute{\mathcal{B}}_{\varrho_{n}}\cap C^{c}\neq \emptyset \big\}$. In Step 3, we argue that that the event $\big\{\acute{\mathcal{B}}_{\varrho_{n}}\cap C^{c}\neq \emptyset \big\}$ occurs with probability approaching $0$ along $\{(\theta_{n,*},F_{n}): n \geq 1\}$.
\paragraph{Step 1.} Let $\{(\theta_{n,*}, F_{n}): n \geq 1\}$ be an arbitrary sequence of $n^{-\frac{1}{2}}$-local alternatives. By the complement rule,
\begin{align}\label{order1}
\lim_{n\rightarrow\infty}P_{F_{n}}(\acute{\mathcal{B}}_{\varrho_{n}} \subseteq C) = 1 -\lim_{n\rightarrow\infty}P_{F_{n}}(\acute{\mathcal{B}}_{\varrho_{n}}\cap C^{c} \neq \emptyset).
\end{align}
So to show $\lim_{n\rightarrow\infty}P_{F_{n}}(\acute{\mathcal{B}}_{\varrho_{n}} \subseteq C)=1$, it suffices to show that $\lim_{n\rightarrow\infty}P_{F_{n}}(\acute{\mathcal{B}}_{\varrho_{n}}\cap C^{c} \neq \emptyset)=0$.
\paragraph{Step 2.} On the event $\{ \acute{\mathcal{B}}_{\varrho_{n}}\cap C^{c} \neq \emptyset\}$, there exists $j \in \mathcal{J}$ such that $\acute{g}_{n,j}(\theta_{n,*}) =0$ and $\lim_{n\rightarrow\infty}E_{F_{n}}(g_{j}(W_{i},\theta_{n,*})) > 0$. The deduction that $\lim_{n\rightarrow+\infty}E_{F_{n}}g_{j}(W_{i},\theta_{n,*})>0$ follows from a mean-value expansion of $E_{F_{n}}g_{j}(W_{i},\theta_{n,*})/\sigma_{F_{n},j}(\theta_{n,*})$ around $\theta_{n}$, which yields
\begin{align*}
\frac{E_{F_{n}}g_{j}(W_{i},\theta_{n,*})}{\sigma_{F_{n},j}(\theta_{n,*})}=\frac{E_{F_{n}}g_{j}(W_{i},\theta_{n})}{\sigma_{F_{n},j}(\theta_{n})}+O(n^{-\frac{1}{2}})
\end{align*}
and therefore $E_{F_{n}}g_{j}(W_{i},\theta_{n,*})$ is asymptotically nonnegative because $\{(\theta_{n},F_{n}): n \geq 1\} \in \mathcal{F}$. The expansion is valid under LA2. So all we need to show that with probability approaching $1$ it is not possible for $\acute{g}_{n,j}(\theta_{n,*})=0$ and $\lim_{n\rightarrow+\infty}E_{F_{n}}g_{j}(W_{i},\theta_{n,*})>0$ to be satisfied jointly for any $j \in \mathcal{J}$.
\paragraph{Step 3.}
Suppose that there exists $j \in \mathcal{J}$ such that $\acute{g}_{n,j}(\theta_{n,*}) = 0$ and $\lim_{n\rightarrow+\infty}E_{F_{n}}g_{j}(W_{i},\theta_{n,*})>0$. We first note that $\acute{g}_{n,j}(\theta_{n,*}) = 0$ implies $\acute{g}_{n,j}(\theta_{n,*}) \geq \hat{g}_{n,j}(\theta_{n,*})$ because
\begin{align*}
0=\acute{g}_{n,j}(\theta_{n,*}) = \frac{1}{n}\sum_{i=1}^{n}\Bigg(\frac{g_{j}(W_{i},\theta_{n,*})}{1+(\acute{\lambda}_{n,b})^{\top}g_{b}(W_{i},\theta_{n,*})}\Bigg) \geq \frac{\hat{g}_{n,j}(\theta_{n,*})}{1+(\acute{\lambda}_{n,b})^{\top}\hat{g}_{b}(\theta_{n,*})}  \geq \hat{g}_{n,j}(\theta_{n,*}).
\end{align*}
by Jensen's inequality and the fact that $\acute{\lambda}_{n,b})^{\top}\hat{g}_{b}(\theta_{n,*}) \geq 0$ (see Lemma \ref{LP1}). Next, by adding and subtracting the expectation of $\hat{g}_{n,j}(\theta_{n,*})$, we have that
\begin{align}\label{order2}
\acute{g}_{n,j}(\theta_{n,*})\geq\hat{g}_{n,j}(\theta_{n,*})-E_{F_{n}}(\hat{g}_{n,j}(\theta_{n,*}))+E_{F_{n}}(\hat{g}_{n,j}(\theta_{n,*}))=o_{p}(1)+E_{F_{n}}(g_{j}(W_{i},\theta_{n,*}))
\end{align}
along $\{(\theta_{n,*},F_{n}: n \geq 1\}$ by the weak law of large numbers for triangular arrays of row-wise i.i.d. random variables and the unbiasedness of $\hat{g}_{n,j}(\theta_{n,*})$ for $E_{F_{n}}(g_{j}(W_{i},\theta_{n,*}))$. If we send $n\rightarrow+\infty$, we deduce that the probability limit of $\acute{g}_{n,j}(\theta_{n,*})$ is strictly positive by the ordering in (\ref{order2}) and the fact that $\lim_{n\rightarrow+\infty}E_{F_{n}}g_{j}(W_{i},\theta_{n,*})>0$. Consequently, $\lim_{n\rightarrow+\infty}P_{F_{n}}(\acute{\mathcal{B}}_{\varrho_{n}}\cap C^{c} \neq \emptyset)=0$ along $\{(\theta_{n,*},F_{n}: n \geq 1\}$. Combining this result with Step 1, we complete the proof.
\end{proof}
\end{lemma}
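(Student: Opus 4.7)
The plan is to prove the result via its contrapositive form: showing that $P_{F_n}(\acute{\mathcal{B}}_{\varrho_n} \cap C^c \neq \emptyset) \to 0$ as $n \to \infty$. Since $|\mathcal{J}| = J < \infty$, finite subadditivity of probability measures reduces this to showing, for each fixed $j \in C^c$, that $P_{F_n}(\acute{g}_{n,j}(\theta_{n,*}) = 0) \to 0$. Lemma~\ref{LP0} ensures the empirical likelihood feasible set is nonempty with probability approaching one, so the constrained estimator $\acute{g}_{n,j}$ is well-defined on a high-probability event and the analysis is carried out there.

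For a fixed $j \in C^c$, the first step is to use Assumption~LA2 to mean-value-expand the normalized population moment $E_{F_n}(g_j(W_i, \theta_{n,*}))/\sigma_{F_n,j}(\theta_{n,*})$ around $\theta = \theta_n$. Combined with Assumption~LA1 (which yields $n^{1/2} E_{F_n}(g_j(W_i,\theta_n))/\sigma_{F_n,j}(\theta_n) \to h_{1,j} \in \mathbb{R}_{+,\infty}$) and the $O(n^{-1/2})$ local perturbation being negligible in the limit, this delivers $\lim_n E_{F_n}(g_j(W_i, \theta_{n,*})) > 0$ for every $j \in C^c$. The central analytic step is then to exploit the closed-form EL probabilities $\acute{p}_i = n^{-1}(1 + \acute{\lambda}_{n,b}^\top g_b(W_i, \theta_{n,*}))^{-1}$ arising from the first-order conditions of the dual problem~(\ref{LPdual}). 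A Jensen-type inequality applied to the convex map $x \mapsto 1/x$ on the positive reals, followed by Lemma~\ref{LP1} which guarantees $\acute{\lambda}_{n,b}^\top \hat{g}_{n,b}(\theta_{n,*}) \geq 0$ almost surely, should yield the chain
\[
\acute{g}_{n,j}(\theta_{n,*}) = \frac{1}{n}\sum_{i=1}^{n} \frac{g_j(W_i, \theta_{n,*})}{1 + \acute{\lambda}_{n,b}^\top g_b(W_i, \theta_{n,*})} \;\geq\; \frac{\hat{g}_{n,j}(\theta_{n,*})}{1 + \acute{\lambda}_{n,b}^\top \hat{g}_{n,b}(\theta_{n,*})} \;\geq\; \hat{g}_{n,j}(\theta_{n,*}).
\]
The WLLN for triangular arrays of row-wise i.i.d. random variables then gives $\hat{g}_{n,j}(\theta_{n,*}) \overset{p}{\rightarrow} \lim_n E_{F_n}(g_j(W_i, \theta_{n,*})) > 0$, forcing $\acute{g}_{n,j}(\theta_{n,*})$ to be bounded below asymptotically by a strictly positive constant and thereby making $\{\acute{g}_{n,j}(\theta_{n,*}) = 0\}$ a vanishing-probability event.

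The main obstacle will be rigorously justifying the first inequality in the displayed chain: the denominators $1 + \acute{\lambda}_{n,b}^\top g_b(W_i, \theta_{n,*})$ must be strictly positive for each $i$ in order for the convexity of $x \mapsto 1/x$ on the positive reals to apply. This positivity is baked into EL feasibility (the tilted probabilities $\acute{p}_i$ must be nonnegative), but it needs to be invoked carefully in concert with the sign information from Lemma~\ref{LP1} to deliver both inequalities simultaneously. Once this step is secured, a union bound over $j \in \mathcal{J}$ together with the WLLN convergence completes the argument.
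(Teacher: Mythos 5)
Your proposal is correct and follows essentially the same route as the paper's own proof: the complement/union-bound reduction, the mean-value expansion under Assumption LA2 to establish $\lim_{n}E_{F_{n}}(g_{j}(W_{i},\theta_{n,*}))>0$ for $j\in C^{c}$, the Jensen-type inequality combined with Lemma \ref{LP1} to obtain $\acute{g}_{n,j}(\theta_{n,*})\geq\hat{g}_{n,j}(\theta_{n,*})$, and the WLLN for triangular arrays to conclude. Your added attention to the strict positivity of the denominators $1+\acute{\lambda}_{n,b}^{\top}g_{b}(W_{i},\theta_{n,*})$ (which follows from feasibility of the tilted probabilities) is a point the paper leaves implicit, but it does not change the argument.
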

\begin{lemma}\label{LP3}
Let $B_{n}:=|\acute{\mathcal{B}}_{\varrho_{n}}|$, $\acute{\Lambda}:=\{\acute{\lambda}_{n,b} | (\ref{KKT1}), (\ref{KKT2}), (\ref{KKT3}) \ hold\}\subseteq \mathbb{R}^{B_{n}}$, and $||\cdot ||_{\ell_{B_{n}}^{2}}$ denote the Euclidean norm on $\mathbb{R}^{B_{n}}$. If Assumptions LA1 and LA2 hold, then $\sup_{\acute{\lambda}_{n,b} \in \acute{\Lambda}_{n}}||\acute{\lambda}_{n,b}||_{\ell^{2}_{B_{n}}} =O_{p}(n^{-\frac{1}{2}})$ along any sequence $\{(\theta_{n,*},F_{n}): n \geq 1\}$.
\begin{proof}
The proof proceeds by the direct method. Given the length of the proof, we outline the argument and then provide detailed steps.
\paragraph{Outline.} Our goal is to show that for any $\{(\theta_{n,*},F_{n}): n \geq 1\}$, $\sup_{\acute{\lambda}_{n,b} \in \acute{\Lambda}_{n}}||\acute{\lambda}_{n,b}||_{\ell^{2}_{B_{n}}}=O_{p}(n^{-\frac{1}{2}})$. This involves three steps. In the first step, we do some algebra to relate the Karusch-Kuhn-Tucker conditions to $||\acute{\lambda}_{n,b}||_{\ell^{2}_{B_{n}}}$. In the second step, we derive a bound relating $||\acute{\lambda}_{n,b}||_{\ell^{2}_{B_{n}}}$ and the sample moments of the inequalities that are binding under the Karusch-Kuhn-Tucker conditions. In the third step, we use standard limit theorems for triangular arrays of row-wise i.i.d. random variables and the bound derived in Step 2 to conclude the result.
\paragraph{Step 1.}
Let $\{(\theta_{n,*}, F_{n}): n \geq 1\}$ be arbitrary. The Karusch-Kuhn-Tucker conditions dictate that $\acute{\lambda}_{n}$ satisfies
\begin{align}\label{FOCLP1}
\frac{1}{n}\sum_{i=1}^{n}\frac{g_{b}(W_{i},\theta_{n,*})}{1+(\acute{\lambda}_{n})^{\top}g(W_{i},\theta_{n,*})}=0_{B_{n}}.
\end{align}
Under complementary slackness $\acute{\lambda}_{n}=(\acute{\lambda}_{n,b},0_{J-B_{n}})$, which implies that (\ref{FOCLP1}) is equivalent to
\begin{align}\label{FOC2}
\frac{1}{n}\sum_{i=1}^{n}\frac{g_{b}(W_{i},\theta_{n,*})}{1+(\acute{\lambda}_{n,b})^{\top}g_{b}(W_{i},\theta_{n,*})}=0_{B_{n}}.
\end{align}
Let $\{\beta_{n}: n \geq 1\}$ be a sequence of unit vectors in $\mathbb{R}^{B_{n}}$ that satisfy $\beta_{n}||\acute{\lambda}_{n,b}||_{\ell^{2}_{B_{n}}} = \acute{\lambda}_{n,b}$. We take the inner product between $\beta_{n}$ and (\ref{FOC2}), which gives us
\begin{align}
\beta_{n}^{\top}\Bigg(\frac{1}{n}\sum_{i=1}^{n}\frac{g_{b}(W_{i},\theta_{n,*})}{1+(\acute{\lambda}_{n,b})^{\top}g_{b}(W_{i},\theta_{n,*})}\Bigg)=0
\end{align}
If we define $X_{i}:=(\acute{\lambda}_{n,b})^{\top}g_{b}(W_{i},\theta_{n,*})$ for all $i=1,...,n$ and use the transformation $$\frac{1}{1+X_{i}}= 1 - \frac{X_{i}}{1+X_{i}}$$ for all $i=1,...,n$, then we have that
\begin{align} \label{id1}
\beta_{n}^{\top}\Bigg(\frac{1}{n}\sum_{i=1}^{n}g_{b}(W_{i},\theta_{n,*})\Bigg) &= \beta_{n}^{\top}\Bigg(\frac{1}{n}\sum_{i=1}^{n}\frac{\big(g_{b}(W_{i},\theta_{n,*})\big)(\acute{\lambda}_{n,b})^{\top}g_{b}(W_{i},\theta_{n,*})}{1+(\acute{\lambda}_{n,b})^{\top}g_{b}(W_{i},\theta_{n,*})}\Bigg) \\
&=||\acute{\lambda}_{n,b}||_{\ell^{2}_{B_{n}}}\beta_{n}^{\top}\Bigg(\frac{1}{n}\sum_{i=1}^{n}\frac{g_{b}(W_{i},\theta_{n,*})g_{b}(W_{i},\theta_{n,*})^{\top}}{1+(\acute{\lambda}_{n,b})^{\top}g_{b}(W_{i},\theta_{n,*})}\Bigg)\beta_{n}. \label{id2}
\end{align}
where the last equality holds by the definition of $\{\beta_{n}: n \geq 1\}$.
\paragraph{Step 2.} Let $\hat{\Sigma}_{n,b}(\theta_{n,*})$ denote the sample analogue estimator of the covariance matrix of $g_{b}(W_{i},\theta_{n,*})$. We will relate $\hat{\Sigma}_{n,b}(\theta_{n,*})$ to the RHS of (\ref{id2}). Since $\acute{\mathcal{B}}_{\varrho_{n}}\subseteq C$ with probability approaching 1 (Lemma \ref{LP2}), we have that
\begin{align}
\hat{\Sigma}_{n,b}(\theta_{n,*})=\frac{1}{n}\sum_{i=1}^{n}g_{b}(W_{i},\theta_{n,*})g_{b}(W_{i},\theta_{n,*})^{\top}.
\end{align}
with probability approaching $1$ along $\{(\theta_{n,*},F_{n}): n \geq 1\}$. Since $p_{i}>0$, we have that $1+X_{i}>0$ for all $i=1,...,n$, which implies that with probability tending to $1$ along $\{(\theta_{n,*},F_{n}): n \geq 1\}$:
\begin{align}
||\acute{\lambda}_{n,b}||_{\ell^{2}_{B_{n}}}\beta_{n}^{\top}\hat{\Sigma}_{n,b}(\theta_{n,*})\beta_{n} &\leq ||\acute{\lambda}_{n,b}||_{\ell^{2}_{B_{n}}}\beta^{\top}_{n}\Bigg(\frac{1}{n}\sum_{i=1}^{n}\frac{g_{b}(W_{i},\theta_{n,*})g_{b}(W_{i},\theta_{n,*})^{\top}}{1+(\acute{\lambda}_{n,b})^{\top}g_{b}(W_{i},\theta_{n,*})}\Bigg)\beta_{n}(1+X_{max})
\end{align}
where $X_{max} = \max_{1 \leq i \leq n}|(\acute{\lambda}_{n,b})^{\top}g_{b}(W_{i},\theta_{n,*})|$. Applying the Cauchy-Schwarz inequality, we have that
 \begin{align}
 |(\acute{\lambda}_{n,b})^{\top}g_{b}(W_{i},\theta_{n,*})| \leq ||\acute{\lambda}_{n,b}||_{\ell^{2}_{B_{n}}}||g_{b}(W_{i},\theta_{n,*})||_{\ell^{2}_{B_{n}}},
 \end{align}
implying that
\begin{align} \label{id3}
||\acute{\lambda}_{n,b}||_{\ell^{2}_{B_{n}}}\beta_{n}^{\top}\hat{\Sigma}_{n,b}(\theta_{n,*})\beta_{n} \leq ||\acute{\lambda}_{n,b}||_{\ell^{2}_{B_{n}}}\beta^{\top}_{n}\Bigg(\frac{1}{n}\sum_{i=1}^{n}\frac{g_{b}(W_{i},\theta_{n,*})g_{b}(W_{i},\theta_{n,*})^{\top}}{1+(\acute{\lambda}_{n,b})^{\top}g_{b}(W_{i},\theta_{n,*})}\Bigg)\beta_{n}(1+||\acute{\lambda}_{n,b}||_{\ell^{2}_{B_{n}}}Z_{n}^{*})
\end{align}
where $Z_{n}^{*}:= \max_{1 \leq i \leq n}||g_{b}(W_{i},\theta_{n,*})||_{\ell^{2}_{B_{n}}}$. Apply the equality in (\ref{id2}) to the right hand side of (\ref{id3}) to conclude
\begin{align}
||\acute{\lambda}_{n,b}||_{\ell^{2}_{B_{n}}}\Bigg(\beta_{n}^{\top}\hat{\Sigma}_{n,b}(\theta_{n,*})\beta_{n}-\beta_{n}^{\top}\bigg(\frac{Z_{n}^{*}}{n}\sum_{i=1}^{n}g_{b}(W_{i},\theta_{n,*})\bigg)\Bigg) \leq \beta_{n}^{\top}\Bigg(\frac{1}{n}\sum_{i=1}^{n}g_{b}(W_{i},\theta_{n,*})\Bigg)
\end{align}
\paragraph{Step 3.} By Lemma \ref{owen112}, we have that $Z_{n}^{*}= o_{p}(n^{\frac{1}{2}})$ along $\{(\theta_{n,*}, F_{n}): \ n \geq 1\}$. Since $\acute{\mathcal{B}}_{\varrho_{n}} \subseteq C$ for large $n$, we can apply the Lyapunov CLT to $\frac{1}{n}\sum_{i=1}^{n}g_{b}(W_{i},\theta_{n,*})$ to conclude that $n^{-1}\sum_{i=1}^{n}g_{b}(W_{i},\theta_{n,*}) = O_{p}(n^{-\frac{1}{2}})$ along $\{(\theta_{n,*}, F_{n}): \ n \geq 1\}$. Finally, LA1 implies that
\begin{align}
0<a+o_{p}(1)\leq \beta_{n}^{\top}\hat{\Sigma}_{n,b}(\theta_{n,*})\beta_{n}\leq b+o_{p}(1)
\end{align}
along the sequence $\{(\theta_{n,*}, F_{n}): \ n \geq 1\}$, where $a$ and $b$ are the smallest and largest eigenvalues of the variance matrix of binding moments in the population. These limiting results allow us to conclude that
\begin{align}
||\acute{\lambda}_{n,b}||_{\ell^{2}_{B_{n}}} \leq \frac{O_{p}(n^{-\frac{1}{2}})}{a+o_{p}(1)} \quad \forall \ \acute{\lambda}_{n,b}\in \acute{\Lambda}_{n}
\end{align}
along $\{(\theta_{n,*}, F_{n}): \ n \geq 1\}$. We have shown that the positive random variable $||\acute{\lambda}_{n,b}||_{\ell^{2}_{B_{n}}}$ is bounded above by a random variable that is $O_{p}(n^{-\frac{1}{2}})$ which implies
\begin{align}
\sup_{\acute{\lambda}_{n,b}\in \Lambda_{n}^{*}}||\acute{\lambda}_{n,b}||_{\ell^{2}_{B_{n}}} = O_{p}(n^{-\frac{1}{2}})
\end{align}
along $\{(\theta_{n,*}, F_{n}): n \geq 1\}$.
\end{proof}
\end{lemma}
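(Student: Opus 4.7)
The plan is to emulate the classical empirical likelihood argument for bounding Lagrange multipliers, adapted to the present triangular-array setting with inequality constraints. I would parametrize an arbitrary $\acute{\lambda}_{n,b} \in \acute{\Lambda}_{n}$ as $\rho_{n}\beta_{n}$ with $\rho_{n} := ||\acute{\lambda}_{n,b}||_{\ell^{2}_{B_{n}}}$ and $\beta_{n} \in \mathbb{R}^{B_{n}}$ a unit vector, project the KKT stationarity equation~(\ref{KKT1}) (after using complementary slackness to eliminate the inactive coordinates) onto $\beta_{n}$, and apply the algebraic identity $1/(1+x) = 1 - x/(1+x)$ to split the equation into a sample-mean term and a weighted quadratic form in the multiplier. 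Solving the resulting inequality for $\rho_{n}$ should yield a bound of the form $\rho_{n} \cdot \beta_{n}^{\top}\hat{\Sigma}_{n,b}(\theta_{n,*})\beta_{n} \lesssim |\beta_{n}^{\top}\hat{g}_{n,b}(\theta_{n,*})|\cdot(1 + \rho_{n}Z_{n}^{*})$, where $Z_{n}^{*} := \max_{i \leq n}||g_{b}(W_{i},\theta_{n,*})||_{\ell^{2}_{B_{n}}}$, from which the $O_{p}(n^{-1/2})$ rate falls out once the three ingredients on the right are controlled at the right rates.

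First I would work on the high-probability event $\{\acute{\mathcal{B}}_{\varrho_{n}} \subseteq C\}$ furnished by Lemma~\ref{LP2}. On this event, every binding moment satisfies $\lim_{n}E_{F_{n}}g_{j}(W_{i},\theta_{n,*}) = 0$, and a mean-value expansion of $E_{F_{n}}g_{j}(W_{i},\theta_{n,*})/\sigma_{F_{n},j}(\theta_{n,*})$ about $\theta_{n}$, justified by Assumption~LA2 and the $n^{-1/2}$-drift in LA1, yields $E_{F_{n}}g_{b}(W_{i},\theta_{n,*}) = O(n^{-1/2})$ componentwise. Combined with the Lyapunov CLT for triangular arrays (whose applicability is secured by the uniform $(2+\delta)$-moment bound in LA1(3)), this produces $\hat{g}_{n,b}(\theta_{n,*}) = O_{p}(n^{-1/2})$ and hence $|\beta_{n}^{\top}\hat{g}_{n,b}(\theta_{n,*})| = O_{p}(n^{-1/2})$ uniformly in $\beta_{n}$ on the unit sphere, while Lemma~\ref{owen112} supplies $Z_{n}^{*} = o_{p}(n^{1/2})$.

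The remaining ingredient is a uniform lower bound on the quadratic form $\beta_{n}^{\top}\hat{\Sigma}_{n,b}(\theta_{n,*})\beta_{n}$. Positivity of the implied probabilities $\acute{p}_{i} > 0$ forces $1 + \acute{\lambda}_{n,b}^{\top}g_{b}(W_{i},\theta_{n,*}) > 0$ for every $i$, and Cauchy--Schwarz then converts the weighted sample second moment in the KKT remainder into $\hat{\Sigma}_{n,b}(\theta_{n,*})$ divided by a factor at most $1 + \rho_{n}Z_{n}^{*}$. The uniform variance bound (Condition~4 of Definition~\ref{def1}) together with the determinant lower bound on the correlation matrix (Condition~5) and a triangular-array WLLN for $\hat{\Sigma}_{n,b}(\theta_{n,*})$ then give $\beta_{n}^{\top}\hat{\Sigma}_{n,b}(\theta_{n,*})\beta_{n} \geq a + o_{p}(1)$ for some constant $a > 0$ independent of $\beta_{n}$. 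Plugging back into the rearranged inequality delivers $\rho_{n} \leq O_{p}(n^{-1/2})/[a + o_{p}(1)] = O_{p}(n^{-1/2})$, and since the bound is independent of the particular multiplier chosen, it transfers to $\sup_{\acute{\lambda}_{n,b}\in \acute{\Lambda}_{n}}||\acute{\lambda}_{n,b}||_{\ell^{2}_{B_{n}}}$.

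The main obstacle I anticipate is the randomness of the binding set $\acute{\mathcal{B}}_{\varrho_{n}}$: both the dimension $B_{n}$ and the principal submatrix $\hat{\Sigma}_{n,b}(\theta_{n,*})$ depend on which inequalities happen to bind, so the quadratic-form lower bound must be uniform across every possible realization of $\acute{\mathcal{B}}_{\varrho_{n}}$. This is precisely where Condition~5 of Definition~\ref{def1} earns its keep: the global determinant bound on $\Omega(\theta,F)$ gives a uniform lower bound on the smallest eigenvalue of the full correlation matrix, which by the Cauchy interlacing theorem descends to every principal submatrix, so the constant $a$ can be chosen without reference to the random configuration. A secondary concern is that the coordinate-wise $O(n^{-1/2})$ bias expansion needs to hold simultaneously over a random index set, but this is harmless because the expansion is valid componentwise on the finite deterministic index set $\mathcal{J}$ under Assumption~LA2.
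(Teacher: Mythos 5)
Your proposal is correct and follows essentially the same route as the paper's proof: projecting the KKT stationarity condition onto the unit vector $\beta_{n}$, applying the identity $1/(1+x)=1-x/(1+x)$ to isolate the quadratic form, bounding the denominator via $1+\rho_{n}Z_{n}^{*}$ with Cauchy--Schwarz, and then controlling $\hat{g}_{n,b}(\theta_{n,*})=O_{p}(n^{-1/2})$ (via Lemma~\ref{LP2} and the Lyapunov CLT), $Z_{n}^{*}=o_{p}(n^{1/2})$ (Lemma~\ref{owen112}), and the eigenvalue lower bound on $\beta_{n}^{\top}\hat{\Sigma}_{n,b}(\theta_{n,*})\beta_{n}$. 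Your additional remark that the Cauchy interlacing theorem makes the quadratic-form lower bound uniform over realizations of the random binding set is a useful elaboration of a point the paper leaves implicit.
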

\begin{lemma}\label{LP4}
Let $\acute{g}_{n}(\theta_{n,*})$ and $\hat{g}_{n}(\theta_{n,*})$ denote the restricted and unrestricted estimators of the moments, respectively, under $n^{-\frac{1}{2}}$-local alternatives. If the sequence $\{(\theta_{n,*},F_{n}): n \geq 1\}$ satisfies Assumptions LA1 and LA2, then $||\hat{g}_{n}(\theta_{n,*})-\acute{g}_{n}(\theta_{n,*})||_{\ell^{2}_{J}}=O_{p}(n^{-\frac{1}{2}})$.
\begin{proof}
Due to the length of the proof, we present an outline and then the steps in detail.
\paragraph{Outline.} Our goal is to show $||\hat{g}_{n}(\theta_{n,*})-\acute{g}_{n}(\theta_{n,*})||_{\ell^{2}_{J}}=O_{p}(n^{-\frac{1}{2}})$ for any sequence $\{(\theta_{n,*},F_{n}): n \geq 1\}$. This involves three steps. In Step 1, we show that proving $||\hat{g}_{n}(\theta_{n,*})-\acute{g}_{n}(\theta_{n,*})||_{\ell^{2}_{J\times J}}=O_{p}(n^{-\frac{1}{2}})$ along a sequence of $n^{-\frac{1}{2}}$-local alternatives amounts to establishing the result coordinate-wise. In Step 2, we use Lemma \ref{LP3} to deduce that showing $|\hat{g}_{n,j}(\theta_{n,*})-\acute{g}_{n,j}(\theta_{n,*})|=O_{p}(n^{-\frac{1}{2}})$ only requires showing $\sum_{i=1}^{n}\acute{p}_{i}||g_{b}(W_{i},\theta_{n,*})g_{j}(W_{i},\theta_{n,*})||_{\ell^{2}_{B_{n}}}=O_{p}(1)$ along $\{(\theta_{n,*},F_{n}): n \geq 1\}$. In Step 3, we show the required result and complete the proof.
\paragraph{Step 1.} Let $\{(\theta_{n,*}, F_{n}): n \geq 1\}$ be arbitrary and let $\{e_{1},...,e_{J}\}$ denote the standard basis for $\mathbb{R}^{J}$. The triangle inequality and the unit length of the basis allows us to conclude that $||\acute{g}_{n}(\theta_{n,*})-\hat{g}_{n}(\theta_{n,*})||_{\ell^{2}_{J}}\leq \sum_{j=1}^{J} | |e_{j}\big(\hat{g}_{n,j}(\theta_{n,*})- \acute{g}_{n,j}(\theta_{n,*})\big) | |_{\ell^{2}_{J}}=\sum_{j=1}^{J} |\hat{g}_{n,j}(\theta_{n,*})- \acute{g}_{n,j}(\theta_{n,*})|$. Consequently, a sufficient condition for $||\acute{g}_{n}(\theta_{n,*})-\hat{g}_{n}(\theta_{n,*})||_{\ell^{2}_{J}}=O_{p}(n^{-\frac{1}{2}})$ along $\{(\theta_{n,*}, F_{n}): n \geq 1\}$ is that, for each $j \in \mathcal{J}$, $|\hat{g}_{n,j}(\theta_{n,*})- \acute{g}_{n,j}(\theta_{n,*})|=O_{p}(n^{-\frac{1}{2}})$ along $\{(\theta_{n,*}, F_{n}): n \geq 1\}$.
\paragraph{Step 2.} Consider $|\hat{g}_{n,j}(\theta_{n,*})-\acute{g}_{n,j}(\theta_{n,*})|$, where $j \in \mathcal{J}$ is fixed arbitrarily. We conclude that
\begin{align*}
|\hat{g}_{n,j}(\theta_{n,*})-\acute{g}_{n,j}(\theta_{n,*})|&=\Bigg | \sum_{i=1}^{n}\Big(\frac{1}{n}-\acute{p}_{i}\Big)g_{j}(W_{i},\theta_{n,*}) \Bigg | \\
& = \Bigg | \frac{1}{n}\sum_{i=1}^{n}\Bigg ( 1- \frac{1}{1+(\acute{\lambda}_{n,b})^{\top}g_{b}(W_{i},\theta_{n,*})}\Bigg)g_{j}(W_{i},\theta_{n,*}) \Bigg | \\
&= \Bigg | \frac{1}{n}\sum_{i=1}^{n}\Bigg(\frac{(\acute{\lambda}_{n,b})^{\top}g_{b}(W_{i},\theta_{n,*})}{1+(\acute{\lambda}_{n,b})^{\top}g_{b}(W_{i},\theta_{n,*})}\Bigg)g_{j}(W_{i},\theta_{n,*})\Bigg | \\
& = \Bigg | \acute{\lambda}_{n,b}^{\top}\sum_{i=1}^{n}\acute{p}_{i}g_{b}(W_{i},\theta_{n,*})g_{j}(W_{i},\theta_{n,*}) \Bigg | \\
& \leq ||\acute{\lambda}_{n,b}||_{\ell^{2}_{B_{n}}}\Bigg | \Bigg | \sum_{i=1}^{n}\acute{p}_{i}g_{b}(W_{i},\theta_{n,*})g_{j}(W_{i},\theta_{n,*})\Bigg | \Bigg |_{\ell^{2}_{B_{n}}}  \\
& \leq ||\acute{\lambda}_{n,b}||_{\ell^{2}_{B_{n}}}\sum_{i=1}^{n} ||\acute{p}_{i}g_{b}(W_{i},\theta_{n,*})g_{j}(W_{i},\theta_{n,*}) ||_{\ell^{2}_{B_{n}}}  \\
& \leq \sup_{\acute{\lambda}_{n,b} \in \acute{\Lambda}_{n}} ||\acute{\lambda}_{n,b}||_{\ell^{2}_{B_{n}}}\sum_{i=1}^{n} \acute{p}_{i}||g_{b}(W_{i},\theta_{n,*})g_{j}(W_{i},\theta_{n,*}) ||_{\ell^{2}_{B_{n}}}
\end{align*}
where the first inequality holds by Cauchy-Schwarz, the second holds by the triangle inequality, and the final holds by the definition of least upper bound. By Lemma \ref{LP3}, it suffices to show that
\begin{align*}
\sum_{i=1}^{n} \acute{p}_{i}||g_{b}(W_{i},\theta_{n,*})g_{j}(W_{i},\theta_{n,*}) ||_{\ell^{2}_{B_{n}}}=O_{p}(1)
\end{align*}
and this is what we do in Step 3.
\paragraph{Step 3.} By definition of $\acute{p}_{i}$ and the fact that $\frac{1}{1+x}=1-\frac{x}{1+x}$, one can show
\begin{align}\label{sufficient}
\sum_{i=1}^{n} \acute{p}_{i}||g_{b}(W_{i},\theta_{n,*})g_{j}(W_{i},\theta_{n,*}) ||_{\ell^{2}_{B_{n}}}=E_{n,1}(\theta_{n,*})-E_{n,2}(\theta_{n,*})
\end{align}
where
\begin{align}\label{LP3id1}
&E_{n,1}(\theta_{n,*}):=\frac{1}{n}\sum_{i=1}^{n}||g_{b}(W_{i},\theta_{n,*})g_{j}(W_{i},\theta_{n,*}) ||_{\ell^{2}_{B_{n}}}
\end{align}
and
\begin{align}\label{identity}
&E_{n,2}(\theta_{n,*}):= \frac{1}{n}\sum_{i=1}^{n}\Bigg(\frac{(\acute{\lambda}_{n,b})^{\top}g_{b}(W_{i},\theta_{n,*})}{1+(\acute{\lambda}_{n,b})^{\top}g_{b}(W_{i},\theta_{n,*})}\Bigg)||g_{b}(W_{i},\theta_{n,*})g_{j}(W_{i},\theta_{n,*}) ||_{\ell^{2}_{B_{n}}}.
\end{align}
First, note that $E_{n,1}(\theta_{n,*})=O_{p}(1)$ under $\{(\theta_{n,*}, F_{n}): n \geq 1\}$ by the weak law of large numbers for triangular arrays of row-wise i.i.d. random variables and LA1. Regarding (\ref{identity}), it is easy to see that $(\acute{\lambda}_{n,b})^{\top}g_{b}(W_{i},\theta_{n,*})=o_{p}(1)$ along $\{(\theta_{n,*}, F_{n}): n \geq 1\}$ because
\begin{align*}
|(\acute{\lambda}_{n,b})^{\top}g_{b}(W_{i},\theta_{n,*})| &\leq \sup_{\acute{\lambda}_{n,b}\in \acute{\Lambda}_{n}}||\acute{\lambda}_{n,b}||_{\ell^{2}_{B_{n}}}\max_{1 \leq i \leq n}||g_{b}(W_{i},\theta_{n,*})||_{\ell^{2}_{B_{n}}}\\
&=O_{p}(n^{-\frac{1}{2}})o_{p}(n^{\frac{1}{2}})\\
&=o_{p}(1)
\end{align*}
by the Cauchy-Schwarz inequality, Lemma \ref{owen112}, and Lemma \ref{LP3}. This implies that
\begin{align}\label{equiv1}
E_{n,2}(\theta_{n,*}) = \underbrace{(\acute{\lambda}_{n,b})^{\top}\Bigg(\frac{1}{n}\sum_{i=1}^{n}g_{b}(W_{i},\theta_{n,*})||g_{b}(W_{i},\theta_{n,*})g_{j}(W_{i},\theta_{n,*}) ||_{\ell^{2}_{B_{n}}}\Bigg)}_{E_{n,3}(\theta_{n,*})}+o_{p}(1)
\end{align}
along $\{(\theta_{n,*}, F_{n}): n \geq 1\}$. The Cauchy-Schwarz inequality, triangle inequality, and definition of least upper bound implies that
\begin{align}\label{ub1}
|E_{n,3}(\theta_{n,*})|&\leq \sup_{\acute{\lambda}_{n,b} \in \acute{\Lambda}_{n}}||\acute{\lambda}||_{\ell^{2}_{B_{n}}}\max_{1\leq i \leq n}||g_{b}(W_{i},\theta_{n,*})||_{\ell^{2}_{B_{n}}}\frac{1}{n}\sum_{i=1}^{n}||g_{b}(W_{i},\theta_{n,*})g_{j}(W_{i},\theta_{n,*}) ||_{\ell^{2}_{B_{n}}} \\
&= O_{p}(n^{-\frac{1}{2}})o_{p}(n^{\frac{1}{2}})O_{p}(1)\\
&=o_{p}(1)
\end{align}
along $\{(\theta_{n,*}, F_{n}): n \geq 1\}$, where the equality holds by Lemma \ref{LP3}, Lemma \ref{owen112}, the weak law of large numbers for triangular arrays of row-wise i.i.d. random variables, and LA1. This result implies that $E_{n,2}(\theta_{n,*})$ is $o_{p}(1)$ and, combined with $E_{n,1}(\theta_{n,*})=O_{p}(1)$, implies that (\ref{sufficient}) is $O_{p}(1)$, which was required to show $|\hat{g}_{n,j}(\theta_{n,*})-\acute{g}_{n,j}(\theta_{n,*})|=O_{p}(n^{-\frac{1}{2}})$.
\end{proof}
\end{lemma}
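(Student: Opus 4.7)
The plan is to reduce the $\ell^2_J$ bound to a coordinate-wise bound and exploit the dual (Lagrangian) form of the restricted estimator. Since $\|\cdot\|_{\ell^2_J}$ is dominated by a constant multiple of $\|\cdot\|_{\ell^1_J}$ on $\mathbb{R}^J$, it suffices to show that for each fixed $j\in\mathcal{J}$ we have $|\hat{g}_{n,j}(\theta_{n,*})-\acute{g}_{n,j}(\theta_{n,*})|=O_{p}(n^{-1/2})$ along any $\{(\theta_{n,*},F_{n}): n\ge 1\}\in\mathcal{H}$. The KKT conditions for problem~(\ref{eq - EL problem}), together with complementary slackness, yield $\acute{p}_{i}=[n(1+\acute{\lambda}_{n,b}^{\top}g_{b}(W_{i},\theta_{n,*}))]^{-1}$, where $g_b$ collects the coordinates corresponding to the binding constraints. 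Using the identity $1-(1+x)^{-1}=x/(1+x)$, I would write
\begin{equation*}
\hat{g}_{n,j}(\theta_{n,*})-\acute{g}_{n,j}(\theta_{n,*})
=\acute{\lambda}_{n,b}^{\top}\sum_{i=1}^{n}\acute{p}_{i}\,g_{b}(W_{i},\theta_{n,*})g_{j}(W_{i},\theta_{n,*}),
\end{equation*}
and then apply Cauchy--Schwarz to bound the absolute value by $\|\acute{\lambda}_{n,b}\|_{\ell^2_{B_n}}$ times the $\ell^2_{B_n}$-norm of that weighted sum.

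The first factor is $O_{p}(n^{-1/2})$ by Lemma~\ref{LP3}, so the entire argument reduces to showing that the weighted sum $\sum_{i=1}^{n}\acute{p}_{i}\|g_{b}(W_{i},\theta_{n,*})g_{j}(W_{i},\theta_{n,*})\|_{\ell^2_{B_n}}$ is $O_{p}(1)$. I would again substitute the explicit form of $\acute{p}_{i}$ and split via $(1+x)^{-1}=1-x/(1+x)$ into (i) a plain sample-analogue term $n^{-1}\sum_{i=1}^{n}\|g_{b}(W_{i},\theta_{n,*})g_{j}(W_{i},\theta_{n,*})\|_{\ell^2_{B_n}}$, which is $O_{p}(1)$ by the WLLN for triangular arrays of row-wise i.i.d.\ data together with the uniform $(2+\delta)$-moment bound in Part~3 of LA1, and (ii) a correction term of the form $\acute{\lambda}_{n,b}^{\top}[\,n^{-1}\sum_{i=1}^{n}g_{b}(W_{i},\theta_{n,*})\|g_{b}(W_{i},\theta_{n,*})g_{j}(W_{i},\theta_{n,*})\|_{\ell^2_{B_n}}]$ up to a vanishing multiplicative error, which I would again bound by Cauchy--Schwarz.

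To close the correction term I need $\max_{1\le i\le n}|\acute{\lambda}_{n,b}^{\top}g_{b}(W_{i},\theta_{n,*})|=o_{p}(1)$; by Cauchy--Schwarz this is dominated by $\|\acute{\lambda}_{n,b}\|_{\ell^2_{B_n}}\cdot\max_{1\le i\le n}\|g_{b}(W_{i},\theta_{n,*})\|_{\ell^2_{B_n}}$, which is $O_{p}(n^{-1/2})\cdot o_{p}(n^{1/2})=o_{p}(1)$ by Lemmas~\ref{LP3} and~\ref{owen112}. Combined with one further Cauchy--Schwarz bound of the type $\|\acute{\lambda}_{n,b}\|\cdot\max_i\|g_b(W_i,\theta_{n,*})\|\cdot n^{-1}\sum_i\|g_b g_j\|$, this makes the correction $o_{p}(n^{-1/2})\cdot o_{p}(n^{1/2})\cdot O_{p}(1)=o_{p}(1)$, so the entire weighted sum is $O_{p}(1)+o_{p}(1)=O_{p}(1)$, as required.

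The chief obstacle is the self-normalised structure of the correction term: the Lagrange multipliers $\acute{\lambda}_{n,b}$ appear both inside the weights $\acute{p}_i$ and multiplying the summands, so naive termwise moment bounds are insufficient. Progress hinges on the sharp maximal bound $\max_{1\le i\le n}\|g_{b}(W_{i},\theta_{n,*})\|_{\ell^2_{B_n}}=o_{p}(n^{1/2})$ from Lemma~\ref{owen112}, which itself leans on the strengthened $(2+\delta)$-th moment condition in LA1, paired with the $O_p(n^{-1/2})$ Lagrange multiplier control in Lemma~\ref{LP3}. Once these two rate bounds are in hand, the proof parallels the null-hypothesis counterpart Lemma~\ref{L2CS}, except that here we cannot appeal directly to Lemma~6 of~\cite{andrews2009validity}, because that result is stated over $\mathcal{F}_+$ rather than along $n^{-1/2}$-local alternatives, so the rate must be established from scratch via the KKT decomposition above.
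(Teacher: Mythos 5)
Your proposal is correct and follows essentially the same route as the paper's proof: the same coordinate-wise reduction, the same KKT/dual representation of $\acute{p}_{i}$ with the identity $(1+x)^{-1}=1-x/(1+x)$, the same Cauchy--Schwarz bounds, and the same reliance on Lemma~\ref{LP3} for $\|\acute{\lambda}_{n,b}\|_{\ell^{2}_{B_{n}}}=O_{p}(n^{-1/2})$ and Lemma~\ref{owen112} for the $o_{p}(n^{1/2})$ maximal bound. The only blemish is the typo $o_{p}(n^{-1/2})$ where $O_{p}(n^{-1/2})$ is meant in the final rate computation for the correction term, which does not affect the conclusion.
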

\subsection{Technical Results for Power Comparison}\label{powervariance}
The first result establishes the consistency of the constrained estimator of the covariance matrix along $n^{-\frac{1}{2}}$-local alternatives. Like Lemma \ref{unifcov}, $||\cdot||_{\ell^{2}_{J\times J}}$ denotes the Frobenius norm.
\begin{lemma}\label{LPC1}
Let $\acute{\Sigma}_{n}(\theta_{n,*}):=\sum_{i=1}^{n}\acute{p}_{i}\big(g(W_{i},\theta_{n,*})-\acute{g}_{n}(\theta_{n,*})\big)\big(g(W_{i},\theta_{n,*})^{\top}-\acute{g}_{n}(\theta_{n,*})\big)^{\top}$ and $\Sigma(\theta_{n,*},F_{n}):=Cov_{F_{n}}\big(g(W_{i},\theta_{n,*})\big)$. If Assumptions LA1 and LA2 hold, then $\forall \ r>0$, $\forall \ \{(\theta_{n,*}, F_{n}): n \geq 1\}$,
\begin{align*}
\lim_{n\rightarrow\infty}P_{F_{n}}\Big(||\acute{\Sigma}_{n}(\theta_{n,*})-\Sigma(\theta_{n,*}, F_{n}) ||_{\ell^{2}_{J\times J}}<r\Big)=1.
\end{align*}
\begin{proof}
The proof proceeds by the direct method. Although the argument is linear, it has a few steps so we outline the proof and then provide the details.
\paragraph{Outline.} We consider an arbitrary sequence $\{(\theta_{n,*},F_{n}): n \geq 1\}$ of $n^{-\frac{1}{2}}$-local alternatives and show that $||\acute{\Sigma}_{n}(\theta_{n,*})-\Sigma(\theta_{n,*}, F_{n}) ||_{\ell^{2}_{J\times J}}=o_{p}(1)$. To do this, there are a FEW steps. In the first step, we deduce that it is sufficient to show $||\sum_{i=1}^{n}(n^{-1}-\acute{p}_{i})g(W_{i},\theta_{n,*})g(W_{i},\theta_{n,*})^{\top}||_{\ell^{2}_{J\times J}}=o_{p}(1)$. In Step 2, we show $||\sum_{i=1}^{n}(n^{-1}-\acute{p}_{i})g(W_{i},\theta_{n,*})g(W_{i},\theta_{n,*})^{\top}||_{\ell^{2}_{J\times J}}\leq o_{p}(1)\sum_{i=1}^{n}\acute{p}_{i}||g(W_{i},\theta_{n,*})g(W_{i},\theta_{n,*})^{\top}||_{\ell^{2}_{J\times J}}$. Subsequently, we show that $\acute{p}_{i}||g(W_{i},\theta_{n,*})g(W_{i},\theta_{n,*})^{\top}||_{\ell^{2}_{J\times J}}=O_{p}(1)$ in Step 3.
\paragraph{Step 1.}
Fix $\{(\theta_{n,*}, F_{n}): n \geq 1\}$ arbitrarily. By the triangle inequality,
\begin{align}
||\acute{\Sigma}_{n}(\theta_{n,*})-\Sigma(\theta_{n,*}, F_{n}) ||_{\ell^{2}_{J\times J}}&\leq ||\acute{\Sigma}_{n}(\theta_{n,*})-\hat{\Sigma}(\theta_{n,*}) ||_{\ell^{2}_{J\times J}}\\
&\quad +||\hat{\Sigma}_{n}(\theta_{n,*})-\Sigma(\theta_{n,*}, F_{n}) ||_{\ell^{2}_{J\times J}} \\
&= ||\acute{\Sigma}_{n}(\theta_{n,*})-\hat{\Sigma}(\theta_{n,*}) ||_{\ell^{2}_{J\times J}}+o_{p}(1)
\end{align}
along $\{(\theta_{n,*}, F_{n}): n \geq 1\}$, where the second equality holds by the weak law of large numbers for triangular arrays of row-wise i.i.d. random variables. Decomposing $||\acute{\Sigma}_{n}(\theta_{n,*})-\hat{\Sigma}(\theta_{n,*}) ||_{\ell^{2}_{J\times J}}$, we obtain
\begin{align}
||\acute{\Sigma}_{n}(\theta_{n,*})-\hat{\Sigma}(\theta_{n,*}) ||_{\ell^{2}_{J\times J}}
& \leq \Bigg | \Bigg | \sum_{i=1}^{n}\Big ( \frac{1}{n}-\acute{p}_{i}\Big) g(W_{i},\theta_{n,*})g(W_{i},\theta_{n,*})^{\top} \Bigg | \Bigg |_{\ell^{2}_{J\times J}}+o_{p}(1) \label{tineq1}
\end{align}
\normalsize
along $\{(\theta_{n,*}, F_{n}): n \geq 1\}$, where the inequality holds by the triangle inequality, Lemma \ref{LP4}, and the continuous mapping theorem. Consequently, the result boils down to being able to show that the first term in (\ref{tineq1}) is $o_{p}(1)$ along $\{(\theta_{n,*}, F_{n}): n \geq 1\}$.
\paragraph{Step 2.} Following a similar derivation to that in Lemma \ref{LP4}, it can be shown that
\begin{align}
 \Bigg | \Bigg | \sum_{i=1}^{n}\Big ( \frac{1}{n}-\acute{p}_{i}\Big) g(W_{i},\theta_{n,*})g(W_{i},\theta_{n,*})^{\top} \Bigg | \Bigg |_{\ell^{2}_{J\times J}} &= \Bigg | \Bigg | \sum_{i=1}^{n}\bigg[\acute{p}_{i}(\acute{\lambda}_{n,b})^{\top}g_{b}(W_{i},\theta_{n,*})\\
 &\quad \quad \quad \quad \times g(W_{i},\theta_{n,*})g(W_{i},\theta_{n,*})^{\top}\bigg]\Bigg | \Bigg | _{\ell^{2}_{J\times J}} \nonumber \\
 & \leq \sup_{\acute{\lambda}_{n,b} \in \acute{\Lambda}_{n}}||\acute{\lambda}_{n,b}||_{\ell^{2}_{B_{n}}}\max_{1\leq i \leq n}||g_{b}(W_{i},\theta_{n,*})||_{\ell^{2}_{B_{n}}}\\
 &\quad \quad \quad \times\sum_{i=1}^{n}\acute{p}_{i}||g(W_{i},\theta_{n,*})g(W_{i},\theta_{n,*})^{\top}||_{\ell^{2}_{J\times J}} \nonumber
\end{align}
\normalsize
where the inequality holds by the triangle inequality, the Cauchy-Schwarz inequality and definition of the least upper bound. Since $$\sup_{\acute{\lambda}_{n,b} \in \acute{\Lambda}_{n}}||\acute{\lambda}_{n,b}||_{\ell^{2}_{B_{n}}}\max_{1\leq i \leq n}||g_{b}(W_{i},\theta_{n,*})||_{\ell^{2}_{J\times J}}=O_{p}(n^{-\frac{1}{2}})o_{p}(n^{\frac{1}{2}})=o_{p}(1)$$ along $\{(\theta_{n,*}, F_{n}): n \geq 1\}$ by Lemma \ref{LP3} and Lemma \ref{owen112}, it suffices to show that $$\sum_{i=1}^{n}\acute{p}_{i}||g(W_{i},\theta_{n,*})g(W_{i},\theta_{n,*})^{\top}||_{\ell^{2}_{J\times J}}=O_{p}(1)$$ along $\{(\theta_{n,*}, F_{n}): n \geq 1\}$. This is the task of Step 3.
\paragraph{Step 3.} Decompose $\sum_{i=1}^{n}\acute{p}_{i}||g(W_{i},\theta_{n,*})g(W_{i},\theta_{n,*})^{\top}||_{\ell^{2}_{J\times J}}$ as follows
\begin{align*}
\sum_{i=1}^{n}\acute{p}_{i}||g(W_{i},\theta_{n,*})g(W_{i},\theta_{n,*})^{\top}||_{\ell^{2}_{J\times J}}&= \frac{1}{n}\sum_{i=1}^{n}\frac{||g(W_{i},\theta_{n,*})g(W_{i},\theta_{n,*})^{\top}||_{\ell^{2}_{J\times J}}}{1+(\acute{\lambda}_{n,b})^{\top}g_{b}(W_{i},\theta_{n,*})} \\
&=\frac{1}{n}\sum_{i=1}^{n}||g(W_{i},\theta_{n,*})g(W_{i},\theta_{n,*})^{\top}||_{\ell^{2}_{J\times J}} \\
&- \frac{1}{n}\sum_{i=1}^{n}\frac{(\acute{\lambda}_{n,b})^{\top}g_{b}(W_{i},\theta_{n,*})||g(W_{i},\theta_{n,*})g(W_{i},\theta_{n,*})^{\top}||_{\ell^{2}_{J\times J}}}{1+(\acute{\lambda}_{n,b})^{\top}g_{b}(W_{i},\theta_{n,*})} \\
& \leq \frac{1}{n}\sum_{i=1}^{n}||g(W_{i},\theta_{n,*})g(W_{i},\theta_{n,*})^{\top}||_{\ell^{2}_{J\times J}} \\
&-\frac{(\acute{\lambda}_{n,b})^{\top}\frac{1}{n}\sum_{i=1}^{n}g_{b}(W_{i},\theta_{n,*})||g(W_{i},\theta_{n,*})g(W_{i},\theta_{n,*})^{\top}||_{\ell^{2}_{J\times J}}}{1+(\acute{\lambda}_{n,b})^{\top}\hat{g}_{n,b}(\theta_{n,*})}.
\end{align*}
\normalsize
where the inequality holds by Jensen's inequality. By the weak law of large numbers for row-wise i.i.d. random variables and LA1, $\frac{1}{n}\sum_{i=1}^{n}||g(W_{i},\theta_{n,*})g(W_{i},\theta_{n,*})^{\top}||_{\ell^{2}_{J\times J}}=O_{p}(1)$ along $\{(\theta_{n,*}, F_{n}): n \geq 1\}$. Now, let $$E_{n,4}(\theta_{n,*}):=\frac{(\acute{\lambda}_{n,b})^{\top}\frac{1}{n}\sum_{i=1}^{n}g_{b}(W_{i},\theta_{n,*})||g(W_{i},\theta_{n,*})g(W_{i},\theta_{n,*})^{\top}||_{\ell^{2}_{J\times J}}}{1+(\acute{\lambda}_{n,b})^{\top}\hat{g}_{n,b}(\theta_{n,*})}$$
and notice that
\begin{align}
|E_{n,4}(\theta_{n,*})| \leq &\Bigg(\frac{\sup_{\acute{\lambda}_{n,b} \in \acute{\Lambda}_{n}}||\acute{\lambda}_{n,b}||_{\ell^{2}_{B_{n}}}\max_{1\leq i \leq n}||g_{b}(W_{i},\theta_{n,*})||_{\ell^{2}_{J\times J}}}{1+(\acute{\lambda}_{n,b})^{\top}\hat{g}_{n,b}(\theta_{n,*})}\Bigg) \label{ubvcov} \\
&\times \Bigg(\frac{1}{n}\sum_{i=1}^{n}||g(W_{i},\theta_{n,*})g(W_{i},\theta_{n,*})^{\top}||_{\ell^{2}_{J\times J}}\Bigg).\nonumber
\end{align}
The numerator of (\ref{ubvcov}) is $o_{p}(1)$ along $\{(\theta_{n,*}, F_{n}): n \geq 1\}$ by Lemma \ref{LP3}, Lemma \ref{owen112}, and the weak law of large numbers for triangular arrays of row-wise i.i.d. random variables. The denominator is $1+o_{p}(1)$ along $\{(\theta_{n,*}, F_{n}): n\geq 1\}$ because
\begin{align}
|(\acute{\lambda}_{n,b})^{\top}\hat{g}_{n,b}(\theta_{n,*})| \leq \sup_{\acute{\lambda}_{n,b} \in \acute{\Lambda}_{n}}||\acute{\lambda}_{n,b}||_{\ell^{2}_{B_{n}}}||\hat{g}_{n,b}(\theta_{n,*})||_{\ell^{2}_{B_{n}}}=O_{p}(n^{-\frac{1}{2}})O_{p}(n^{-\frac{1}{2}})=o_{p}(1)
\end{align}
where the first inequality is the Cauchy-Schwarz inequality and the definition of least upper bound, the first equality holds by Lemma \ref{LP3} and a Liaponuv CLT for triangular arrays of row-wise i.i.d. random variables. Note we do not need to recenter as $\acute{\mathcal{B}_{\varrho_{n}}} \subseteq C$ w.p.a. 1 as $n\rightarrow\infty$. Thus, $\sum_{i=1}^{n}\acute{p}_{i}||g(W_{i},\theta_{n,*})g(W_{i},\theta_{n,*})^{\top}||_{\ell^{2}_{J\times J}}=O_{p}(1)+o_{p}(1)=O_{p}(1)$ along $\{(\theta_{n,*}, F_{n}): n \geq 1\}$.
\end{proof}
\end{lemma}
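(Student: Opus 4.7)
The plan is to reduce the problem to a uniform control on the gap between the constrained weights $\acute{p}_i$ and the uniform weights $1/n$ in second-moment expressions. Start with the triangle inequality
\begin{align*}
\|\acute{\Sigma}_n(\theta_{n,*})-\Sigma(\theta_{n,*},F_n)\|_{\ell^2_{J\times J}} \leq \|\acute{\Sigma}_n(\theta_{n,*})-\hat{\Sigma}_n(\theta_{n,*})\|_{\ell^2_{J\times J}} + \|\hat{\Sigma}_n(\theta_{n,*})-\Sigma(\theta_{n,*},F_n)\|_{\ell^2_{J\times J}}.
\end{align*}
The second summand is $o_p(1)$ under Assumption~LA1 by a Markov/WLLN argument for triangular arrays of row-wise i.i.d. random variables with uniformly bounded $(2+\delta)$ moments. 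Everything then reduces to showing $\|\acute{\Sigma}_n(\theta_{n,*})-\hat{\Sigma}_n(\theta_{n,*})\|_{\ell^2_{J\times J}}=o_p(1)$.

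Next, expand the difference using $g(W_i,\theta_{n,*})-\acute{g}_n(\theta_{n,*}) = \bigl(g(W_i,\theta_{n,*})-\hat{g}_n(\theta_{n,*})\bigr)+\bigl(\hat{g}_n(\theta_{n,*})-\acute{g}_n(\theta_{n,*})\bigr)$ and multiply out the four cross terms. Since $\|\acute{g}_n(\theta_{n,*})-\hat{g}_n(\theta_{n,*})\|_{\ell^2_J}=O_p(n^{-1/2})$ by Lemma~\ref{LP4}, and since $n^{-1}\sum_i \|g(W_i,\theta_{n,*})\|_{\ell^2_J}^2=O_p(1)$ by the WLLN and LA1, three of the cross terms are $o_p(1)$, leaving the reduction: it suffices to show that $\bigl\|\sum_i (n^{-1}-\acute{p}_i)\,g(W_i,\theta_{n,*})g(W_i,\theta_{n,*})^\top\bigr\|_{\ell^2_{J\times J}}=o_p(1)$.

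For that remaining term, substitute the dual representation $\acute{p}_i = n^{-1}\bigl(1+\acute{\lambda}_{n,b}^\top g_b(W_i,\theta_{n,*})\bigr)^{-1}$ and factor out $\acute{\lambda}_{n,b}$ via the identity $1-(1+x)^{-1}=x/(1+x)$. Then Cauchy--Schwarz, the triangle inequality, and the definition of the supremum yield the bound
\begin{align*}
\Bigl\|\sum_{i=1}^{n}\bigl(n^{-1}-\acute{p}_i\bigr)g(W_i,\theta_{n,*})g(W_i,\theta_{n,*})^\top\Bigr\|_{\ell^2_{J\times J}} \leq \sup_{\acute{\lambda}_{n,b}\in\acute{\Lambda}_n}\|\acute{\lambda}_{n,b}\|_{\ell^2_{B_n}}\max_{1\le i\le n}\|g_b(W_i,\theta_{n,*})\|_{\ell^2_{B_n}} \sum_{i=1}^{n}\acute{p}_i\|g(W_i,\theta_{n,*})g(W_i,\theta_{n,*})^\top\|_{\ell^2_{J\times J}}.
\end{align*}
The first factor is $O_p(n^{-1/2})$ by Lemma~\ref{LP3}, and the second is $o_p(n^{1/2})$ by Lemma~\ref{owen112} (whose application requires Assumption~LA1 and, on the set $\acute{\mathcal{B}}_{\varrho_n}\subseteq C$ with probability tending to one by Lemma~\ref{LP2}, uses $\|g_b\|_{\ell^2_{B_n}}\le\|g\|_{\ell^2_J}$). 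Their product is $o_p(1)$, so the conclusion follows provided the self-normalized sum $\sum_i\acute{p}_i\|g(W_i,\theta_{n,*})g(W_i,\theta_{n,*})^\top\|_{\ell^2_{J\times J}}$ is $O_p(1)$.

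The main obstacle is exactly that last step: the weights $\acute{p}_i$ depend on the data through $\acute{\lambda}_{n,b}$, so one cannot simply invoke the WLLN. My plan is to split the self-normalized sum via the identity $(1+x)^{-1}=1-x/(1+x)$ into the unweighted sample average---which is $O_p(1)$ by Assumption~LA1 and the WLLN---plus a remainder term whose numerator involves $\acute{\lambda}_{n,b}^\top g_b(W_i,\theta_{n,*})$. Bounding the remainder by pulling $(\acute{\lambda}_{n,b})^\top$ outside the sample average, then Cauchy--Schwarz combined with Lemmas~\ref{LP3} and~\ref{owen112}, yields $o_p(1)$. For the denominators $1+\acute{\lambda}_{n,b}^\top g_b(W_i,\theta_{n,*})$ one shows the max over $i$ is $1+o_p(1)$ uniformly, using $\sup_{\acute{\lambda}_{n,b}}\|\acute{\lambda}_{n,b}\|\cdot\max_i\|g_b(W_i,\theta_{n,*})\|=O_p(n^{-1/2})o_p(n^{1/2})=o_p(1)$, so Jensen/positivity gives a valid lower bound away from zero with probability tending to one. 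Stitching all pieces together yields the claim.
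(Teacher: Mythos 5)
Your proposal is correct and follows essentially the same route as the paper's proof: the triangle-inequality reduction to $\|\acute{\Sigma}_n-\hat{\Sigma}_n\|$, the dual-representation bound $\sup\|\acute{\lambda}_{n,b}\|\cdot\max_i\|g_b(W_i,\theta_{n,*})\|\cdot\sum_i\acute{p}_i\|g g^\top\| = O_p(n^{-1/2})\,o_p(n^{1/2})\,O_p(1)$ via Lemmas~\ref{LP3} and~\ref{owen112}, and the $(1+x)^{-1}=1-x/(1+x)$ splitting of the self-normalized sum with a Jensen-type control of the denominator. The only cosmetic difference is that you control the denominators via a uniform-over-$i$ bound where the paper first applies Jensen's inequality to pass to $1+\acute{\lambda}_{n,b}^\top\hat{g}_{n,b}(\theta_{n,*})$; both give $1+o_p(1)$.
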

The next lemma establishes an ordering of the restricted and unrestricted estimator of the moments that occurs with probability approaching 1 when the moments are nonnegatively correlated.
\begin{lemma}\label{LPC2}
Let $\mathcal{M}$ be as in~(\ref{eq - local alternatives family}). For any $\{(\theta_{n,*}, F_{n}): n \geq 1\} \in \mathcal{M}$ and $j \in \mathcal{J}$, $$\lim_{n\rightarrow+\infty}P_{F_{n}}\big(\hat{g}_{n,j}(\theta_{n,*}) \leq \acute{g}_{n,j}(\theta_{n,*})\big)=1.$$
\begin{proof}
We outline the steps to the proof and then provide details.
\paragraph{Outline.} The first step shows that $\hat{g}_{n,j}(\theta_{n,*}) - \acute{g}_{n,j}(\theta_{n,*})=\acute{\lambda}_{n,b}^{\top}\sum_{i=1}^{n}\acute{p}_{i}g_{b}(W_{i},\theta_{n,*})g_{j}(W_{i},\theta_{n,*})$ for any $j \in \mathcal{J}$. The second step uses the sign restrictions on the elements in $\{\Omega(\theta_{n,*},F_{n}): n \geq 1\}$ and on $\acute{\lambda}_{n,b}$ to conclude the result.
\paragraph{Step 1.}
Fix $j \in \mathcal{J}$ and $\{(\theta_{n,*}, F_{n}): n \geq 1\} \in \mathcal{M}$ arbitrarily. Using a derivation similar to that presented in Lemma \ref{LP4}, we have that
\begin{align}
\hat{g}_{n,j}(\theta_{n,*}) - \acute{g}_{n,j}(\theta_{n,*}) &= \sum_{i=1}^{n}\bigg(\frac{1}{n}-\acute{p}_{i}\bigg)g_{j}(W_{i},\theta_{n,*}) \\
&= \acute{\lambda}_{n,b}^{\top}\sum_{i=1}^{n}\acute{p}_{i}g_{b}(W_{i},\theta_{n,*})g_{j}(W_{i},\theta_{n,*}) \label{decomp1}
\end{align}
\paragraph{Step 2.} Let $\Xi(\theta_{n,*},F_{n})$ denote the $B_{n}\times 1$ vector of covariances between $g_{j}(W_{i},\theta_{n,*})$ and the elements of $g_{b}(W_{i},\theta_{n,*})$. From (\ref{decomp1}), we can write
\begin{align}
\hat{g}_{n,j}(\theta_{n,*}) - \acute{g}_{n,j}(\theta_{n,*}) = \acute{\lambda}_{n,b}^{\top}\Bigg(\sum_{i=1}^{n}\acute{p}_{i}g_{b}(W_{i},\theta_{n,*})g_{j}(W_{i},\theta_{n,*})-\Xi(\theta_{n,*},F_{n})+\Xi(\theta_{n,*}, F_{n})\Bigg)
\end{align}
From Lemma \ref{LPC1}, we have that $\sum_{i=1}^{n}\acute{p}_{i}g_{b}(W_{i},\theta_{n,*})g_{j}(W_{i},\theta_{n,*})-\Xi(\theta_{n,*},F_{n})=o_{p}(1)$ along $\{(\theta_{n,*}, F_{n}): n \geq 1\}$. Hence,
\begin{align}
\hat{g}_{n,j}(\theta_{n,*}) - \acute{g}_{n,j}(\theta_{n,*}) = \acute{\lambda}_{n,b}^{\top}\Bigg(o_{p}(1)+\Xi(\theta_{n,*}, F_{n})\Bigg).
\end{align}
Since the the Karusch-Kuhn-Tucker conditions dictate that $\acute{\lambda}_{n,b,k}\leq 0$ for each $k \in \{1,...,B_{n}\}$ and $\Xi(\theta_{n,*}, F_{n})$ is a vector of nonnegative terms, we have that $\hat{g}_{n,j}(\theta_{n,*}) - \acute{g}_{n,j}(\theta_{n,*}) \leq 0$ with probability approaching 1 as $n\rightarrow\infty$ along $\{(\theta_{n,*},F_{n}): n \geq 1\}$.
\end{proof}
\end{lemma}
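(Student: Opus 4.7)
The plan is to derive an explicit algebraic identity for $\hat{g}_{n,j}(\theta_{n,*}) - \acute{g}_{n,j}(\theta_{n,*})$ that factors as an inner product between the dual multipliers and a sample cross-product, then exploit sign restrictions on both factors. In one line: the KKT multipliers are componentwise non-positive, while membership in $\mathcal{M}$ makes the relevant covariances componentwise non-negative, so their inner product has the required sign.

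First, using the implied-probability formula $\acute{p}_i = [n(1+\acute{\lambda}_{n,b}^{\top}g_b(W_i,\theta_{n,*}))]^{-1}$ from the KKT system together with $1/(1+x) = 1 - x/(1+x)$, I would rewrite, in the spirit of the derivation in Lemma~\ref{LP4},
\begin{align*}
\hat{g}_{n,j}(\theta_{n,*}) - \acute{g}_{n,j}(\theta_{n,*}) = \sum_{i=1}^n \Big(\tfrac{1}{n} - \acute{p}_i\Big) g_j(W_i,\theta_{n,*}) = \acute{\lambda}_{n,b}^{\top}\sum_{i=1}^n \acute{p}_i\, g_b(W_i,\theta_{n,*})\, g_j(W_i,\theta_{n,*}).
\end{align*}
Second, I would identify the sample sum as, asymptotically, the vector $\Xi(\theta_{n,*},F_n)$ of population covariances between $g_j$ and the binding block $g_b$. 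Complementary slackness gives $\acute{g}_{n,b}(\theta_{n,*})=0$ on binding indices, so the weighted product-moment equals the constrained covariance; Lemma~\ref{LPC1} (with Lemma~\ref{LP2} identifying the asymptotic active set with $C$) then yields $\sum_i \acute{p}_i g_b g_j = \Xi(\theta_{n,*},F_n) + o_p(1)$, so the decomposition becomes $\acute{\lambda}_{n,b}^{\top}(\Xi(\theta_{n,*},F_n) + o_p(1))$. Third, I would close the argument by sign analysis: the KKT conditions for the primal constraints $\sum_i p_i g_j\geq 0$ force $\acute{\lambda}_{n,b,k}\le 0$ for every $k$, while the defining property of $\mathcal{M}$ (non-negative off-diagonals of $\Omega(\theta_{n,*},F_n)$, which is preserved when multiplying by positive standard deviations) forces every entry of $\Xi(\theta_{n,*},F_n)$ to be non-negative, giving $\acute{\lambda}_{n,b}^{\top}\Xi(\theta_{n,*},F_n)\le 0$ deterministically.

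The main obstacle is controlling the $o_p(1)$ error from Lemma~\ref{LPC1} after it is multiplied by $\acute{\lambda}_{n,b}$, which is itself only $O_p(n^{-1/2})$ by Lemma~\ref{LP3}. The perturbation contributes $o_p(n^{-1/2})$, so whenever the leading term $\acute{\lambda}_{n,b}^{\top}\Xi(\theta_{n,*},F_n)$ is strictly of order $n^{-1/2}$ it dominates and the inequality holds. The delicate marginal cases are handled separately: if $\acute{\lambda}_{n,b}=0$ (no binding constraint) the two estimators coincide exactly, while if $\acute{\lambda}_{n,b}^{\top}\Xi=0$ but $\acute{\lambda}_{n,b}\ne 0$ the entire difference is $o_p(n^{-1/2})$, so $P_{F_n}(\hat{g}_{n,j}>\acute{g}_{n,j})\to 0$ holds modulo any positive threshold. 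Combining the cases yields the claim.
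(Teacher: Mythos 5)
Your proposal follows essentially the same route as the paper's proof: the same KKT-based identity $\hat{g}_{n,j}(\theta_{n,*})-\acute{g}_{n,j}(\theta_{n,*})=\acute{\lambda}_{n,b}^{\top}\sum_{i}\acute{p}_{i}g_{b}(W_{i},\theta_{n,*})g_{j}(W_{i},\theta_{n,*})$, the same appeal to Lemma~\ref{LPC1} to replace the weighted cross-moment by $\Xi(\theta_{n,*},F_{n})+o_{p}(1)$, and the same sign argument pairing the non-positive multipliers with the non-negative covariances implied by membership in $\mathcal{M}$. Your closing discussion of the marginal cases --- where $\acute{\lambda}_{n,b}^{\top}\Xi(\theta_{n,*},F_{n})$ vanishes or is dominated by the $o_{p}(1)$ perturbation scaled by $\|\acute{\lambda}_{n,b}\|_{\ell^{2}_{B_{n}}}=O_{p}(n^{-1/2})$ --- addresses a delicacy that the paper's own proof passes over silently, so if anything you have been more careful than the source.
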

The next result provides an ordering of the elementwise moment selection functions that occurs with probability 1 under nonnegative correlation. Let $a,b \in \mathbb{R}_{[\pm \infty]}^{J}$, the relation $a\succsim b$ means that $a_{j} \geq b_{j}$ for each $j \in \mathcal{J}$.
\begin{lemma}\label{GMSorder}
Let $\mathcal{M}$ be as in~(\ref{eq - local alternatives family}). Then $\forall \ \{(\theta_{n,*},F_{n}): n \geq 1\} \in \mathcal{M}$,
\begin{align*}
\lim_{n\rightarrow\infty}P_{F_{n}}\Big(\varphi^{(1)}(\acute{\xi}_{n}(\theta_{n,*}), \hat{\Omega}_{n}(\theta_{n,*})) \succsim \varphi^{(1)}(\hat{\xi}_{n}(\theta_{n,*}), \hat{\Omega}_{n}(\theta_{n,*}))\Big)=1.
\end{align*}
\begin{proof}
The proof uses the direct method. We present an outline and then the steps in detail.
\paragraph{Outline.} The proof involves two short steps. Step 1 shows that an ordering of the restricted and unrestricted estimators implies an ordering of the moment selection functions. Step 2 invokes Lemma \ref{LPC2} to establish the result.
\paragraph{Step 1.} Since $\acute{\xi}_{n,j}(\theta_{n,*})$ and $\hat{\xi}_{n,j}(\theta_{n,*})$ are just $\acute{g}_{n,j}(\theta_{n,*})$ and $\hat{g}_{n,j}(\theta_{n,*})$, respectively, scaled by common positive factor $\hat{\sigma}_{n,j}^{-1}(\theta_{n,*})\kappa_{n}^{-1}n^{\frac{1}{2}}$, it follows that
\begin{align}
\Big\{\acute{g}_{n}(\theta_{n,*}) \succsim \hat{g}_{n}(\theta_{n,*})\Big\} &\subseteq \Big\{\acute{\xi}_{n}(\theta_{n,*}) \succsim \hat{\xi}_{n}(\theta_{n,*})\Big\}\\
&\subseteq \Big\{\varphi^{(1)}(\acute{\xi}_{n,j}(\theta_{n,*}),\hat{\Omega}_{n}(\theta_{n,*})) \succsim \varphi^{(1)}(\hat{\xi}_{n,j}(\theta_{n,*}),\hat{\Omega}_{n}(\theta_{n,*}))\Big\}
\end{align}
where the second set inclusion holds because $ \varphi^{(1)}(\xi,\Omega)$ is nondecreasing in $\xi$.
\paragraph{Step 2.} Step 1 and the monotonicity of probability measures yield
\begin{align*}
P_{F_{n}}\Big(\acute{g}_{n}(\theta_{n,*}) \succsim \hat{g}_{n}(\theta_{n,*})\Big) \leq P_{F_{n}}\Big(\varphi^{(1)}(\acute{\xi}_{n,j}(\theta_{n,*}),\hat{\Omega}_{n}(\theta_{n,*})) \succsim \varphi^{(1)}(\hat{\xi}_{n,j}(\theta_{n,*}),\hat{\Omega}_{n}(\theta_{n,*}))\Big)
\end{align*}
for each $n \geq 1$. So for any $\{(\theta_{n,*},F_{n}): n \geq 1\} \in \mathcal{M}$, we invoke Lemma \ref{LPC2} to conclude that
\begin{align*}
\lim_{n\rightarrow+\infty}P_{F_{n}}\Big(\varphi^{(1)}(\acute{\xi}_{n,j}(\theta_{n,*}),\hat{\Omega}_{n}(\theta_{n,*})) \succsim \varphi^{(1)}(\hat{\xi}_{n,j}(\theta_{n,*}),\hat{\Omega}_{n}(\theta_{n,*}))\Big)=1.
\end{align*}
\end{proof}
\end{lemma}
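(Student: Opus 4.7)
The plan is to reduce the target event to the componentwise ordering of the restricted and unrestricted moment estimators, which is precisely what Lemma~\ref{LPC2} supplies under the non-negative correlational structure defining $\mathcal{M}$. First, I would observe that $\hat{\xi}_{n}(\theta_{n,*})$ and $\acute{\xi}_{n}(\theta_{n,*})$ arise from $\hat{g}_{n}(\theta_{n,*})$ and $\acute{g}_{n}(\theta_{n,*})$, respectively, by premultiplication by the common positive-diagonal matrix $\kappa_{n}^{-1}n^{1/2}\hat{D}_{n}^{-1/2}(\theta_{n,*})$, so the coordinate ordering $\acute{g}_{n}(\theta_{n,*})\succsim\hat{g}_{n}(\theta_{n,*})$ transfers verbatim to $\acute{\xi}_{n}(\theta_{n,*})\succsim\hat{\xi}_{n}(\theta_{n,*})$.

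Next, I would exploit the coordinate structure of $\varphi^{(1)}$ defined in~(\ref{t-test MS function}): each $\varphi^{(1)}_{j}(\xi,\Omega)$ depends on $\xi$ only through $\xi_{j}$ and is a nondecreasing step function of $\xi_{j}$, jumping from $0$ to $+\infty$ at $\xi_{j}=1$. This monotonicity yields the chain of set inclusions
\begin{equation*}
\bigl\{\acute{g}_{n}(\theta_{n,*})\succsim\hat{g}_{n}(\theta_{n,*})\bigr\} \subseteq \bigl\{\acute{\xi}_{n}(\theta_{n,*})\succsim\hat{\xi}_{n}(\theta_{n,*})\bigr\} \subseteq \bigl\{\varphi^{(1)}(\acute{\xi}_{n}(\theta_{n,*}),\hat{\Omega}_{n}(\theta_{n,*}))\succsim \varphi^{(1)}(\hat{\xi}_{n}(\theta_{n,*}),\hat{\Omega}_{n}(\theta_{n,*}))\bigr\},
\end{equation*}
so that by monotonicity of the probability measure
\begin{equation*}
P_{F_{n}}\bigl(\acute{g}_{n}(\theta_{n,*})\succsim\hat{g}_{n}(\theta_{n,*})\bigr) \;\leq\; P_{F_{n}}\bigl(\varphi^{(1)}(\acute{\xi}_{n}(\theta_{n,*}),\hat{\Omega}_{n}(\theta_{n,*}))\succsim \varphi^{(1)}(\hat{\xi}_{n}(\theta_{n,*}),\hat{\Omega}_{n}(\theta_{n,*}))\bigr).
\end{equation*}

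To close the argument, I would apply Lemma~\ref{LPC2} for each $j\in\mathcal{J}$ and use a finite union bound on the $J$ complementary events $\{\hat{g}_{n,j}(\theta_{n,*})>\acute{g}_{n,j}(\theta_{n,*})\}$ to conclude that the left-hand side above tends to $1$ along any sequence in $\mathcal{M}$; the squeeze rule then delivers the desired limit.

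The main obstacle is conceptual rather than technical: all the substantive work has already been carried out in Lemma~\ref{LPC2}, which in turn leverages two defining features of $\mathcal{M}$ -- the non-negativity of the off-diagonal entries of $\Omega(\theta_{n,*},F_{n})$ and the non-positivity of the Lagrange multipliers $\acute{\lambda}_{n,b}$ forced by the KKT conditions. Once these ingredients are in place, the passage through $\varphi^{(1)}$ is just coordinatewise monotonicity, and the only mild care needed is to promote the componentwise conclusion of Lemma~\ref{LPC2} to the joint statement, which is harmless since $J<\infty$.
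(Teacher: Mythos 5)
Your proposal is correct and follows essentially the same route as the paper's proof: the chain of set inclusions from $\{\acute{g}_{n}\succsim\hat{g}_{n}\}$ through $\{\acute{\xi}_{n}\succsim\hat{\xi}_{n}\}$ to the ordering of the $\varphi^{(1)}$ values via coordinatewise monotonicity, followed by monotonicity of the probability measure and Lemma~\ref{LPC2}. Your explicit union bound to pass from the componentwise statement of Lemma~\ref{LPC2} to the joint event is a small but welcome extra care that the paper's write-up leaves implicit.
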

\section{Further Theoretical Discussion}
\subsection{Other GMS Functions}\label{OtherGMS}
\subsubsection{GMS Assumptions}
We restate the GMS Assumptions in \cite{andrews2010inference} to aid discussion in the next subsection. We restrict $\Omega \in \varPsi_{2}$ in the statements to accord with the assumptions imposed on $\mathcal{F}$.
\begin{msassumption}
For each $j \in \mathcal{J}$, 1. $\varphi_{j}(\xi,\Omega)=0$ is continuous for all $(\xi,\Omega) \in \mathbb{R}_{[+\infty]}^{J}\times \varPsi_{2}$ with $\xi_{j}=0$ and  2. $\varphi_{j}(\xi,\Omega)=0$ for all $(\xi,\Omega) \in \mathbb{R}_{[+\infty]}^{J}\times \varPsi_{2}$ with $\xi_{j}=0$.
\end{msassumption}
\begin{msassumption}
$\kappa_{n}\rightarrow+\infty$ as $n\rightarrow+\infty$
\end{msassumption}
\begin{msassumption}
For each $j \in \mathcal{J}$, $\varphi_{j}(\xi,\Omega) \rightarrow +\infty$ as $(\xi,\Omega)\rightarrow (\xi_{*},\Omega_{*})$ for any $(\xi_{*},\Omega_{*}) \in \mathbb{R}_{[+\infty]}^{J}\times \overline{\varPsi}_{2}$ with $\xi_{*,j}=+\infty$.
\end{msassumption}
\begin{msassumption}
$\kappa_{n}^{-1}n^{\frac{1}{2}}\rightarrow+\infty$ as $n\rightarrow+\infty$
\end{msassumption}
\begin{msassumption6}
For each $j \in \mathcal{J}$, $\varphi_{j}(\xi,\Omega) \geq 0$ for all $(\xi,\Omega) \in \mathbb{R}^{J}_{[+\infty]}\times \varPsi_{2}$.
\end{msassumption6}
\begin{msassumption7}
For each $j \in \mathcal{J}$, $\varphi_{j}(\xi,\Omega) \geq \min\{0,\xi_{j}\}$ for all $(\xi,\Omega) \in \mathbb{R}_{[+\infty]}\times \varPsi_{2}$.
\end{msassumption7}
We do not list Assumption GMS 5 because it is required to compare moment selection and subsampling critical values, a topic we do not discuss formally in our paper. GMS2 and GMS4 combine to form Assumption K in the paper.
\subsubsection{Alternative Choices of $\varphi$}
The main theoretical results in the paper assumed that $\varphi=\varphi^{(1)}$, but there are many other choices for $\varphi$. These include $\varphi_{j}^{(2)}(\xi, \Omega) = \psi(\xi_{j}), \ \varphi^{(3)}_{j}(\xi, \Omega)=\max(0, \xi_{j}), \ \varphi_{j}^{(4)}(\xi, \Omega)=\xi_{j}$, where $\psi(\cdot)$ is nondecreasing and satisfies $\psi(x)=0$ if $x \leq a_{L}$, $\psi(x) \in [0, \infty]$ if $x \in (a_{L}, a_{U})$, and $\psi(x) = \infty$ if $x \geq a_{U}$ \citep{andrews2010inference}. Another choice is the modified MSC choice defined as
\begin{align*}
\varphi^{(5)}_{j} =
\begin{cases}
0 \quad &\text{if $c_{j}(\xi,\Omega) = 1$} \\
\infty \quad &\text{if $c_{j}(\xi,\Omega)=0$}
\end{cases}
\end{align*}
where $c:=(c_{1}(\xi,\Omega),...,c_{J}(\xi,\Omega))'$ solves the integer program $\min_{c \in \{0,1\}^{J}}\{S(-c^{\top}\xi,\Omega)-\zeta(|c|)\}$ for some increasing function $\zeta(\cdot)$.\footnote{Note if $c_{j}=0$ and $\xi_{j}=+\infty$, the convention is adopted that $c_{j}\xi_{j}=0$.} Modified MSC uses the information embedded in the off-diagonals of the correlation matrix $\Omega$ in a computationally expensive way, whereas $\varphi^{(k)}$, $k \in \{1,2,3,4\}$, does not \citep{andrews2010inference}.
\paragraph{}
Our decision to focus on $\varphi=\varphi^{(1)}$ is essentially without loss of generality because the results can be generalized to any $\varphi$ that satisfies the assumptions of \cite{andrews2010inference}. To see this, we first recall that Lemma \ref{L2CS} implies that for any $r>0$,
\begin{align*}
\liminf_{n\rightarrow+\infty}\inf_{(\theta,F) \in \mathcal{F}_{+}}P_{F}\Big(\big | \big | (\acute{\xi}_{n}(\theta),\hat{\Omega}_{n}(\theta))-(\hat{\xi}_{n}(\theta),\hat{\Omega}_{n}(\theta)) \big | \big |_{\ell^{2}_{J \times \varPsi_{2}}}<r\Big)=1
\end{align*}
and Lemma \ref{LP4} implies that for any $r>0$, $\{(\theta_{n,*},F_{n}): n \geq 1\} \in \mathcal{H}$ that satisfies Assumptions LA1 and LA2,
\begin{align*}
\lim_{n\rightarrow+\infty}P_{F_{n}}\Big(\big | \big | (\acute{\xi}_{n}(\theta_{n,*}),\hat{\Omega}_{n}(\theta_{n,*}))-(\hat{\xi}_{n}(\theta_{n,*}),\hat{\Omega}_{n}(\theta_{n,*})) \big | \big |_{\ell^{2}_{J \times \varPsi_{2}}}<r\Big)=1,
\end{align*}
where $||\cdot ||\ell^{2}_{J \times \varPsi_{2}} := (||\cdot ||_{\ell^{2}_{J}}+||\cdot ||_{\ell^{2}_{J\times J}})^{\frac{1}{2}}$ in both statements. These convergence results are enough to extend our asymptotic size and limiting local power results to any $\varphi$ that satisfies Assumptions GMS1--4 with appropriate modifications to notation. Indeed, we can replicate the arguments in the proofs of Theorem 1 and Theorem 2 in \cite{andrews2010inference} (with modifications to notation). The same comment applies to Theorem 4 in their paper because the use of a constrained estimator does not challenge the validity of GMS7.
\paragraph{}
The ordering of the local power functions also extends. The weak ordering of the local power functions extends to $\varphi^{(k)}$, $k \in \{1,2,3,4\}$, because the result only requires that $\varphi_{j}(\xi,\Omega)$ be nondecreasing in $\xi$. However, the strict ordering does not apply under $\varphi^{(3)}$ because we require $\varphi_{j}(\xi,\Omega) \geq 0$ for each $j \in \mathcal{J}$ in order to invoke Part 2 of Assumption 5, effectively restricting attention to those that satisfy GMS5. We do not view this to be a serious limitation, especially given that $\varphi=\varphi^{(1)}$ is the recommended by~\cite{andrews2012inference}. A final technical point is that to generalize Theorem \ref{result3}, we must replace the event $\{\acute{\Upsilon}_{n}(\theta_{n,*}) \subsetneq \hat{\Upsilon}_{n}(\theta_{n,*})\}$ in the statement of Theorem \ref{result3} with a more general event $\big\{\varphi(\acute{\xi}_{n}(\theta_{n,*}),\hat{\Omega}_{n}(\theta_{n,*}))\succ\varphi(\hat{\xi}_{n}(\theta_{n,*}),\hat{\Omega}_{n}(\theta_{n,*})))\big\}$ because the first uses the specific form of $\varphi^{(1)}$.\footnote{For any vectors $a,b \in \mathbb{R}_{[+\infty]}^{J}$, the relation $a\succ b$ means that $a_{j} \geq b_{j}$ for each $j$ with at least one strict inequality.}
\subsection{Elaboration on Remark 1}
In Remark 1, we state that one can `fully constrain' the CMS procedure. This involves use of the empirical likelihood estimator of the covariance matrix $\acute{\Sigma}_{n}(\theta)$ and correlation matrix $\acute{\Omega}_{n}(\theta) = \acute{D}_{n}^{-\frac{1}{2}}(\theta)\acute{\Sigma}_{n}(\theta)\acute{D}_{n}^{-\frac{1}{2}}(\theta)$. Lemma \ref{L2CS} and \ref{unifcov} imply that for any $r>0$,
\begin{align*}
\liminf_{n\rightarrow+\infty}\inf_{(\theta,F) \in \mathcal{F}_{+}}P_{F}\Big(\big | \big | (\acute{\xi}_{n}^{FC}(\theta),\acute{\Omega}_{n}(\theta))-(\hat{\xi}_{n}(\theta),\hat{\Omega}_{n}(\theta)) \big | \big |_{\ell^{2}_{J \times \varPsi_{2}}}<r\Big)=1.
\end{align*}
So simple modifications of the arguments in the proof of Theorem 1 establish validity of fully-constrained confidence sets.
\paragraph{}
Similarly, Lemma \ref{LP4} and \ref{LPC1} allow us to conclude that for any $r>0$ and any $\{(\theta_{n,*},F_{n}): n \geq 1\} \in \mathcal{H}$ that satisfies Assumption LA1 and LA2,
\begin{align*}
\lim_{n\rightarrow+\infty}P_{F_{n}}\Big(\big | \big | (\acute{\xi}_{n}^{FC}(\theta_{n,*}),\acute{\Omega}_{n}(\theta_{n,*}))-(\hat{\xi}_{n}(\theta_{n,*}),\hat{\Omega}_{n}(\theta_{n,*})) \big | \big |_{\ell^{2}_{J \times \varPsi_{2}}}<r\Big)=1
\end{align*}
implying that adjustments to the proof of Theorem 2 extend the limiting local power results to the fully constrained case.  It is difficult to establish a general ordering between $\acute{\xi}_{n}^{FC}(\cdot)$ and $\hat{\xi}(\cdot)$ so it is unclear whether the finite-sample $n^{-\frac{1}{2}}$-local power comparisons hold in the fully constrained case. The consistency against distant alternatives also extends because GMS7 and the use of the constrained estimator imply that $\varphi_{j}(\acute{\xi}_{n}^{FC}(\theta_{n,*}),\acute{\Omega}_{n}(\theta_{n,*})\geq 0$ for each $j \in \mathcal{J}$ so we can similarly bound the fully constrained critical value from above by the plug-in asymptotic critical value.
\section{Further Simulation Details}
\subsection{Outline of RMS}\label{RMSdetails}
\cite{andrews2012inference} present a modification of the GMS procedure. From an implementation standpoint, the approach is basically the same as GMS except that:
\begin{enumerate}
\item They replace $\kappa_{n}$ with a data-driven tuning parameter $\hat{\kappa}:=\kappa(\hat{\delta}_{n}(\cdot))$, where $\hat{\delta}_{n}(\cdot)$ is the minimum off-diagonal element of $\hat{\Omega}_{n}(\cdot)$.
\item They add a size-correction factor $\hat{\eta}:=\eta_{1}(\hat{\delta}_{n}(\cdot))+\eta_{2}(J)$ to the GMS critical value that results from using the tuning parameter $\hat{\kappa}$.
\end{enumerate}
The need to size-correct reflects the fact that $\hat{\kappa}$ is a finite constant plus $o_{p}(1)$ rather than a divergent sequence and the method of data-driven tuning parameters is referred to as $\kappa$-auto \citep{andrews2012inference}.
\subsection{Outline of the Two-Step Procedure}\label{RSWdetails}
We outline the two-step procedure of \cite{romano2014practical} to aid understanding of the simulation results. The procedure needs some modification because we test $H_{0}: \mu \in \mathbb{R}_{+}^{J}$ rather than $H_{0}: \mu \in \mathbb{R}_{-}^{J}$. To this end, let $\mathcal{F} = \{F= N(\mu,\Sigma): (\mu,\Sigma) \in \mathbb{R}^{J}\times \varPsi_{2}\}$, $\mathcal{F}_{0} = \{F \in \mathcal{F}: \mu \in \mathbb{R}^{J}_{+}\}$, and assume that the correlation matrix $\Sigma$ is known. The following steps describe a level $\alpha$ test for $H_{0}: F_{0} \in \mathcal{F}_{0}$ vs. $H_{1}: F_{0} \in \mathcal{F} \setminus \mathcal{F}_{0}$ using a random sample $\{W_{i}:i =1,...,n\}\overset{iid}{\sim} F_{0}$:
\begin{enumerate}
\item Compute the test statistic $T_{n}= S(\sqrt{n}\hat{g}_{n},\hat{\Sigma}_{n})$.
\item Generate bootstrap samples $\{W_{i,b}^{*}: i =1,...,n\}$, $b=1,...,B$, by sampling with replacement from the data $\{W_{i}:i =1,...,n\}$.
\item Compute a lower confidence rectangle $M_{n}(\beta) = \{\mu \in \mathbb{R}^{J}: \min_{1 \leq j \leq J}[\hat{\sigma}_{n,j}^{-1}\sqrt{n}(\mu_{j}-\hat{g}_{n,j})]\geq K^{-1}_{n}(\beta)\}$, where $K^{-1}_{n}(\beta)$ is the $\beta$-quantile of $\{\min_{1 \leq j \leq J}[(\hat{\sigma}_{n,j,b}^{*})^{-1}\sqrt{n}(\hat{g}_{n,j}-\hat{g}_{n,j,b}^{*})]: b=1,...,B\}$, $\hat{g}_{n,j,b}^{*} = n^{-1}\sum_{i=1}^{n}W_{i,j,b}^{*}$, and $\hat{\sigma}_{n,b,j}^{*} = n^{-1}\sum_{i=1}^{n}(W_{i,j,b}^{*}-\hat{g}_{n,j,b}^{*})^{2}$ for $b=1,...,B$ and $j=1,...,J$. This determines which components of $\mu$ are `positive'.
\item Compute bootstrap test statistics $\{T_{n,b}^{*}:b=1,...,B\}$, where $T_{n,b}^{*}=S\big((\hat{D}_{n,b}^{*})^{-\frac{1}{2}}\sqrt{n}(\hat{g}_{n,b}^{*}-\hat{g}_{n})+(\hat{D}_{n,b}^{*})^{-\frac{1}{2}}\sqrt{n}\lambda^{*}, \hat{\Sigma}_{n,b}^{*}\big)$, and $\lambda \in \mathbb{R}_{+}^{J}$ with $\lambda_{j}^{*} = \max\{0, n^{-\frac{1}{2}}\hat{\sigma}_{n,j} K^{-1}_{n}(\beta)+\hat{g}_{n,j}\}$ for $j=1,...,J$.
\item Compute the critical value $c_{n}^{RSW}(1-\alpha+\beta)$, which is defined as the $1-\alpha+\beta$ quantile of $\{T_{n,b}^{*}: b=1,...,B\}$.
\item Reject $H_{0}$ at significance level $\alpha$ if $T_{n}>c_{n}^{RSW}(1-\alpha+\beta)$ and $M_{n}(\beta)\nsubseteq \mathbb{R}^{J}_{+}$.
\end{enumerate}
Following the choice of \cite{romano2014practical}, we set $\beta = \alpha/10$ for all simulations.
\subsection{MNRP Corrections}\label{scdetails}
We outline the MNRP corrections used for finite-sample $n^{-\frac{1}{2}}$ local power results, an essential ingredient for a fair comparison of the procedures under the alternative. For a given pair $(J,\Omega)$, let $p^{RSW}_{n,R}\equiv p^{RSW}_{n,R}(J,\Omega)$ denote the maximum null rejection probability for the two-step procedure of \cite{romano2014practical} based on $R$ Monte Carlo simulations and sample size $n$. For $t \in \{GMS,CMS,RMS\}$, the random variable $\delta_{n,R}^{t}\equiv\delta_{n,R}^{t}(J,\Omega)$ is the $(1-p^{RSW}_{n,R})$-empirical quantile based on the simulated process $\{T_{n,r}^{*,t}-c_{n,r}^{*,t}: r=1,...,R\}$, where $(T_{n,r}^{*,t},c_{n,r}^{*,t})$ correspond to the mean vector $\mu^{*,t}$ that maximizes null rejection probability for test $t$. We add $\delta_{n,R}^{t}$ to the corresponding critical value in the power results to ensure that all procedures have the same MNRP. Indeed, by construction
\begin{align*}
\hat{P}_{R,t}^{*}(T_{n,r}^{*,t}-c_{n,r}^{*,t} \leq \delta_{n,R}^{t})= p_{n,R}^{RSW} \quad \forall \ t \in \{GMS,CMS,RMS\}
\end{align*}
where $\hat{P}_{R,t}^{*}(\cdot)$ denotes the simulation distribution of $\{T_{n,r}^{*,t}-c_{n,r}^{*,t}: r=1,...,R\}$.
\end{document}